\documentclass{article}

\PassOptionsToPackage{numbers, compress}{natbib}
\usepackage[preprint]{neurips_2026}

\usepackage[utf8]{inputenc} 
\usepackage[T1]{fontenc}    
\usepackage{hyperref}       
\usepackage{url}            
\usepackage{booktabs}       
\usepackage{amsfonts}       
\usepackage{nicefrac}       
\usepackage{microtype}      
\usepackage{xcolor}         
\usepackage{amsmath}
\usepackage{bbm}
\usepackage{amsthm}
\usepackage{multirow}
\usepackage{graphicx}
\usepackage{subcaption}
\usepackage[ruled,vlined,linesnumbered]{algorithm2e} 
\usepackage{array}
\usepackage{svg}
\usepackage{wrapfig}
\usepackage[utf8]{inputenc}
\usepackage{xcolor}
\usepackage{array}
\usepackage{colortbl}   

\newtheorem{theorem}{Theorem}[section]
\newtheorem{proposition}[theorem]{Proposition}
\newtheorem{lemma}[theorem]{Lemma}
\newtheorem{corollary}[theorem]{Corollary}
\newtheorem{remark}[theorem]{Remark}

\newtheorem{assumption}{Assumption}

\DeclareMathOperator*{\argmax}{arg\,max}

\title{Steer-to-Detect: Probing Hidden Representations for Detection of LLM-Generated Texts}

\author{%
  Luxu Liang \\
  Tsinghua University\\
  \texttt{liang-lx25@mails.tsinghua.edu.cn} \\
  \And
  Xiang Li \\
  University of Pennsylvania \\
  \texttt{lx10077@upenn.edu} \\
}

\begin{document}
\maketitle

\begin{abstract}
The rapid advancement of large language models (LLMs) has made machine-generated text increasingly difficult to distinguish from human-written text. While recent studies explore leveraging internal representations of language models to uncover deeper detection signals, these raw features often exhibit substantial overlap between classes, limiting their discriminative power. To address this challenge, we propose Steer-to-Detect (\texttt{S2D}), a two-stage framework for detecting LLM-generated text. In the first stage, \texttt{S2D} learns a steering vector that is injected into the hidden states of a frozen observer LLM, producing representations with improved class separability. In the second stage, detection is performed via a hypothesis testing procedure based on the steered representations. 
We establish finite-sample, high-probability guarantees for Type I and Type II errors, providing a theoretical characterization of the procedure. Empirically, \texttt{S2D} achieves strong and consistent performance across a range of settings, including out-of-distribution scenarios and adversarial perturbations.

\end{abstract}



\section{Introduction}

Large language models (LLMs) are increasingly deployed across a broad range of domains, including education, finance, legal services, customer support, and writing assistance~\citep{cheong2024not,iacovides2024finllama,kobak2025delving,su2025llm,batra2025review}. Despite their widespread adoption, their use has raised growing societal concerns, including the spread of misinformation, academic misconduct, and the erosion of trust in written content~\citep{lee2023language,zugecova2025evaluation,xu2024instructions,yang2025adversarial}. A key reason is that LLM-generated text often closely resembles human writing, making it difficult to distinguish between the two. This challenge highlights the need for reliable detection of LLM-generated text.

Early work on this problem suggests that as long as distributional differences exist, reliable detection is possible in principle, particularly with sufficiently long samples \citep{mitchell2023detectgpt}.
Building on this observation, a prominent line of work seeks to explicitly introduce such differences during generation, most notably through watermarking methods that embed structured patterns identifiable with statistical guarantees~\citep{kirchenbauer2023watermark,li2025statistical}. While effective under controlled settings, these approaches rely on access to the generation process and are not applicable when watermarking is absent or not widely adopted. This limitation motivates passive methods, which aim to infer the origin of a text directly from observed samples without any control over the generation process. A detailed taxonomy is provided in Section~\ref{sec_related_work}.

A key challenge in the passive setting is that the distributional differences between human-written and LLM-generated text can be subtle, particularly for short or moderate-length samples. This raises a natural question: rather than relying solely on these weak signals, can we amplify them to improve detectability? Recent studies suggest that hidden representations of LLMs encode behavior-relevant information that is not fully captured by final-layer outputs~\citep{zou2023representation,orgad2024llms,li2023inference,skean2025layer,park2025steer}. These hidden representations provide a richer feature space in which the differences between human- and machine-generated text may become more pronounced. This observation motivates our central question: \textbf{\textit{Can we reshape LLM hidden representations to amplify such differences and enable reliable detection of machine-generated text?}}

\vspace{-0.1in}
\paragraph{Our Contributions.}
In this paper, we propose \emph{Steer-to-Detect} (\texttt{S2D}), a novel method for detecting LLM-generated text. At a high level, \texttt{S2D} employs a surrogate model as an ``observer'' and intervenes in its representation extraction process. Specifically, during the forward pass on input text, we steer hidden representations by adding a universal vector, thereby modifying how representations are formed. The vector is learned to enhance the separability between human-written and LLM-generated text. This approach avoids expensive model fitting and extends to other binary detection tasks.

On the theoretical side, we establish finite-sample control of the Type I (or false positive) error and derive an explicit upper bound on the excess Type II (or false negative) error. In addition, we provide a quantitative characterization of robustness, demonstrating the reliability of the detector in risk-sensitive settings.

Empirically, we conduct extensive experiments and show that \texttt{S2D} achieves strong and consistent performance across both in-distribution (ID) and out-of-distribution (OOD) settings. Further analyses demonstrate robustness to paraphrasing, adversarial perturbations, and variations in input length. In short, we propose a simple yet effective approach that enhances detectability by steering hidden representations.

\vspace{-0.05in}
\subsection{Related Works}\label{sec_related_work}
\vspace{-0.05in}

\paragraph{LLM-generated Text Detection.}
\begin{wraptable}{r}{0.48\linewidth}
\vspace{-8pt}
\centering
\small
\renewcommand{\arraystretch}{1.1}
\caption{Taxonomy of passive detection.}
\label{tab:llm_detection_taxonomy}
\vspace{2pt}
\resizebox{\linewidth}{!}{
\begin{tabular}{c|c|c}
\toprule
 & \textbf{Rewrite} & \textbf{Non-rewrite} \\
\midrule
\textbf{Train-free}
& \citep{mitchell2023detectgpt,sun2025zero}
& \citep{su2023detectllm,bao2023fast,hans2024spotting} \\
\midrule
\textbf{Train-based}
& \citep{mao2024raidar,zhou2026learn,huang2025magret}
& Ours \& \citep{solaiman2019release,ippolito2020automatic,mitrovic2023chatgpt,zhou2025adadetectgpt} \\
\bottomrule
\end{tabular}
}
\vspace{-10pt}
\end{wraptable}
Active methods modify the generation process and are beyond the scope of this work~\citep{yang2025watermarking,wang2025building,ye2026securing}, so we focus on passive detection. As summarized in Table~\ref{tab:llm_detection_taxonomy}, passive detectors can be organized along two axes: whether labeled training data are required and whether detection relies on auxiliary rewritten or perturbed variants of the input text. This yields four broad categories: train-free rewrite-based~\citep{mitchell2023detectgpt,sun2025zero}, train-free non-rewrite-based~\citep{su2023detectllm,bao2023fast,hans2024spotting}, train-based rewrite-based~\citep{mao2024raidar,huang2025magret}, and train-based non-rewrite-based methods~\citep{solaiman2019release,ippolito2020automatic,mitrovic2023chatgpt,zhou2025adadetectgpt}. Rewrite-based methods incur additional cost due to auxiliary text generation, whereas non-rewrite-based methods avoid this overhead. Our method falls into the \textit{train-based, non-rewrite-based} category. Prior work in this setting typically relies on final-layer outputs such as logits or derived scores, leaving intermediate representations underexplored~\citep{skean2025layer}. More recent studies have begun to use hidden representations for detection~\citep{yu2024text,chen2025repreguard}, with RepreGuard~\citep{chen2025repreguard} showing that hidden representations can provide stronger signals than final-layer outputs. Unlike these methods, we do not treat representations as fixed features. Instead, we shape their geometry to better separate human-written and LLM-generated text.

\vspace{-0.1in}
\paragraph{Representation Engineering.}
Representation engineering studies how information encoded in the hidden states of LLMs can be analyzed, interpreted, and manipulated to understand, monitor, or control model behavior~\citep{durrani2020analyzing,zou2023representation,hedstrom2025steer,skean2025layer,voita2024neurons,ghandeharioun2024patchscopes,sun2025layernavigator,nguyen2025activation,venkateswaran2026spotlight,beaglehole2026toward,davarmanesh2026efficient}. This line of work includes both methods for characterizing what hidden representations encode and intervention-based approaches that modify them. Among these, \emph{activation steering} has emerged as a prominent technique that perturbs intermediate activations along task-specific directions (often represented as steering vectors) in representation space, enabling targeted behavioral changes without updating model parameters~\citep{turner2023steering,im2025unified}. Recent studies apply activation steering to improve factual reliability, control generation style, and monitor model behaviors such as hallucinations and toxic content using internal representations~\citep{rimsky2024steering,wu2025sharp,sheng2025alphasteer,qiu2025hallucination,hua2025steering,beaglehole2026toward}.
Despite these advances, the use of representation-level interventions for LLM-generated text detection remains relatively underexplored. In this work, we repurpose activation steering to reshape the geometry of hidden representations, transforming raw hidden states into features that better separate human-written and LLM-generated text.



\section{Our Method: Steer-to-Detect (\texttt{S2D})}
\begin{figure}[t]
    \centering
    \includegraphics[width=1\textwidth]{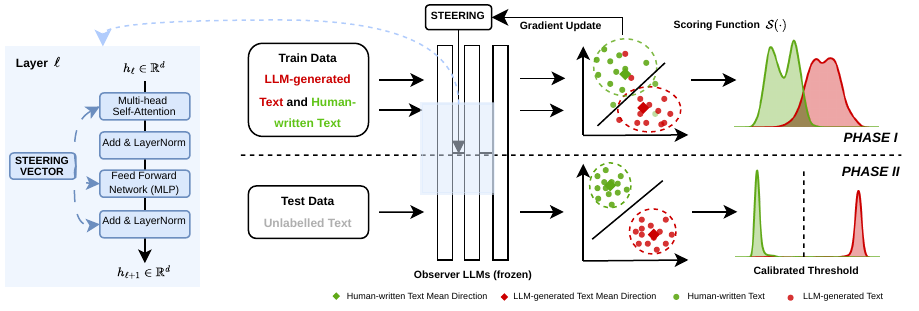}
    \caption{\textbf{Overview of Steer-to-Detect (\texttt{S2D}).} \textit{Phase I} (top row) applies a steering vector to reshape the observer LLM’s hidden representations, enhancing the separation between human-written and LLM-generated text. 
    \textit{Phase II} (bottom row) scores unseen texts and rejects the null hypothesis when the score exceeds a calibrated threshold.}
    \label{fig:s2d_overview}
    \vspace{-0.05in}
\end{figure}



\paragraph{Problem Formulation.}
Let $\mathcal{X}$ denote the space of all text sequences (possibly of varying length). We assume that human-written and LLM-generated texts are drawn from two distributions, $\mathbb{P}_0$ and $\mathbb{P}_1$, respectively. Given an observation $x \in \mathcal{X}$, detection is formulated as the binary hypothesis test:
\begin{equation}
\label{eq:hypothesis-testing}
\mathcal{H}_0: x \sim \mathbb{P}_0 \quad (\text{human-written})
\quad \textit{versus} \quad
\mathcal{H}_1: x \sim \mathbb{P}_1 \quad (\text{LLM-generated}).
\end{equation}

\vspace{-0.05in}
\subsection{Method Overview}
\label{sec:problem_formulation}
We begin with an overview of our method. We employ a surrogate model as an ``observer'' and extract hidden representations during the forward pass on input text. Representations of LLM-generated and human-written text exhibit systematic differences, which, although subtle, provide a sufficient signal for detection~\citep{chen2025repreguard}. Motivated by this observation, we intervene in the representation formation process by injecting a lightweight, learnable steering vector $\mathbf{v}$ into intermediate hidden states during the forward pass. This modifies how representations are formed and enhances their separability between the two types of text. Based on the resulting representations, we construct a hypothesis test for detection. The overall framework is illustrated in Figure~\ref{fig:s2d_overview}.

More specifically, let $f_{\theta}: \mathcal{X} \to \mathbb{S}^{d-1}$ denote the representation map of an observer model, which embeds a text sequence into a $d$-dimensional vector on the unit hypersphere (to be specified later). To enable intervention in the representation space, we introduce a steering vector $\mathbf{v} \in \mathbb{R}^d$ and denote the resulting steered representation map by $f_{\theta,\mathbf{v}}$. 

Our method proceeds in two stages, as introduced earlier. In the first stage, we estimate the steering vector $\mathbf{v}$ from training data $\mathcal{S}_{\text{train}}=\{(x_i,y_i)\}_{i=1}^{n_1}$ and fit parametric class-conditional models to the steered representations. In the second stage, we construct a likelihood-ratio statistic based on the estimated models and use it to test whether a given text is human-written or LLM-generated. 
We next describe (i) the computation of $f_{\theta}$ and the introduction of $\mathbf{v}$ in Section~\ref{sec:representation}, (ii) the learning of $\mathbf{v}$ in Section~\ref{sec:phase_1}, and (iii) the detection procedure in Section~\ref{sec:phase_2}.

\vspace{-0.05in}
\subsection{Representation Extraction and Steering}
\label{sec:representation}

\paragraph{Representation Extraction.}\label{sec_repre_ext}
We define the mapping $f_{\theta}$ by extracting hidden states from an observer model with $L$ Transformer blocks and hidden dimension $d$. For a sequence $x \in \mathcal{X}$, let $h_{\ell,t}(x) \in \mathbb{R}^d$ denote the hidden state at layer $\ell (\le L)$ and token index $t$. Prior work~\citep{xiao2023efficient,zou2023representation,skean2025layer,chen2025repreguard} suggests that tokens near the end of a sequence encode richer contextual information and exhibit more pronounced differences between human-written and LLM-generated text, as later tokens aggregate broader preceding context in causal LLMs. Motivated by this observation, we focus on the final $K \in (0,1]$ fraction of valid (non-padding) tokens,\footnote{Non-padding tokens refer to tokens corresponding to actual text content, excluding padding tokens.} indexed by $\mathcal{I}_{K}(x)$. 
In addition, intermediate layers capture complementary information beyond the final layer~\citep{skean2025layer}, so we aggregate hidden states from the last $N$ layers to obtain a more stable sequence-level representation. Mathematically, we define
\begin{equation*}
    f_{\theta}(x) = \frac{\bar{m}(x)}{\|\bar{m}(x)\|} \in \mathbb{S}^{d-1}
    \quad \text{with} \quad
    \bar{m}(x)
    :=
    \frac{1}{N \cdot |\mathcal{I}_K(x)|}
    \sum_{\ell=L-N+1}^{L}
    \sum_{t \in \mathcal{I}_K(x)}
    h_{\ell,t}(x) \in \mathbb{R}^d,
\end{equation*}
where $|\mathcal{I}_K(x)|$ denotes the number of selected non-padding tokens.

\begin{remark}[Representation Norms and Directionality]
In Appendix~\ref{app_repre_norm}, we provide empirical support for this modeling choice $f_{\theta}$. 
In modern Transformer models, representation norms are often approximately uniform after root mean square normalization, leaving directional variation as the primary source of information~\citep{park2025steer,grattafiori2024llama,yang2025qwen3}. 
For example, in \textit{meta-llama/Llama-3.1-8B}, the mean norm is around 18.8 across four different datasets; see Figure \ref{fig_norm_distribution} in the appendix for details. 
We also observe that $f_{\theta}(x)\mid y$ exhibits unimodal structure (see Figure~\ref{fig_vmf_domain} in the appendix), which motivates the use of the vMF model in deriving \eqref{eq:postvMF}.
\end{remark}

\vspace{-0.05in}
\paragraph{Representation Steering.}
Although the base extractor $f_{\theta}(\cdot)$ captures high-level semantic information, the resulting class-conditional distributions may still overlap in latent space, even in out-of-distribution settings (see Appendix~\ref{appendix_score_dist} for evidence). 

To improve separability, we introduce a learnable steering vector $\mathbf{v} \in \mathbb{R}^d$ and define a steered extractor $f_{\theta,\mathbf{v}}(\cdot)$.
We implement steering as an additive intervention on intermediate activations of the observer model by injecting a vector $\mathbf{v}$ at a chosen layer $\ell$:
\begin{equation*}
    h_{\ell,t}(\cdot) \leftarrow h_{\ell,t}(\cdot) + \mathbf{v}.
\end{equation*}
In the default setting, the intervention is applied at an intermediate layer $\ell \le L-N+1$. Importantly, the method is not sensitive to the exact choice of $\ell$: as long as the intervention is applied within intermediate layers, performance remains stable (see Section~\ref{understanding_s2d}). 

This is the only modification to the model: after the injection, the observer model remains fixed, and the perturbed activations propagate through subsequent Transformer blocks without further intervention, producing a new representation.
Intuitively, since the model captures rich semantic structure from large-scale pre-training, the steering vector $\mathbf{v}$ introduces a direction that enhances its intrinsic discriminative capability. Applying the same token selection, layer aggregation, and normalization yields the steered representation $f_{\theta,\mathbf{v}}(\cdot) \in \mathbb{S}^{d-1}$ for the downstream hypothesis test.


\vspace{-0.05in}
\subsection{Phase I: Learning the Steering Vector via Likelihood Maximization}
\label{sec:phase_1}

\paragraph{Likelihood-Based Modeling.}
A remaining question is how to learn the steering vector $\mathbf{v}$, which is addressed in Phase I of our method. The steering vector aims to enhance the separability of latent representations across classes. To quantify this, we formulate a classification problem where the feature is given by $f_{\theta, \mathbf{v}}$ and the response is the ground-truth label ($y_i=1$ for LLM-generated text and $y_i=0$ for human-written text).
More specifically, we model the conditional distribution $f_{\theta, \mathbf{v}}(x_i) \mid y_i$ as a von Mises--Fisher (vMF) distribution, a directional analogue of a Gaussian on the unit sphere.\footnote{In the vMF model, we have $p(f_{\theta, \mathbf{v}}(x_i) \mid y_i, \boldsymbol{\mu}_0, \boldsymbol{\mu}_1) \propto \exp\left(\kappa \boldsymbol{\mu}_{y_i}^\top f_{\theta, \mathbf{v}}(x_i)\right)$ for the same $\kappa, \boldsymbol{\mu}_0, \boldsymbol{\mu}_1$ in \eqref{eq:postvMF}.} As a result, the posterior distribution $y_i \mid f_{\theta, \mathbf{v}}(x_i)$ under a uniform prior (i.e., $p(y_i=c) = 0.5$ for any $c \in \{0, 1\}$) is
\begin{equation}
\label{eq:postvMF}
    p\left(y_i \mid f_{\theta, \mathbf{v}}(x_i), \boldsymbol{\mu}_0, \boldsymbol{\mu}_1\right)
    =
    \frac{\exp\left(\kappa \boldsymbol{\mu}_{y_i}^\top f_{\theta, \mathbf{v}}(x_i)\right)}
    {\sum_{c \in \{0,1\}} \exp\left(\kappa \boldsymbol{\mu}_c^\top f_{\theta, \mathbf{v}}(x_i)\right)}.
\end{equation}
Here, $\boldsymbol{\mu}_c \in \mathbb{S}^{d-1}$ denotes the mean direction for class $c \in \{0,1\}$ and $\kappa > 0$ is a shared concentration parameter. 
We prove this derivation of \eqref{eq:postvMF} in Appendix \ref{app:derivation}.
To estimate the steering vector, we collect a labeled training set $\mathcal{S}_{\text{train}}=\{(x_i,y_i)\}_{i=1}^{n_1}$ and maximize the following log-likelihood:
\begin{equation}
\label{eq:loss-for-v}
(\widehat{\mathbf{v}}, \widehat{\boldsymbol{\mu}}_0, \widehat{\boldsymbol{\mu}}_1) = 
\argmax\limits_{
\mathbf{v}\in\mathbb{R}^d,
\boldsymbol{\mu}_0,\boldsymbol{\mu}_1\in\mathbb{S}^{d-1}
}
\frac{1}{n_1}\sum_{i=1}^{n_1}
\log p\left(
y_i \mid f_{\theta,\mathbf{v}}(x_i),
\boldsymbol{\mu}_0,
\boldsymbol{\mu}_1
\right).
\end{equation}

\vspace{-0.05in}
\paragraph{Optimization Procedure.}
In practice, we adopt a two-timescale optimization scheme to optimize \eqref{eq:loss-for-v}. 
Specifically, the steering vector $\mathbf{v}$ is updated via gradient ascent on a slower timescale (i.e., using a smaller step size $\eta$), while the class mean directions $\boldsymbol{\mu}_c$ are updated on a faster timescale (i.e., using an exponential moving average (EMA) with coefficient $\rho \in (0,1)$) over normalized class-specific embeddings~\citep{wang2021pico}. 
The observer model remains fixed throughout the optimization, and we choose $0<\eta \ll \rho \le 1$ to introduce two time scales. The full procedure is provided in Appendix~\ref{app_algo} due to space constraints. 

\begin{remark}[Gradient analysis]
 Intuitively, this objective in \eqref{eq:loss-for-v} encourages the steering vector to reshape the representation space so that class-wise representations become more separable. In Appendix~\ref{app_prop_1}, we characterize the gradient of the population version of \eqref{eq:loss-for-v}, which provides theoretical support for this effect.   
\end{remark}

\vspace{-0.05in}
\subsection{Phase II: Detection via the Log-Likelihood Ratio Test}
\label{sec:phase_2}

\paragraph{Log-Likelihood Ratio Test.}
Detection naturally reduces to a log-likelihood ratio test. Under the posterior vMF model in \eqref{eq:postvMF}, the Neyman--Pearson (NP) lemma~\citep{lehmann2005testing} implies that the most powerful test is based on the log-likelihood ratio:
\begin{equation}
\label{eq:score-for-detection}
\mathcal{S}(x)
=
\log\left(
\frac{p(f_{\theta,\mathbf{v}}(x)\mid y=1,\boldsymbol{\mu}_1,\boldsymbol{\mu}_0)}
{p(f_{\theta,\mathbf{v}}(x)\mid y=0,\boldsymbol{\mu}_1,\boldsymbol{\mu}_0)}
\right)
=
\kappa (\boldsymbol{\mu}_1-\boldsymbol{\mu}_0)^\top f_{\theta,\mathbf{v}}(x).
\end{equation}
Thus, the test reduces to projecting the steered representation onto the discriminative direction $\boldsymbol{\mu}_1 - \boldsymbol{\mu}_0$.
In practice, we plug in the estimates $(\widehat{\mathbf{v}}, \widehat{\boldsymbol{\mu}}_0, \widehat{\boldsymbol{\mu}}_1)$ obtained from \eqref{eq:loss-for-v} to form the empirical score $\widehat{\mathcal{S}}(x)$. The resulting detector is $\widehat{\phi}(x) := \mathbbm{1}\bigl(\widehat{\mathcal{S}}(x) \ge \widehat{\tau}\bigr)$, which rejects the null hypothesis $\mathcal{H}_0$ when the score exceeds the threshold $\widehat{\tau}$. Accordingly, $x$ is classified as LLM-generated if $\widehat{\phi}(x)=1$, and as human-written otherwise.

\vspace{-0.05in}
\paragraph{Threshold Calibration.}
In high-stakes scenarios such as academic integrity assessment, falsely labeling human-written text as LLM-generated corresponds to a Type-I error with substantial practical and ethical consequences. In such settings, controlling the false positive rate (FPR) is more important than optimizing a balanced trade-off between detection power and false alarms. Accordingly, we calibrate the threshold using an independent calibration sample $\mathcal S_{\mathrm{cal}} = \{x_i^{-}\}_{i=1}^{n_2}$ drawn from the null distribution, and define
\begin{equation}\label{type_1_error}
\widehat{\tau}_\alpha :=\inf \left\{
\tau:\frac{1}{n_2}\sum_{i=1}^{n_2}
\mathbbm{1}\bigl(\widehat{\mathcal S}(x_i^{-})\ge \tau\bigr)
\le \alpha
\right\}.
\end{equation}
This empirical calibration enforces the target Type-I error level on the calibration set. Moreover, Theorem~\ref{thm_1} shows that the resulting test achieves approximate control of the population Type-I error in finite samples with high probability.

\begin{remark}[Threshold from Youden index]
In Appendix~\ref{app:threshold_calibration}, we describe an alternative threshold selection strategy based on the Youden index, which is commonly used in standard benchmarking settings~\citep{chen2025repreguard,ruopp2008youden,yin2014joint,liu2012classification}. While this approach is not designed to control the Type I error, our empirical results indicate that it can nevertheless achieve competitive false positive control in practice.
\end{remark}


\section{Theoretical Analysis}\label{sec_thm}

In this section, we provide a theoretical analysis of our method. 
We begin by establishing finite-sample guarantees for Type I error control, and then characterize the excess Type II error arising from parameter estimation. 
Finally, we study the effect of null distribution shifts on the resulting test. All proofs are deferred to Appendix~\ref{app_pf}.

\vspace{-0.05in}
\paragraph{Type-I Error Control.}
We show that empirical threshold calibration ensures Type I error control with high probability.
Let $\widehat{\mathcal{S}}_t$ denote the score function at training iteration $t$, trained on $\mathcal S_{\mathrm{train}}$, and define the detector $\widehat{\phi}(x):=\mathbbm{1}\bigl(\widehat{\mathcal S}_t(x)\ge \widehat{\tau}_{\alpha,t}\bigr)$,
where $\widehat{\tau}_{\alpha,t}$ is the empirical threshold at iteration $t$, computed from an independent null sample $\mathcal S_{\mathrm{cal}}=\{x_i^{-}\}_{i=1}^{n_2}$ as in~\eqref{type_1_error}.
Theorem~\ref{thm_1} shows that this procedure controls the Type I error up to a finite-sample deviation that vanishes as $n_2$ increases.
\begin{theorem}[Type-I Error Control]\label{thm_1}
Let $\alpha,\delta\in(0,1)$. With probability at least $1-\delta$ over the randomness in $\mathcal S_{\mathrm{cal}}$ and $\mathcal S_{\mathrm{train}}$,
$\left|\mathbb{P}_0\!\left(\widehat{\phi}(X_{\mathrm{test}}^{-})=1\right) - \alpha\right| \le
\sqrt{\log(2/\delta) / (2n_2)} + 1 / n_2$.
\end{theorem}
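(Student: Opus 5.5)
Our plan is to condition on $\mathcal S_{\mathrm{train}}$, which fixes the score function $\widehat{\mathcal S}_t$. We then reduce the claim to a uniform deviation bound for the empirical CDF of the scalar scores $Z_i := \widehat{\mathcal S}_t(x_i^-)$, $i=1,\dots,n_2$. By independence of the two samples, conditionally on $\mathcal S_{\mathrm{train}}$ the $Z_i$ are i.i.d.\ draws from the push-forward of $\mathbb{P}_0$ under $\widehat{\mathcal S}_t$. A fresh test point $Z := \widehat{\mathcal S}_t(X^-_{\mathrm{test}})$ has the same law.

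Write $G(\tau) := \mathbb{P}_0(Z\ge \tau)$ for the survival function and $\widehat G(\tau) := \frac{1}{n_2}\sum_i \mathbbm{1}(Z_i\ge\tau)$ for its empirical counterpart. The Type I error of $\widehat\phi$ is exactly $G(\widehat\tau_{\alpha,t})$. By the Dvoretzky--Kiefer--Wolfowitz inequality with Massart's constant, applied to the left-continuous version of the CDF (equivalently to $-Z$), we have
\begin{equation*}
\sup_\tau |\widehat G(\tau)-G(\tau)| \le \sqrt{\log(2/\delta)/(2n_2)} =: \varepsilon
\end{equation*}
with conditional probability at least $1-\delta$. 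Since this holds for every realization of $\mathcal S_{\mathrm{train}}$, integrating gives the same probability jointly over both samples.

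It then remains to control $|\widehat G(\widehat\tau_{\alpha,t})-\alpha|$ by $1/n_2$, using the definition \eqref{type_1_error}.
\begin{itemize}
\item \emph{Upper side.} $\widehat G$ is nonincreasing and left-continuous, so the infimum is attained and $\widehat G(\widehat\tau_{\alpha,t})\le\alpha$.
\item \emph{Lower side.} For $\tau<\widehat\tau_{\alpha,t}$ we have $\widehat G(\tau)>\alpha$. Taking $\tau\uparrow\widehat\tau_{\alpha,t}$, the jump of $\widehat G$ at $\widehat\tau_{\alpha,t}$ equals (number of ties at $\widehat\tau_{\alpha,t}$)$/n_2$. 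This gives $\widehat G(\widehat\tau_{\alpha,t})\ge \alpha - 1/n_2$, provided the scores are almost surely distinct.
\end{itemize}
Combining the two sides with the DKW bound yields $|G(\widehat\tau_{\alpha,t})-\alpha|\le \varepsilon + 1/n_2$.

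The main obstacle is the lower side, in two respects. First, ties among the $Z_i$ could make the jump exceed $1/n_2$. We would handle this either by assuming the score distribution under $\mathbb{P}_0$ is continuous, which is natural since $f_{\theta,\mathbf v}$ is a continuous embedding of a rich text distribution, or by the standard randomized tie-breaking. Second, the DKW bound must be applied at the data-dependent point $\widehat\tau_{\alpha,t}$. This is why we insist on the uniform-in-$\tau$ version rather than a pointwise Hoeffding bound; with the uniform version the random threshold causes no difficulty. We would state the continuity assumption explicitly as the one extra hypothesis needed for the $1/n_2$ term.
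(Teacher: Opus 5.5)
Your argument follows the paper's proof essentially step for step. You condition on $\mathcal S_{\mathrm{train}}$, use the uniform (Massart) DKW bound so the data-dependent threshold costs nothing, add a $1/n_2$ discretization term by the triangle inequality, and integrate out the training sample.

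\textbf{The discretization step is wrong as written.} The paper's proof makes the identical slip. For a nonincreasing function, attainment of $\inf\{\tau:\widehat G(\tau)\le\alpha\}$ would need right-continuity. Your $\widehat G$ is left-continuous and constant on each interval between consecutive order statistics $Z_{(1)}\le\dots\le Z_{(n_2)}$, closed at the right end. Hence $\{\tau:\widehat G(\tau)\le\alpha\}=(Z_{(j)},\infty)$ for some $j$. The infimum $\widehat\tau_{\alpha,t}=Z_{(j)}$ is therefore not attained, and in fact $\widehat G(\widehat\tau_{\alpha,t})>\alpha$. For example, with $n_2=100$, $\alpha=0.01$ and distinct scores, $\widehat\tau_{\alpha,t}=Z_{(99)}$ and $\widehat G(\widehat\tau_{\alpha,t})=0.02$.

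The correct bookkeeping is the mirror image of yours:
\begin{itemize}
\item Left-continuity gives $\widehat G(\widehat\tau_{\alpha,t})=\lim_{\tau\uparrow\widehat\tau_{\alpha,t}}\widehat G(\tau)\ge\alpha$, with no jump involved.
\item The right limit satisfies $\widehat G(\widehat\tau_{\alpha,t}+)=\frac{1}{n_2}\#\{i:Z_i>\widehat\tau_{\alpha,t}\}\le\alpha$. Hence $\widehat G(\widehat\tau_{\alpha,t})\le\alpha+m/n_2$, where $m$ is the number of calibration scores equal to $\widehat\tau_{\alpha,t}$.
\end{itemize}
Without ties ($m=1$) this still gives $|\widehat G(\widehat\tau_{\alpha,t})-\alpha|\le 1/n_2$. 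So the two-sided bound survives and the rest of your proof goes through unchanged. Just note that the $1/n_2$ slack sits on the anti-conservative side.

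\textbf{Your no-ties hypothesis is necessary, and the paper's statement omits it.} If the null law of $\widehat{\mathcal S}_t(X)$ has atoms, the theorem is false. Suppose that law puts mass $1/2$ on each of two values and $\alpha=1/4$. Every test of the form $\mathbbm{1}(\widehat{\mathcal S}_t(x)\ge\tau)$ then has Type-I error in $\{0,1/2,1\}$. The left-hand side is at least $1/4$, while the right-hand side tends to $0$.

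Either of two conditions makes the $1/n_2$ term valid: the score is conditionally atomless, or you use randomized tie-breaking. Atomlessness holds, for instance, under Assumption~\ref{asump1:true-model} whenever $\widehat{\boldsymbol{\mu}}_1\neq\widehat{\boldsymbol{\mu}}_0$. You should keep this as an explicit hypothesis, as you proposed.
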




\vspace{-0.05in}
\paragraph{Excess Type-II Error.}
Next, we study the Type-II error. 
In general, analyzing the power of the detector is challenging due to the complex and implicit behavior of LLM-extracted representations. 
To facilitate analysis, we consider a simplified setting in which the extracted features $f_{\theta,\mathbf v}(X)$ follow a class-conditional vMF model with certain ground-truth parameters $\boldsymbol{\mu}_0$ and $\boldsymbol{\mu}_1$. 
This assumption allows us to define the corresponding oracle likelihood ratio test (i.e., $\mathcal{S}$ in \eqref{eq:score-for-detection}) based on this true model, and to study the excess Type-II error of the plug-in detector $\widehat{\mathcal{S}}$ relative to this oracle benchmark. 
In particular, it enables us to disentangle the contributions of estimation and calibration.


\begin{theorem}[Excess Type-II Error (informal)]\label{thm_2}
Assume the extracted features $f_{\theta, \mathbf{v}}$ follow a vMF distribution, and the EMA coefficient $\rho$ and learning rate $\eta$ are chosen sufficiently small, satisfying $0 < \eta \ll \rho < 1$. Under regularity conditions, there exists a constant $c \in (0,1)$ such that, with probability at least $1-2\delta$, for all $0 \le t \le T$, the gap $\mathbb{P}_1(\widehat{\phi}(X_{\mathrm{test}}^{+})=0)-\mathbb{P}_1({\phi}(X_{\mathrm{test}}^{+})=0)$ is bounded by
$$
\begin{aligned}
\mathcal{O}\Bigg( \underbrace{\left((1- {c}\rho)^t+\sqrt{\rho \log\frac{2T}{\delta}}+ \frac{\eta^2}{\rho^2}\right)^{\frac{1+\bar{\gamma}}{2}}}_{\text{Estimation Error}}  + \underbrace{ 
\sqrt{\frac{\log(2/\delta)}{n_2}} + \frac{1}{n_2}}_{\text{Calibration Error}} \Bigg)
\end{aligned}
$$
where $n_2$ is the calibration sample size and $\bar{\gamma}$ characterizes the local concentration of the score distribution (i.e., $\mathcal{S}$ under $\mathbb{P}_0$) around the threshold $\tau_\alpha^\star$.\footnote{A larger value of $\bar{\gamma}$ corresponds to less mass near the threshold and hence an easier estimation problem. See Assumption \ref{asump3:local_mass} for the detailed definition.}
\end{theorem}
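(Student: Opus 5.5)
The plan is to split the excess Type-II error into an estimation part and a calibration part by introducing two intermediate thresholds. Let $\phi$ be the oracle test $\mathbbm{1}(\mathcal S(x)\ge \tau_\alpha^\star)$, where $\tau_\alpha^\star$ is the population $(1-\alpha)$-quantile of $\mathcal S(X)$ under $\mathbb P_0$. Let $\tau_{\alpha,t}$ be the population $(1-\alpha)$-quantile of $\widehat{\mathcal S}_t(X)$ under $\mathbb P_0$. Then
\[
\mathbb P_1(\widehat\phi=0)-\mathbb P_1(\phi=0)
=\bigl[\mathbb P_1(\widehat{\mathcal S}_t<\widehat\tau_{\alpha,t})-\mathbb P_1(\widehat{\mathcal S}_t<\tau_{\alpha,t})\bigr]
+\bigl[\mathbb P_1(\widehat{\mathcal S}_t<\tau_{\alpha,t})-\mathbb P_1(\mathcal S<\tau_\alpha^\star)\bigr].
\]

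\textbf{Calibration term.} I will use the DKW/Hoeffding argument underlying Theorem~\ref{thm_1}, conditional on $\mathcal S_{\mathrm{train}}$. It shows that $\widehat\tau_{\alpha,t}$ sits between the population quantiles at levels $\alpha\pm(\sqrt{\log(2/\delta)/(2n_2)}+1/n_2)$ with probability $1-\delta$. To turn this null-quantile perturbation into a change of $\mathbb P_1$-mass, I will use that the vMF densities of the score under $\mathbb P_1$ and $\mathbb P_0$ have a bounded ratio $e^{\mathcal S}$ near the threshold. This yields the $\mathcal O(\sqrt{\log(2/\delta)/n_2}+1/n_2)$ term.

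\textbf{Estimation term.} I first need a parameter error bound, stated in terms of $\varepsilon_t:=\|\widehat{\boldsymbol\mu}_{1,t}-\widehat{\boldsymbol\mu}_{0,t}-(\boldsymbol\mu_1-\boldsymbol\mu_0)\|$ together with $\|\widehat{\mathbf v}_t-\mathbf v\|$. I will prove it via a two-timescale stochastic approximation analysis in three steps.
\begin{enumerate}
\item The EMA update for $\boldsymbol\mu_c$ is a contraction toward the normalized class mean at the current $\mathbf v$, with rate $1-c\rho$ coming from the regularity (strong concavity/curvature) condition. This gives the $(1-c\rho)^t$ term.
\item The martingale noise of the EMA has variance of order $\rho$. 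Azuma--Freedman with a union bound over $t\le T$ gives $\sqrt{\rho\log(2T/\delta)}$.
\item Because $\mathbf v$ moves at speed $\eta$, the fast variable tracks a drifting target. This creates a lag bias of order $\eta/\rho$, which enters squared through the Lipschitz smoothness of the tracked fixed point, giving $\eta^2/\rho^2$.
\end{enumerate}
Since $f_{\theta,\cdot}$ maps into the unit sphere, $\sup_x|\widehat{\mathcal S}_t(x)-\mathcal S(x)|\lesssim\varepsilon_t$, provided $\mathbf v\mapsto f_{\theta,\mathbf v}$ is Lipschitz (part of the regularity conditions). From this uniform score bound I get $|\tau_{\alpha,t}-\tau_\alpha^\star|\lesssim\varepsilon_t$. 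The $\mathbb P_1$-mass difference is then bounded by the mass of $\{|\mathcal S-\tau_\alpha^\star|\lesssim\varepsilon_t\}$. By the local mass Assumption~\ref{asump3:local_mass}, this region carries mass $\lesssim\varepsilon_t^{\gamma}$.

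\textbf{Getting the exponent $(1+\bar\gamma)/2$.} The final bound has exponent $(1+\bar\gamma)/2$ on the error bracket, rather than the naive $\bar\gamma$. I read this as saying the bracket controls the squared error $\varepsilon_t^2$ (mean-square analysis), while the margin argument contributes $\varepsilon_t^{1+\bar\gamma}$. The extra factor of $\varepsilon_t$ comes from comparing against the optimal NP test: in the excess-risk (Tsybakov-style) decomposition, points near the threshold incur loss proportional to their distance $|\mathcal S-\tau^\star|\le\varepsilon_t$ as well as their mass. Combining gives $(\varepsilon_t^2)^{(1+\bar\gamma)/2}$.

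\textbf{Main obstacle.} I expect the two-timescale step to be hardest. It needs a uniform-in-$t$ high-probability bound on the EMA tracking error while $\mathbf v$ drifts, and it must handle the nonconvexity in $\mathbf v$. My first attempt is to treat $\mathbf v_t$ as quasi-static: write $\boldsymbol\mu_t-\boldsymbol\mu^\star(\mathbf v_t)$ as a linear recursion plus drift and noise, and sum geometrically. I would only compare to the oracle evaluated at the current $\mathbf v_t$, absorbing any remaining $\mathbf v$-error into the regularity assumptions.
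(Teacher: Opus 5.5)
Your proposal is correct. Its learning-dynamics part is the paper's own. Theorem~\ref{thm:local_tracking} bounds the squared prototype error $\|\widehat{\boldsymbol{\mu}}_{c,t}-\boldsymbol{\mu}_{c,t}\|^2$ by your three contributions: contraction $(1-c\rho)^t$, a Freedman bound with a union over $t\le T$, and an $\eta/\rho$ lag from the assumed drift. Corollary~\ref{cor:score_deviation} then gives $\sup_x|\widehat{\mathcal S}_t-\mathcal S_t|\le r_t$ with $r_t$ the square root of the bracket, which is why the exponent is $(1+\bar{\gamma})/2$.

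The genuine difference is the decomposition. The paper applies the Neyman--Pearson identity of Lemma~\ref{lem:error-decomposition} once, directly to the empirically calibrated region $\widehat{\mathcal R}$. Calibration then appears as the price of conservativeness $e^{\tau_\alpha^\star}(\alpha-\mathbb P_0(\widehat{\mathcal R}^c))$. However, the symmetric difference must also absorb the threshold error $|\widehat\tau_\alpha-\tau_\alpha^\star|\le r+(\Delta_{n_2}(\delta)/c_0)^{1/\underline{\gamma}}$, where $\Delta_{n_2}(\delta)=\sqrt{\log(2/\delta)/(2n_2)}+1/n_2$. This uses the lower half of Assumption~\ref{asump3:local_mass} and produces the mixed term $[2r+(\Delta_{n_2}(\delta)/c_0)^{1/\underline{\gamma}}]^{1+\bar{\gamma}}$.

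Your intermediate exact-level threshold $\tau_{\alpha,t}$ decouples the two pieces.
\begin{itemize}
\item The calibration piece is at most $e^{2\kappa}|\mathbb P_0(\widehat{\mathcal S}_t\ge\widehat\tau_{\alpha,t})-\alpha|\le e^{2\kappa}\Delta_{n_2}(\delta)$. This holds because $d\mathbb P_1/d\mathbb P_0=e^{\mathcal S_t}\le e^{2\kappa}$ on the representation, globally and not only near the threshold.
\item In the estimation piece both tests have level exactly $\alpha$, so the conservativeness term vanishes and the Tsybakov-type integral is at most $e^{2\kappa}C_0(2r)^{1+\bar{\gamma}}$.
\end{itemize}
Your route buys three things. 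The calibration error enters linearly and additively. Only the upper margin exponent $\bar{\gamma}$ is used, plus the qualitative fact (automatic under the vMF model) that the null law of $\mathcal S_t$ has positive density at $\tau_\alpha^\star$, which gives $|\tau_{\alpha,t}-\tau_\alpha^\star|\le r$. And you avoid the paper's final absorption of $(\Delta_{n_2}(\delta)/c_0)^{(1+\bar{\gamma})/\underline{\gamma}}$ into $\mathcal O(\Delta_{n_2}(\delta))$, which tacitly needs $\underline{\gamma}\le 1+\bar{\gamma}$. The paper's route buys a single identity with no intermediate test.

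\textbf{Drop the $\mathbf v$-error.} Remove the $\|\widehat{\mathbf v}_t-\mathbf v\|$ component and the Lipschitz-in-$\mathbf v$ condition. The oracle in the formal statement is $\mathcal R_t^\star$, built from $\boldsymbol{\mu}_c(\mathbf v_t)$ and the same map $f_{\theta,\mathbf v_t}$. Hence $\sup_x|\widehat{\mathcal S}_t-\mathcal S_t|\le\kappa\varepsilon_t$ by Cauchy--Schwarz. The target's motion is simply assumed, $\|\boldsymbol{\mu}_{c,t+1}-\boldsymbol{\mu}_{c,t}\|\le C_\mu\eta$. The stated rate has no term for convergence of $\mathbf v_t$, so it cannot be a bound against a fixed-$\mathbf v$ oracle. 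Your fallback is exactly the paper's setup.

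\textbf{Source of the contraction.} It comes from the geometry of the normalized EMA step, not from strong concavity; see Lemma~\ref{lem:local_contraction}, whose constant depends only on $A_d(\kappa)$. Normalization makes the recursion nonlinear. The paper therefore works with the scalar squared-error recursion, isolates the martingale term $2\rho\langle\mathbf{a}_t,J_t\boldsymbol{\zeta}_{t+1}\rangle$ through a second-order expansion of $\Pi$, and closes with a bootstrap induction keeping the error below $1/4$. Your quasi-static linearization needs the same localization.

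\textbf{Uniformity in $t$.} Like the paper, you apply DKW for a single $t$. Claiming the bound simultaneously for all $t\le T$ requires a union bound in the calibration term as well, which puts $\log(T/\delta)$ there.
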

Theorem~\ref{thm_2} establishes an upper bound on the excess Type-II error, which characterizes the suboptimality of the learned score function $\widehat{\mathcal{S}}$. 
The bound consists of two components: an estimation error arising from learning the ground-truth parameters $\boldsymbol{\mu}_0$ and $\boldsymbol{\mu}_1$, and a calibration error due to estimating the critical value $\widehat{\tau}_{\alpha,t}$. 

The second term follows from Theorem~\ref{thm_1} and reflects the finite-sample effect of threshold calibration; it decreases as the calibration sample size $n_2$ increases. 
The first term can be further decomposed into several parts: an optimization term exhibiting geometric decay $(1-c\rho)^t$, a variance term induced by stochastic training $\sqrt{\rho \log(2T/\delta)}$, and a higher-order lag term $\eta^2/\rho^2$. 
When $t$ is sufficiently large and the two-timescale condition $\eta \ll \rho < 1$ holds with $\rho$ sufficiently small, the estimation error becomes negligible.

\vspace{-0.05in}
\paragraph{Effects of Distributional Shift.}
Prior work has highlighted performance degradation under domain shifts in LLM-generated text detection~\citep{mao2025general,zhou2026detecting}, and proposed database-based strategies to mitigate such a mismatch. However, these approaches do not explicitly characterize the impact of distribution shift. 
In Appendix~\ref{app_prop_2}, we quantify this effect by analyzing the detector under a Wasserstein perturbation with magnitude $\mathcal{E}$. Proposition~\ref{prop_3} provides finite-sample bounds for both Type-I and Type-II errors under the shifted distributions. In particular, when $\mathcal{E} > 0$, both errors may deteriorate, reflecting the impact of distribution mismatch. 


\section{Experiment}\label{sec:exp}
In this section, we present a comprehensive evaluation of the proposed \texttt{S2D} detector. 


\vspace{-0.05in}
\paragraph{Datasets.}
Following previous setups~\citep{chen2025repreguard,liu2025mgt}, we evaluate \texttt{S2D} on the DetectRL benchmark~\citep{wu2024detectrl}, which covers four domains that are particularly vulnerable to misuse by LLMs, including academic writing (arXiv Archive\footnote{\url{https://www.kaggle.com/datasets/spsayakpaul/arxiv-paper-abstracts/data.}}), news summarization (XSum~\citep{narayan2018don}), creative writing (WritingPrompts~\citep{fan2018hierarchical}), and user reviews (Yelp~\citep{zhang2015character}). 
For each domain, the benchmark provides 2800 pairs of human-written text and LLM-generated text. The LLM-generated texts are generated by four widely used LLMs: \textit{GPT-3.5-Turbo}~\citep{openai2023chatgpt}, \textit{Claude-Instant}~\citep{anthropic2023claudeinstant}, \textit{Google-PaLM}~\citep{anil2023palm}, and \textit{Llama-2-70B}~\citep{touvron2023llama}. 
To ensure robust evaluation, we repeat the experiment over five independent runs with different random samplings of the training data. In each run, we randomly sample 512 human-written/LLM-generated text pairs to form the training set, while using a fixed, disjoint test set of 1,000 pairs.

\vspace{-0.05in}
\paragraph{Baselines.}
We compare \texttt{S2D} against nine different detectors across train-free and train-based categories. The train-free baselines include: 
\underline{\textit{Likelihood}}~\citep{gehrmann2019gltr}, 
Log Rank Ratio (\underline{\textit{LRR}})~\citep{gehrmann2019gltr}, 
Fast-DetectGPT (\underline{\textit{FDGPT}})~\citep{bao2023fast}, and 
\underline{\textit{Binoculars}}~\citep{hans2024spotting}. 
The train-based baselines consist of: 
\underline{\textit{RoBERTa-Large}}~\citep{solaiman2019release}, 
\underline{\textit{RAIDAR}}~\citep{mao2024raidar}, 
Imitate Before Detection (\underline{\textit{ImBD}})~\citep{chen2025imitate}, \underline{\textit{RepreGuard}}~\citep{chen2025repreguard}, and 
Learn-to-Distance (\underline{\textit{L2D}})~\citep{zhou2026learn}.

\vspace{-0.05in}
\paragraph{Evaluation Metrics.}
We evaluate detection performance using AUROC and the true positive rate (TPR) at two fixed false positive rate (FPR) levels, namely TPR@1\% and TPR@0.01\%. These metrics capture both overall discrimination and performance in the high-precision regime where false positives must be tightly controlled. Further experimental details are provided in Appendix \ref{sec:experiment_details}.

\vspace{-0.05in}
\subsection{Detection Performance and Robustness}
\label{sec_rob_ana}

\paragraph{Detection under ID and OOD settings.}
We first evaluate detection performance under both in-distribution (ID) and out-of-distribution (OOD) settings, as summarized in Table~\ref{tab:main_experiment}. The detector is trained on text generated by a specific source model and evaluated either on the same generator (ID, diagonal entries) or on different generators (OOD, off-diagonal entries).


Under the ID setting, \texttt{S2D} performs strongly across all generators when the training and test distributions are aligned. RepreGuard and L2D are the most competitive baselines, often achieving TPRs above $99\%$ at low FPR, while other training-based methods are less consistent across generators. L2D relies on rewriting, whereas \texttt{S2D} uses a single forward pass of the frozen observer model at inference time; efficiency comparisons are provided in Appendix~\ref{sec:computational_efficiency}. In the OOD setting, performance gaps become more pronounced, and \texttt{S2D} remains competitive against most baselines.

\begin{table*}[!t]
\centering
\renewcommand{\arraystretch}{1.2}
\setlength{\tabcolsep}{3pt}
\resizebox{\textwidth}{!}{
\begin{tabular}{ll ccc ccc ccc ccc}
\toprule
\multirow{2.5}{*}{\bf Train $\downarrow$} & \multirow{2.5}{*}{\bf Detector $\downarrow$} & \multicolumn{3}{c}{\bf Test: ChatGPT} & \multicolumn{3}{c}{\bf Test: Llama-2-70B} & \multicolumn{3}{c}{\bf Test: Google-PaLM} & \multicolumn{3}{c}{\bf Test: Claude-Instant} \\
\cmidrule(lr){3-5} \cmidrule(lr){6-8} \cmidrule(lr){9-11} \cmidrule(lr){12-14}
& & \textit{ AUROC} & \textit{ TPR@1\%} & \textit{ TPR@.01\%} & \textit{ AUROC} & \textit{ TPR@1\%} & \textit{ TPR@.01\%} & \textit{ AUROC} & \textit{ TPR@1\%} & \textit{ TPR@.01\%} & \textit{ AUROC} & \textit{ TPR@1\%} & \textit{ TPR@.01\%} \\
\midrule

\rowcolor[gray]{0.96} \multicolumn{14}{l}{\textit{\textbf{Train-free Methods}}} \\
\multirow{4}{*}{N/A} 
 & Likelihood & 89.91 & 73.00 & 0.40 & 89.32 & 80.30 & 5.80 & 85.73 & 69.30 & 13.10 & 76.72 & 5.50 & 0.40 \\
 & LRR        & 88.95 & 78.00 & 0.01 & 89.51 & 81.50 & 38.80 & 85.66 & 73.70 & 21.30 & 76.75 & 7.00 & 0.20 \\
 & FDGPT      & 94.70 & 51.90 & 0.06 & 96.85 & 70.10 & 32.90 & 95.26 & 66.00 & 50.40 & 58.37 & 4.10 & 0.70 \\
 & Binoculars & 94.71 & 90.67 & 87.00 & 99.15 & 96.70 & 88.90 & 94.50 & 91.40 & 89.30 & 87.70 & 37.50 & 13.70 \\
\midrule

\rowcolor[gray]{0.96} \multicolumn{14}{l}{\textit{\textbf{Train-based Methods}}} \\

\multirow{6}{*}{ChatGPT} 
 & RoBERTa-L  & $98.99_{\pm .12}$ & $98.90_{\pm .25}$ & $98.10_{\pm .45}$ & $98.82_{\pm .15}$ & $63.70_{\pm 1.5}$ & $44.80_{\pm 2.1}$ & $98.08_{\pm .22}$ & $45.60_{\pm 2.0}$ & $25.40_{\pm 2.5}$ & \cellcolor{blue!25}$\mathbf{98.95_{\pm .18}}$ & $76.90_{\pm 1.6}$ & $15.80_{\pm 1.5}$ \\
 & RAIDAR     & $99.89_{\pm .05}$ & $99.90_{\pm .08}$ & $99.10_{\pm .15}$ & $98.49_{\pm .21}$ & $92.20_{\pm .85}$ & $32.60_{\pm 1.8}$ & $94.53_{\pm .60}$ & $75.40_{\pm 1.4}$ & $51.90_{\pm 1.9}$ & $87.40_{\pm 1.1}$ & $67.80_{\pm 1.9}$ & $0.10_{\pm .05}$ \\
 & ImBD       & $99.94_{\pm .04}$ & $99.90_{\pm .06}$ & $98.72_{\pm .20}$ & \cellcolor{blue!10}$\underline{99.98_{\pm .02}}$ & \cellcolor{blue!10}$\underline{99.90_{\pm .05}}$ & $93.64_{\pm .90}$ & \cellcolor{blue!25}$\mathbf{99.65_{\pm .10}}$ & \cellcolor{blue!25}$\mathbf{95.30_{\pm .65}}$ & \cellcolor{blue!10}$\underline{84.82_{\pm 1.1}}$ & $86.18_{\pm 1.3}$ & $25.20_{\pm 2.4}$ & $9.34_{\pm .80}$ \\
 & L2D        & \cellcolor{blue!10}$\underline{99.99_{\pm .01}}$ & \cellcolor{blue!25}$\mathbf{100.0_{\pm .00}}$ & \cellcolor{blue!10}$\underline{99.90_{\pm .05}}$ & $99.80_{\pm .11}$ & $99.50_{\pm .12}$ & \cellcolor{blue!10}$\underline{95.50_{\pm .70}}$ & $86.55_{\pm 1.2}$ & $57.80_{\pm 1.8}$ & $31.90_{\pm 2.2}$ & $83.80_{\pm 2.1}$ & $40.10_{\pm 4.5}$ & $12.30_{\pm 2.40}$ \\
 & RepreGuard & $99.84_{\pm .10}$ & $99.80_{\pm .00}$ & $99.70_{\pm .00}$ & $99.54_{\pm .12}$ & $99.43_{\pm .04}$ & $99.40_{\pm .00}$ & $97.19_{\pm .12}$ & $81.40_{\pm .21}$ & $79.70_{\pm .41}$ & $98.25_{\pm .16}$ & \cellcolor{blue!10}$\underline{78.53_{\pm 1.1}}$ & \cellcolor{blue!10}$\underline{70.40_{\pm .90}}$ \\
 & \texttt{S2D} (ours) & \cellcolor{blue!25}$\mathbf{99.99_{\pm .00}}$ & \cellcolor{blue!25}$\mathbf{100.0_{\pm .00}}$ & \cellcolor{blue!25}$\mathbf{99.90_{\pm .00}}$ & \cellcolor{blue!25}$\mathbf{100.0_{\pm .00}}$ & \cellcolor{blue!25}$\mathbf{99.95_{\pm .00}}$ & \cellcolor{blue!25}$\mathbf{99.90_{\pm .05}}$ & \cellcolor{blue!10}$\underline{98.93_{\pm .25}}$ & \cellcolor{blue!10}$\underline{91.93_{\pm 1.2}}$ & \cellcolor{blue!25}$\mathbf{90.83_{\pm 1.1}}$ & \cellcolor{blue!10}$\underline{98.31_{\pm .34}}$ & \cellcolor{blue!25}$\mathbf{83.02_{\pm 3.8}}$ & \cellcolor{blue!25}$\mathbf{79.48_{\pm 4.8}}$ \\
\cmidrule{2-14}

\multirow{6}{*}{Llama-2} 
 & RoBERTa-L  & $98.97_{\pm .20}$ & $98.90_{\pm .30}$ & $98.10_{\pm .40}$ & $99.90_{\pm .05}$ & \cellcolor{blue!10}$\underline{99.90_{\pm .04}}$ & $97.10_{\pm .60}$ & \cellcolor{blue!10}$\underline{99.82_{\pm .06}}$ & \cellcolor{blue!10}$\underline{96.80_{\pm .50}}$ & $83.50_{\pm 1.2}$ & \cellcolor{blue!10}$\underline{99.11_{\pm .15}}$ & \cellcolor{blue!10}$\underline{90.70_{\pm .80}}$ & $57.90_{\pm 2.2}$ \\
 & RAIDAR     & $99.98_{\pm .02}$ & \cellcolor{blue!25}$\mathbf{100.0_{\pm .00}}$ & $92.40_{\pm 1.2}$ & $99.58_{\pm .12}$ & $96.80_{\pm .50}$ & $70.80_{\pm 1.5}$ & $97.56_{\pm .40}$ & $84.10_{\pm 1.1}$ & $61.30_{\pm 1.8}$ & $88.58_{\pm 1.0}$ & $43.70_{\pm 2.0}$ & $6.30_{\pm .80}$ \\
 & ImBD       & \cellcolor{blue!10}$\underline{99.99_{\pm .01}}$ & \cellcolor{blue!10}$\underline{99.90_{\pm .05}}$ & \cellcolor{blue!10}$\underline{99.80_{\pm .05}}$ & \cellcolor{blue!10}$\underline{99.99_{\pm .01}}$ & \cellcolor{blue!10}$\underline{99.90_{\pm .04}}$ & \cellcolor{blue!10}$\underline{99.70_{\pm .05}}$ & \cellcolor{blue!25}$\mathbf{99.89_{\pm .04}}$ & \cellcolor{blue!25}$\mathbf{97.50_{\pm .40}}$ & \cellcolor{blue!25}$\mathbf{94.51_{\pm .50}}$ & $94.39_{\pm .60}$ & $49.70_{\pm 1.8}$ & $27.36_{\pm 1.9}$ \\
 & L2D        & $99.98_{\pm .02}$ & \cellcolor{blue!10}$\underline{99.90_{\pm .05}}$ & $95.60_{\pm .80}$ & $99.80_{\pm .08}$ & $99.30_{\pm .15}$ & $77.60_{\pm 1.2}$ & $93.82_{\pm .90}$ & $77.40_{\pm 1.3}$ & $51.40_{\pm 2.1}$ & $94.47_{\pm .55}$ & $70.80_{\pm 1.5}$ & $2.40_{\pm .50}$ \\
 & RepreGuard & $99.94_{\pm .00}$ & $99.78_{\pm .04}$ & $99.62_{\pm .08}$ & $99.94_{\pm .01}$ & $99.40_{\pm .00}$ & $99.35_{\pm .09}$ & $97.35_{\pm .14}$ & $82.85_{\pm .52}$ & $81.32_{\pm .78}$ & $98.41_{\pm .20}$ & $81.05_{\pm .50}$ & \cellcolor{blue!10}$\underline{74.28_{\pm 2.6}}$ \\
 & \texttt{S2D} (ours) & \cellcolor{blue!25}$\mathbf{100.0_{\pm .00}}$ & \cellcolor{blue!25}$\mathbf{100.0_{\pm .00}}$ & \cellcolor{blue!25}$\mathbf{99.90_{\pm .00}}$ & \cellcolor{blue!25}$\mathbf{99.99_{\pm .00}}$ & \cellcolor{blue!25}$\mathbf{99.90_{\pm .00}}$ & \cellcolor{blue!25}$\mathbf{99.80_{\pm .00}}$ & $99.48_{\pm .09}$ & $95.42_{\pm .41}$ & \cellcolor{blue!10}$\underline{94.50_{\pm .56}}$ & \cellcolor{blue!25}$\mathbf{99.36_{\pm .10}}$ & \cellcolor{blue!25}$\mathbf{91.20_{\pm .65}}$ & \cellcolor{blue!25}$\mathbf{88.63_{\pm .95}}$ \\
\cmidrule{2-14}

\multirow{6}{*}{PaLM} 
 & RoBERTa-L  & \cellcolor{blue!10}$\underline{99.99_{\pm .01}}$ & \cellcolor{blue!10}$\underline{99.90_{\pm .03}}$ & $99.30_{\pm .15}$ & \cellcolor{blue!10}$\underline{99.97_{\pm .01}}$ & \cellcolor{blue!10}$\underline{99.90_{\pm .04}}$ & $80.20_{\pm 1.5}$ & \cellcolor{blue!25}$\mathbf{99.98_{\pm .01}}$ & \cellcolor{blue!10}$\underline{99.60_{\pm .10}}$ & \cellcolor{blue!25}$\mathbf{99.50_{\pm .15}}$ & $99.70_{\pm .10}$ & \cellcolor{blue!10}$\underline{99.70_{\pm .08}}$ & $88.70_{\pm 1.1}$ \\
 & RAIDAR     & $99.96_{\pm .02}$ & \cellcolor{blue!10}$\underline{99.90_{\pm .03}}$ & $71.10_{\pm 1.8}$ & $99.96_{\pm .02}$ & \cellcolor{blue!10}$\underline{99.90_{\pm .04}}$ & $60.90_{\pm 2.0}$ & \cellcolor{blue!10}$\underline{99.97_{\pm .02}}$ & $99.40_{\pm .15}$ & $93.00_{\pm .80}$ & $99.71_{\pm .08}$ & $98.10_{\pm .30}$ & $11.30_{\pm 1.5}$ \\
 & ImBD       & $99.85_{\pm .05}$ & $96.45_{\pm .50}$ & $85.53_{\pm 1.1}$ & $99.93_{\pm .03}$ & $98.20_{\pm .25}$ & $95.32_{\pm .80}$ & $99.78_{\pm .08}$ & $96.00_{\pm .50}$ & $88.01_{\pm 1.1}$ & $93.67_{\pm .70}$ & $44.20_{\pm 1.8}$ & $25.12_{\pm 2.1}$ \\
 & L2D        & \cellcolor{blue!25}$\mathbf{100.0_{\pm .00}}$ & \cellcolor{blue!25}$\mathbf{100.0_{\pm .00}}$ & \cellcolor{blue!25}$\mathbf{100.0_{\pm .00}}$ & $99.92_{\pm .03}$ & \cellcolor{blue!25}$\mathbf{100.0_{\pm .00}}$ & $24.50_{\pm 2.5}$ & \cellcolor{blue!25}$\mathbf{99.98_{\pm .01}}$ & \cellcolor{blue!25}$\mathbf{99.70_{\pm .08}}$ & $95.70_{\pm .60}$ & \cellcolor{blue!25}$\mathbf{99.98_{\pm .01}}$ & \cellcolor{blue!25}$\mathbf{99.90_{\pm .03}}$ & \cellcolor{blue!25}$\mathbf{97.70_{\pm .40}}$ \\
 & RepreGuard & $99.92_{\pm .00}$ & $99.58_{\pm .13}$ & \cellcolor{blue!10}$\underline{99.53_{\pm .04}}$ & $99.94_{\pm .00}$ & $99.62_{\pm .11}$ & \cellcolor{blue!10}$\underline{99.50_{\pm .12}}$ & $96.95_{\pm .17}$ & $85.38_{\pm .80}$ & $84.08_{\pm .81}$ & $96.28_{\pm .50}$ & $76.97_{\pm 2.2}$ & $72.40_{\pm 2.0}$ \\
 & \texttt{S2D} (ours) & \cellcolor{blue!25}$\mathbf{100.0_{\pm .00}}$ & \cellcolor{blue!25}$\mathbf{100.0_{\pm .00}}$ & \cellcolor{blue!25}$\mathbf{100.0_{\pm .00}}$ & \cellcolor{blue!25}$\mathbf{99.99_{\pm .00}}$ & \cellcolor{blue!25}$\mathbf{100.0_{\pm .00}}$ & \cellcolor{blue!25}$\mathbf{99.60_{\pm .00}}$ & $99.88_{\pm .02}$ & \cellcolor{blue!10}$\underline{98.25_{\pm .40}}$ & \cellcolor{blue!10}$\underline{97.60_{\pm .52}}$ & \cellcolor{blue!10}$\underline{99.74_{\pm .11}}$ & $94.52_{\pm 1.6}$ & \cellcolor{blue!10}$\underline{92.60_{\pm 1.8}}$ \\
\cmidrule{2-14}

\multirow{6}{*}{Claude} 
 & RoBERTa-L  & $80.91_{\pm 1.5}$ & $14.00_{\pm 1.2}$ & $9.70_{\pm .80}$ & $80.26_{\pm 1.4}$ & $18.20_{\pm 1.6}$ & $16.30_{\pm 1.5}$ & $74.82_{\pm 1.8}$ & $26.95_{\pm 1.8}$ & $25.00_{\pm 2.0}$ & $99.97_{\pm .01}$ & $99.80_{\pm .05}$ & \cellcolor{blue!10}$\underline{99.60_{\pm .10}}$ \\
 & RAIDAR     & $99.91_{\pm .03}$ & $99.70_{\pm .10}$ & $46.70_{\pm 2.2}$ & $97.32_{\pm .50}$ & $88.90_{\pm 1.2}$ & $56.20_{\pm 2.0}$ & $93.60_{\pm .80}$ & $73.50_{\pm 1.4}$ & $20.20_{\pm 1.5}$ & \cellcolor{blue!10}$\underline{99.98_{\pm .01}}$ & \cellcolor{blue!10}$\underline{99.90_{\pm .04}}$ & $94.70_{\pm .80}$ \\
 & ImBD       & \cellcolor{blue!25}$\mathbf{99.99_{\pm .01}}$ & \cellcolor{blue!10}$\underline{99.90_{\pm .04}}$ & \cellcolor{blue!10}$\underline{99.71_{\pm .10}}$ & \cellcolor{blue!10}$\underline{99.97_{\pm .02}}$ & \cellcolor{blue!10}$\underline{99.90_{\pm .04}}$ & $84.31_{\pm 1.5}$ & \cellcolor{blue!25}$\mathbf{99.88_{\pm .04}}$ & \cellcolor{blue!25}$\mathbf{97.40_{\pm .35}}$ & \cellcolor{blue!25}$\mathbf{91.60_{\pm .80}}$ & $99.05_{\pm .15}$ & $86.70_{\pm 1.1}$ & $60.14_{\pm 1.8}$ \\
 & L2D        & $99.92_{\pm .02}$ & \cellcolor{blue!10}$\underline{99.90_{\pm .04}}$ & $82.20_{\pm 1.5}$ & $99.90_{\pm .05}$ & $99.80_{\pm .08}$ & \cellcolor{blue!10}$\underline{99.10_{\pm .20}}$ & $97.88_{\pm .40}$ & $92.20_{\pm .80}$ & $79.80_{\pm 1.3}$ & $99.96_{\pm .02}$ & \cellcolor{blue!10}$\underline{99.90_{\pm .04}}$ & $97.60_{\pm .40}$ \\
 & RepreGuard & $99.92_{\pm .00}$ & $99.47_{\pm .11}$ & $99.18_{\pm .19}$ & $99.89_{\pm .01}$ & $98.82_{\pm .26}$ & $98.10_{\pm .19}$ & $95.08_{\pm .27}$ & $70.38_{\pm .93}$ & $68.45_{\pm .54}$ & $98.42_{\pm .26}$ & $85.85_{\pm .62}$ & $81.47_{\pm 2.0}$ \\
 & \texttt{S2D} (ours) & \cellcolor{blue!10}$\underline{99.98_{\pm .03}}$ & \cellcolor{blue!25}$\mathbf{99.98_{\pm .04}}$ & \cellcolor{blue!25}$\mathbf{99.93_{\pm .04}}$ & \cellcolor{blue!25}$\mathbf{99.99_{\pm .01}}$ & \cellcolor{blue!25}$\mathbf{100.0_{\pm .00}}$ & \cellcolor{blue!25}$\mathbf{99.98_{\pm .04}}$ & \cellcolor{blue!10}$\underline{98.92_{\pm .20}}$ & \cellcolor{blue!10}$\underline{92.58_{\pm .69}}$ & \cellcolor{blue!10}$\underline{89.45_{\pm 1.6}}$ & \cellcolor{blue!25}$\mathbf{99.99_{\pm .00}}$ & \cellcolor{blue!25}$\mathbf{99.95_{\pm .09}}$ & \cellcolor{blue!25}$\mathbf{99.95_{\pm .09}}$ \\
\bottomrule
\end{tabular}}
\caption{\textbf{In-distribution and out-of-distribution performance comparison.} Methods are categorized into train-free and train-based settings. Our proposed \texttt{S2D} demonstrates robust generalization, achieving state-of-the-art or highly competitive results across the majority of evaluated scenarios. Best and second-best results are highlighted in \colorbox{blue!25}{\textbf{darker}} and \colorbox{blue!10}{\underline{lighter}} blue, respectively.}
\label{tab:main_experiment}
\vspace{-0.1in}
\end{table*}

\begin{table*}[!t]
\vspace{-5pt}
\centering
\small
\setlength{\tabcolsep}{3.2pt}
\renewcommand{\arraystretch}{1.2}
\resizebox{\textwidth}{!}{%
\begin{tabular}{
l
cc | cc cc cc | cc cc cc
}
\toprule
\multirow{2}{*}{\textbf{Detector} $\downarrow$}
& \multicolumn{2}{c|}{\textbf{No Attack}}
& \multicolumn{6}{c|}{\textbf{Paraphrasing Attacks}}
& \multicolumn{6}{c}{\textbf{Perturbations}} \\

\cmidrule(lr){2-3} \cmidrule(lr){4-9} \cmidrule(lr){10-15}

&
&
& \multicolumn{2}{c}{Polish}
& \multicolumn{2}{c}{Back Trans.}
& \multicolumn{2}{c}{DIPPER}
& \multicolumn{2}{c}{Character}
& \multicolumn{2}{c}{Word}
& \multicolumn{2}{c}{Char + Word} \\

\cmidrule(lr){4-5} \cmidrule(lr){6-7} \cmidrule(lr){8-9} \cmidrule(lr){10-11} \cmidrule(lr){12-13} \cmidrule(lr){14-15}

& \textit{AUROC} & \textit{TPR@1\%}
& \textit{AUROC} & \textit{TPR@1\%}
& \textit{AUROC} & \textit{TPR@1\%}
& \textit{AUROC} & \textit{TPR@1\%}
& \textit{AUROC} & \textit{TPR@1\%}
& \textit{AUROC} & \textit{TPR@1\%}
& \textit{AUROC} & \textit{TPR@1\%} \\

\midrule
Binoculars    & 95.40 & 62.80
& 73.54 \textcolor{red}{\scriptsize ($-$22.9\%)} & 0.53 \textcolor{red}{\scriptsize ($-$99.2\%)}
& 90.93 \textcolor{red}{\scriptsize ($-$4.7\%)} & 53.73 \textcolor{red}{\scriptsize ($-$14.4\%)}
& 93.67 \textcolor{red}{\scriptsize ($-$1.8\%)} & 58.14 \textcolor{red}{\scriptsize ($-$7.4\%)}
& 94.76 \textcolor{red}{\scriptsize ($-$0.7\%)} & 57.73 \textcolor{red}{\scriptsize ($-$8.1\%)}
& 94.23 \textcolor{red}{\scriptsize ($-$1.2\%)} & 58.73 \textcolor{red}{\scriptsize ($-$6.5\%)}
& 95.01 \textcolor{red}{\scriptsize ($-$0.4\%)} & 61.07 \textcolor{red}{\scriptsize ($-$2.8\%)} \\

L2D   & 95.80 & \cellcolor{blue!10}\underline{88.33}
& 56.14 \textcolor{red}{\scriptsize ($-$41.4\%)} & 1.20 \textcolor{red}{\scriptsize ($-$98.6\%)}
& \cellcolor{blue!10}\underline{92.36} \textcolor{red}{\scriptsize ($-$3.6\%)} & 15.73 \textcolor{red}{\scriptsize ($-$82.2\%)}
& 87.72 \textcolor{red}{\scriptsize ($-$8.4\%)} & 3.20 \textcolor{red}{\scriptsize ($-$96.4\%)}
& 83.43 \textcolor{red}{\scriptsize ($-$12.9\%)} & 15.20 \textcolor{red}{\scriptsize ($-$82.8\%)}
& \cellcolor{blue!10}\underline{95.70} \textcolor{red}{\scriptsize ($-$0.1\%)} & \cellcolor{blue!10}\underline{80.13} \textcolor{red}{\scriptsize ($-$9.3\%)}
& 94.21 \textcolor{red}{\scriptsize ($-$1.7\%)} & 42.13 \textcolor{red}{\scriptsize ($-$52.3\%)} \\

RepreGuard    & \cellcolor{blue!10}\underline{97.02} & 85.30
& \cellcolor{blue!25}\textbf{83.88} \textcolor{red}{\scriptsize ($-$13.5\%)} & 10.53 \textcolor{red}{\scriptsize ($-$87.7\%)}
& 91.67 \textcolor{red}{\scriptsize ($-$5.5\%)} & 45.33 \textcolor{red}{\scriptsize ($-$46.9\%)}
& 94.23 \textcolor{red}{\scriptsize ($-$2.9\%)} & 73.67 \textcolor{red}{\scriptsize ($-$13.6\%)}
& \cellcolor{blue!10}\underline{96.32} \textcolor{red}{\scriptsize ($-$0.7\%)} & \cellcolor{blue!10}\underline{83.87} \textcolor{red}{\scriptsize ($-$1.7\%)}
& 92.88 \textcolor{red}{\scriptsize ($-$4.3\%)} & 78.67 \textcolor{red}{\scriptsize ($-$7.8\%)}
& 94.29 \textcolor{red}{\scriptsize ($-$2.8\%)} & \cellcolor{blue!10}\underline{85.00} \textcolor{red}{\scriptsize ($-$0.4\%)} \\

\texttt{S2D} (ours)
& \cellcolor{blue!25}\textbf{98.87} & \cellcolor{blue!25}\textbf{96.68}
& 82.06 \textcolor{red}{\scriptsize ($-$17.0\%)} & \cellcolor{blue!25}\textbf{23.60} \textcolor{red}{\scriptsize ($-$75.6\%)}
& \cellcolor{blue!25}\textbf{93.45} \textcolor{red}{\scriptsize ($-$5.5\%)} & \cellcolor{blue!25}\textbf{67.33} \textcolor{red}{\scriptsize ($-$30.4\%)}
& \cellcolor{blue!25}\textbf{98.66} \textcolor{red}{\scriptsize ($-$0.2\%)} & \cellcolor{blue!25}\textbf{84.91} \textcolor{red}{\scriptsize ($-$12.2\%)}
& \cellcolor{blue!25}\textbf{97.76} \textcolor{red}{\scriptsize ($-$1.1\%)} & \cellcolor{blue!25}\textbf{90.53} \textcolor{red}{\scriptsize ($-$6.4\%)}
& \cellcolor{blue!25}\textbf{96.73} \textcolor{red}{\scriptsize ($-$2.2\%)} & \cellcolor{blue!25}\textbf{90.87} \textcolor{red}{\scriptsize ($-$6.0\%)}
& \cellcolor{blue!25}\textbf{95.77} \textcolor{red}{\scriptsize ($-$3.1\%)} & \cellcolor{blue!25}\textbf{91.47} \textcolor{red}{\scriptsize ($-$5.4\%)} \\
\bottomrule
\end{tabular}}
\caption{\textbf{Robustness evaluation.} “No Attack” denotes performance on original text, and values in parentheses indicate the relative percentage change $\left(\left(\text{Attack} - \text{No attack}\right) / \text{No attack}\times 100\%\right)$ compared to this baseline (positive values indicate performance gains under attack). Best and second-best results are highlighted in \colorbox{blue!25}{\textbf{darker}} and \colorbox{blue!10}{\underline{lighter}} blue, respectively.}
\label{tab:attack_robustness}
\vspace{-0.1in}
\end{table*}

\vspace{-0.1in}
\paragraph{Robustness to Adversarial Attacks.}
We then evaluate robustness under adversarial manipulations, including three paraphrasing attacks (i.e., \textit{Polish}, \textit{Back Translation}, \textit{DIPPER}) and three perturbation attacks (i.e., \textit{Character}, \textit{Word}, and \textit{Character+Word}), following the previous setup~\citep{bao2023fast}. As shown in Table~\ref{tab:attack_robustness}, paraphrasing attacks substantially degrade all detectors, highlighting the challenge of semantic-preserving rewrites. Nevertheless, \texttt{S2D} is highly competitive and often achieves the best or near-best performance, whereas baseline methods experience severe degradation. This robustness stems from \texttt{S2D}’s representation reshaping, which enforces a more discriminative feature space and preserves separability even under attacks. In contrast, perturbation-based attacks have a much milder impact, as they primarily introduce local or surface-level noise.
\paragraph{Robustness to Text Length.}
Our test set has an average length of approximately $267$ tokens, with $95\%$ of the samples shorter than $435$ tokens. To examine whether detection performance depends on input length beyond this typical setting, we further analyze the impact of text length. As shown in Figure~\ref{fig:text_length}, performance improves monotonically with increasing input length. While the strongest baseline also benefits from longer context, \texttt{S2D} consistently maintains superior performance.

\vspace{-0.1in}
\paragraph{Comparison with Watermarking Methods.}
An interesting question is whether \texttt{S2D} can outperform existing watermarking methods. Although this comparison is inherently unfair---since watermarking methods actively embed detectable signals while \texttt{S2D} is a passive detector---it helps position \texttt{S2D} relative to these approaches.\footnote{To reflect a realistic setting, \texttt{S2D} is trained once on human-written and non-watermarked LLM-generated text, without retraining for specific watermarking schemes, and is directly applied at evaluation.}

\begin{figure}[t]
\centering
\begin{subfigure}{0.32\linewidth}
\centering
\includegraphics[width=\linewidth]{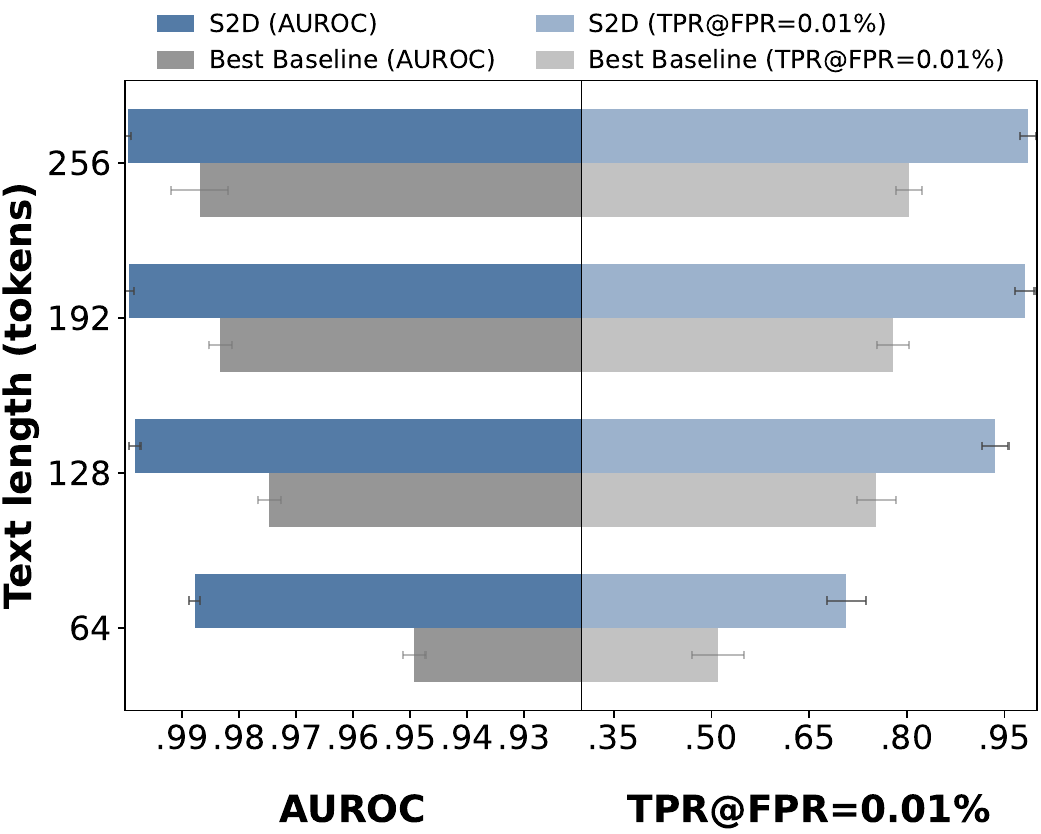}
\caption{Robustness to text length}
\label{fig:text_length}
\end{subfigure}
\begin{subfigure}{0.32\linewidth}
\centering
\includegraphics[width=\linewidth]{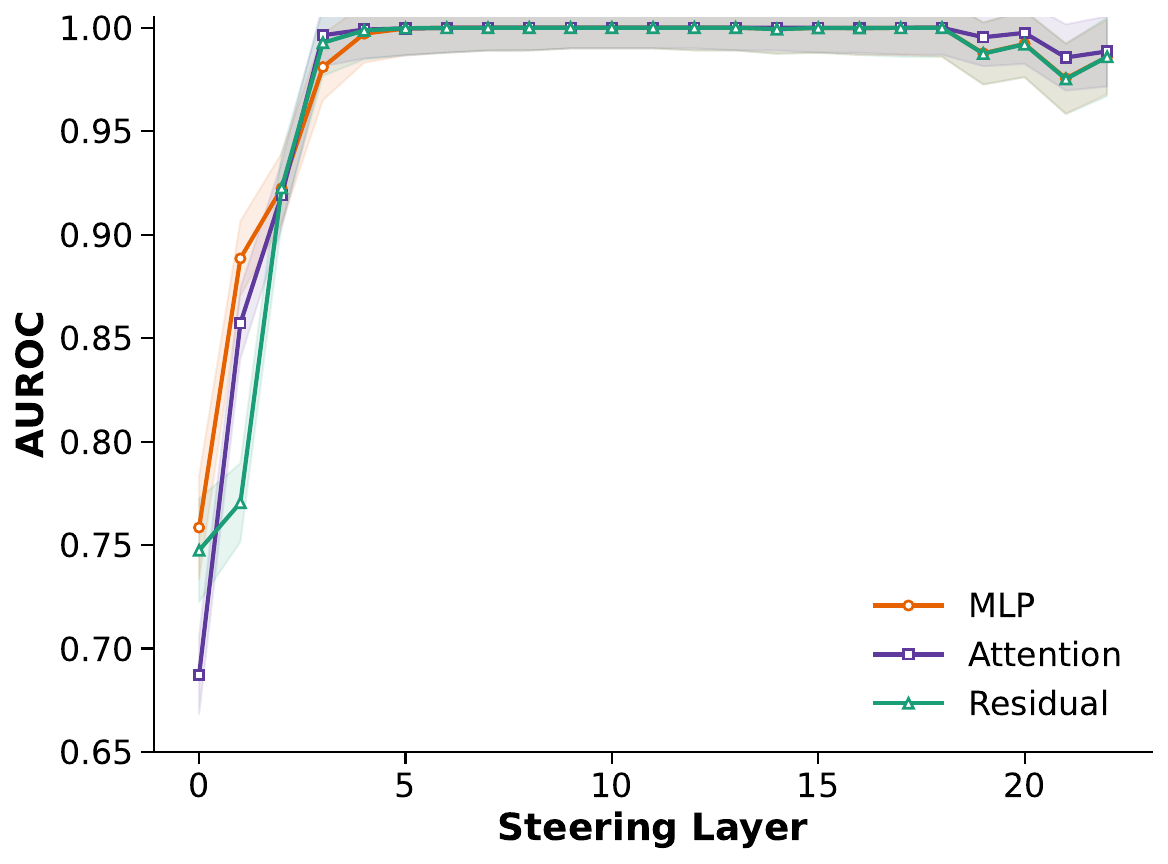}
\caption{Impact of steering position}
\label{fig:steer_position}
\end{subfigure}
\hfill
\begin{subfigure}{0.32\linewidth}
\centering
\includegraphics[width=\linewidth]{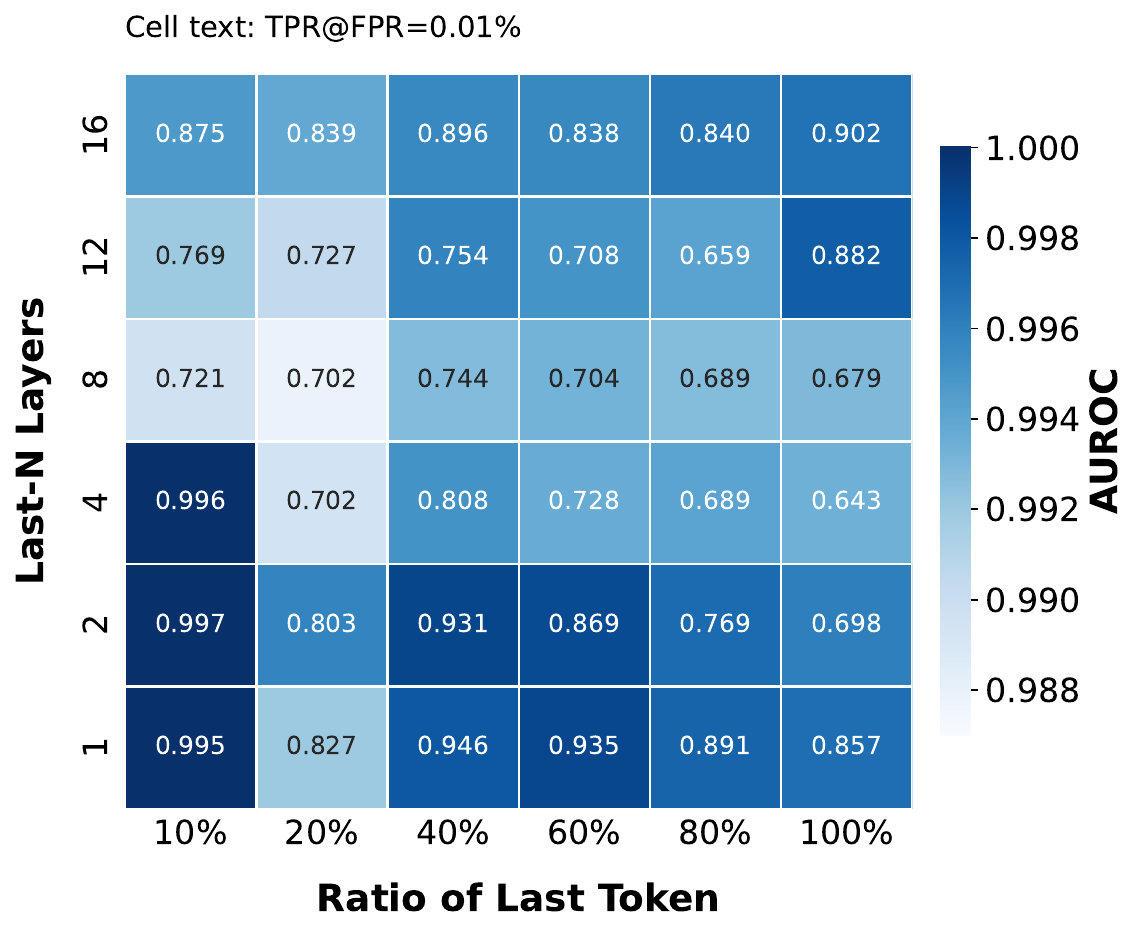}
\caption{Token ratio vs. Layer index}
\label{fig:ratio_number}
\end{subfigure}
\caption{
    \textbf{Analysis of \texttt{S2D} performance.} 
    (a) Comparison of detection stability across varying input lengths.
    (b) Detection performance across steering layers, showing that intermediate layers consistently achieve the best performance.
    (c) Performance heatmap as a function of last-token selection ratio and the number of aggregated layers. 
}
\label{fig:detailed_analysis}
\end{figure}


Table~\ref{tab:watermark_length_temp} compares \texttt{S2D} with three watermarking methods, namely Gumbel-max~\cite{aaronson2023watermarking}, Green-red~\cite{kirchenbauer2023watermark}, and SynthID~\cite{dathathri2024scalable}, using their default detectors, which have full knowledge of the watermarking mechanisms. Interestingly, \texttt{S2D} achieves stronger performance at lower temperature ($T=0.3$), where watermark signals are weaker due to reduced generation entropy. In contrast, watermarking methods perform better at higher temperatures, where the embedded signals are more pronounced.
The performance of \texttt{S2D} in this setting remains reasonable: even without knowledge of the watermarking scheme, it still captures distributional artifacts induced by the watermarking generation process. These artifacts are more pronounced in low-entropy settings, where watermark signals are weak, while at higher entropy levels, stronger embedded watermark signals favor watermark-specific detectors.

\begin{table}[thp]
\vspace{-0.1in}
\centering
\small
\setlength{\tabcolsep}{3.5pt} 
\renewcommand{\arraystretch}{1.3}
\resizebox{\textwidth}{!}{%
\begin{tabular}{c c ccc ccc ccc}
\toprule
\multirow{3}{*}{\bf Temp.} & \multirow{3}{*}{\bf Method}
& \multicolumn{3}{c}{\bf 64 tokens}
& \multicolumn{3}{c}{\bf 128 tokens} 
& \multicolumn{3}{c}{\bf 256 tokens} \\
\cmidrule(lr){3-5} \cmidrule(lr){6-8} \cmidrule(lr){9-11}
& 
& Gumbel & Green-red & SynthID
& Gumbel & Green-red & SynthID
& Gumbel & Green-red & SynthID \\
& & \multicolumn{9}{c}{\textit{AUROC / TPR@1\%FPR (\%)}} \\
\midrule

\multirow{2}{*}{0.3}
& Watermark Det.$^\dagger$
& \bf 99.9 / 98.0 &  82.0 / 12.0  & 81.8 / 15.9
& \bf 99.9 / 99.7 &  \textbf{99.6} / \textbf{95.3}  & 84.8 / 28.0
& 99.9 / {\bf 99.9} &  99.8 /{\bf 99.4} & 84.5 / 34.0 \\

& \texttt{S2D} (ours)
& 77.6 / 20.0 & \bf 96.7 / 60.7 & \bf 85.1 / 26.0
& 96.6 / 62.8 & 98.4 / 81.3 & \bf 96.5 / 69.0
& {\bf 100} / 98.5 & {\bf 99.9} / 98.6 & \bf 99.9 / 98.1 \\
\midrule

\multirow{2}{*}{0.7}
& Watermark Det.$^\dagger$
& \bf 99.9 / 97.0 &  80.2 / 9.3 & \bf 99.3 / 89.8
& \bf 99.9 / 99.8 &  \bf 99.6 / 95.7  & \bf 99.7 / 95.8
& \bf 99.9 / 99.9 &  \bf 100 / 99.8  & \bf 99.5 / 97.5 \\

& \texttt{S2D} (ours)
& 65.4 / 12.3 & \bf 90.5 / 33.0 & 70.0 / 10.9
& 89.3 / 44.7 & 89.0 / 41.7 & 84.2 / 32.7
& 99.6 / 91.7 & 95.9 / 70.4  & 97.6 / 79.0 \\

\bottomrule
\multicolumn{11}{l}{\footnotesize $^\dagger$ The defaulted detectors provided by the original watermarking algorithms.}
\end{tabular}}
\vspace{4pt}
\caption{\textbf{Detection performance across different text lengths and temperatures.} Comparison with default watermark detectors. \textbf{Bold} indicates better performance within each setting.}
\label{tab:watermark_length_temp}
\vspace{-0.1in}
\end{table}

\vspace{-0.1in}
\subsection{Ablation Study}
\label{understanding_s2d}

\paragraph{Importance of Steering.}
We assess the role of steering by comparing \texttt{S2D} with a variant without steering, which does not incorporate the steering vector and instead estimates the vMF parameters directly from training representations. As shown in Figure~\ref{fig:main_analysis}, \texttt{S2D w/o steering} exhibits substantially weaker separation between human- and LLM-generated texts, with significant overlap between the two distributions. This highlights the importance of incorporating the steering vector.

\begin{figure}[t!]
\vspace{-8pt}
\centering
\begin{minipage}[t]{0.58\textwidth}
\vspace{0pt}
\centering
\begin{subfigure}[t]{0.49\linewidth}
\centering
\includegraphics[width=\linewidth]{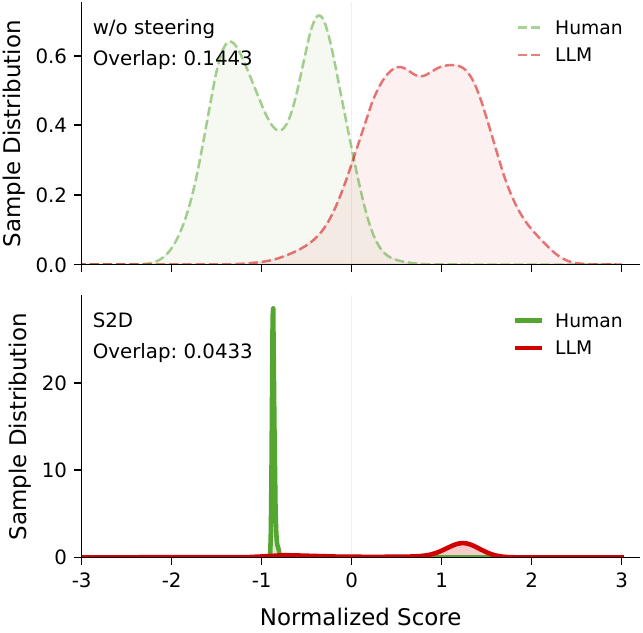}
\end{subfigure}
\hfill
\begin{subfigure}[t]{0.49\linewidth}
\centering
\includegraphics[width=\linewidth]{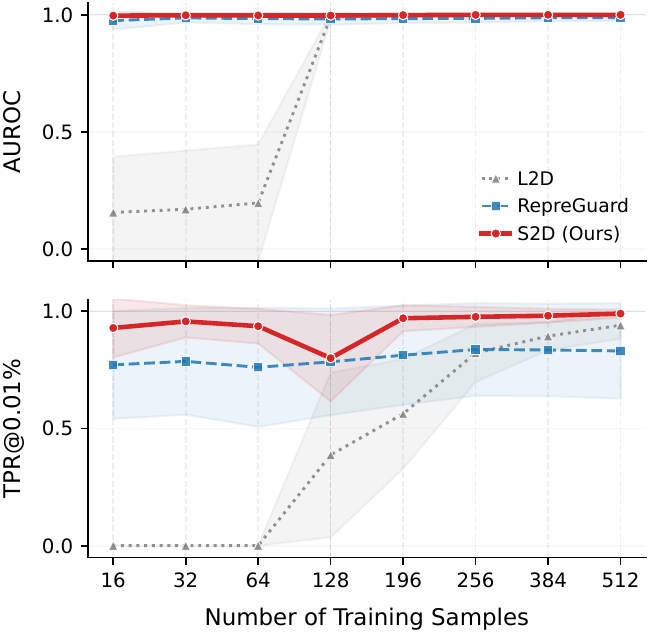}
\end{subfigure}
\vspace{-4pt}
\captionof{figure}{\textbf{Detection analysis.} 
Left: Steering leads to better separability.
Right: Detection performance across training sizes.
Full results are in Figures \ref{fig_score_distribution_grid} and \ref{fig:train_shots} in Appendix.
}
\label{fig:main_analysis}
\end{minipage}%
\hfill
\begin{minipage}[t]{0.4\textwidth}
\vspace{4pt}
\centering
\footnotesize
\setlength{\tabcolsep}{2.5pt}
\renewcommand{\arraystretch}{1.1}
\resizebox{\linewidth}{!}{
\begin{tabular}{lccc}
\toprule
\textbf{Observer Model} & \textbf{AUROC} & \textbf{TPR@1\%} & \textbf{TPR@.01\%} \\
\midrule
Llama-3.1-8B        & 99.62 \tiny$\pm$ 0.46 & 99.05 \tiny$\pm$ 0.60 & 97.98 \tiny$\pm$ 1.37 \\
Mistral-7B-v0.3     & 68.52 \tiny$\pm$ 0.94 & 30.53 \tiny$\pm$ 5.31 & 23.28 \tiny$\pm$ 6.45 \\
GPT-Neo-2.7B        & 82.38 \tiny$\pm$ 2.94 & 32.65 \tiny$\pm$ 8.91 & 11.33 \tiny$\pm$ 5.67 \\
OPT-2.7B            & 99.86 \tiny$\pm$ 0.13 & 98.30 \tiny$\pm$ 1.79 & 84.50 \tiny$\pm$ 8.24 \\
Qwen2.5-7B          & 99.93 \tiny$\pm$ 0.13 & 99.58 \tiny$\pm$ 0.65 & 98.65 \tiny$\pm$ 1.69 \\
Falcon-7B           & 99.96 \tiny$\pm$ 0.05 & 99.40 \tiny$\pm$ 0.88 & 98.28 \tiny$\pm$ 1.83 \\
Falcon-7B-Inst.     & 99.96 \tiny$\pm$ 0.05 & 99.18 \tiny$\pm$ 1.45 & 95.90 \tiny$\pm$ 2.68 \\
Gemma-2-9B          & 76.54 \tiny$\pm$ 14.6 & 19.25 \tiny$\pm$ 5.37 & 3.05 \tiny$\pm$ 2.89  \\
Gemma-2-9B-Inst.    & 79.09 \tiny$\pm$ 20.3 & 37.73 \tiny$\pm$ 26.6 & 18.90 \tiny$\pm$ 14.1 \\
\bottomrule
\end{tabular}
}
\captionof{table}{\textbf{Observer stability.} Detection performance (mean $\pm$ std), averaged across four training datasets. Full results are in Table~\ref{tab:observer_models_strict_sh} in Appendix.}
\label{tab:observer_avg}
\end{minipage}
\vspace{-8pt}
\end{figure}



\vspace{-0.1in}
\paragraph{Steering Position.}
We then analyze the impact of steering position. As shown in Figure~\ref{fig:steer_position}, performance is primarily determined by the layer at which the steering vector is injected: intermediate layers yield the best results, while early-layer injection is suboptimal. In contrast, variation across insertion points within a transformer block (namely, residual stream, attention output, and MLP output) is minimal, indicating that layer depth is the dominant factor.

\vspace{-0.1in}
\paragraph{Token Selection and Layer Aggregation.}
We analyze the effect of the last-token ratio $K$ and the number of aggregated layers $N$ (introduced in Section \ref{sec:representation}). As shown in Figure~\ref{fig:ratio_number}, performance exhibits a non-monotonic pattern across both factors. Strong performance is observed in two regimes: the bottom-left region, which uses a small fraction of tokens from a few top layers, and the top-right region, where deeper aggregation is combined with larger token ratios. In contrast, intermediate configurations perform worse.
This pattern suggests a trade-off between signal concentration and aggregation. Detection-relevant signals are concentrated in the final layers and later tokens. While incorporating more layers or tokens can be beneficial when aggregation is sufficiently large and helps stabilize the representation, intermediate settings tend to introduce noise without sufficient complementary signal, leading to degraded performance.

\vspace{-0.1in}
\paragraph{Effects of Different Observer Models.}
As shown in Table~\ref{tab:observer_avg}, \texttt{S2D} achieves strong performance across various observer models, with \textit{Falcon-7B}, \textit{Qwen2.5-7B}, and \textit{Llama-3.1-8B} yielding near-perfect AUROC and high TPR at low FPRs. Notably, \texttt{S2D} is not strictly dependent on model scale; for instance, the lightweight \textit{OPT-2.7B} yields highly competitive results. While performance varies with certain models like \textit{Gemma-2}, the overall success across multiple model families demonstrates \texttt{S2D}’s robustness and its effectiveness in leveraging latent representations for precise detection.

\vspace{-0.1in}
\paragraph{Shots of Training Dataset.}
Finally, we investigate the impact of training set size on detection performance using \texttt{S2D}, RepreGuard, and L2D, since these three methods yielded the best overall results in the preceding experiments. As shown in Figure~\ref{fig:main_analysis} (right), \texttt{S2D} achieves strong performance even in the low-shot regime, while RepreGuard shows moderate sensitivity, and L2D degrades substantially with limited data. This indicates that \texttt{S2D} can leverage intrinsic differences between human-written and LLM-generated text in the representation space, with training mainly serving to reduce distributional overlap. Detailed results are provided in the Appendix~\ref{app_shots_train}.

\vspace{-0.10in}
\section{Discussion}\label{sec:discussion}
\vspace{-0.10in}

We present \texttt{S2D}, a representation-based framework for detecting LLM-generated text. By injecting a lightweight steering vector into hidden representations, \texttt{S2D} improves class separability and formulates detection as a hypothesis test. The resulting detector provides Type-I error control and a bound on excess Type-II error. Experiments show that \texttt{S2D} performs well across ID and OOD settings and remains robust to adversarial perturbations, paraphrasing, varying text lengths, and different observer models. It also offers a favorable performance-efficiency trade-off compared with existing methods.

An important direction for future work is to relax the vMF modeling assumption. A single vMF model may not fully capture latent geometry across domains, which may hurt detection performance. Promising extensions include mixture vMF models and methods that leverage target-domain information. Another direction is to explore more flexible steering mechanisms, such as using multiple vectors or going beyond purely additive steering, to better respect the geometry of the representation space. We leave these directions for future work.

\bibliographystyle{unsrtnat}   
\bibliography{ref}  

\newpage
\appendix

\section{Algorithm}\label{app_algo}
\begin{algorithm}[H]
\caption{Overall training pipeline for \texttt{S2D}}
\label{alg:vmf_latent_steering}
\KwIn{
Frozen observer model $f_\theta$, training set $\mathcal{S}_{\text{train}}$, null calibration set $\mathcal{S}_{\text{cal}}$ (human-written text only);
steering layer $\ell_s$; vMF concentration parameter $\kappa$; 
EMA coefficient $\rho$; learning rate $\eta$; epochs $E$; batch size $B$.
}
\KwOut{
Steering vector $\mathbf{v}$, class mean directions $\widehat{\boldsymbol{\mu}}_0,\widehat{\boldsymbol{\mu}}_1$, and calibrated threshold $\widehat{\tau}$.
}
\BlankLine
\textbf{Initialize:} $\mathbf{v}\leftarrow \mathbf{0}\in\mathbb{R}^d$, and uniformly sample $\widehat{\boldsymbol{\mu}}_c \in \mathbb{S}^{d-1}$ for $c\in\{0,1\}$\;

\BlankLine
\For{$e=1$ \KwTo $E$}{
  Shuffle $\mathcal{S}_{\text{train}}$ and divide into mini-batches\;
  \For{each mini-batch $\mathcal{B} \subset \mathcal{S}_{\text{train}}$}{
    \textbf{(1) Forward with steering:} For all $x\in\mathcal{B}$, inject $\mathbf{v}$ at layer $\ell_s$:
    \[
    \tilde{h}_{\ell_s,t}(x) \leftarrow h_{\ell_s,t}(x)+  \mathbf{v},\quad \text{for all valid token position $t$}.
    \]
    Extract the steered representation $f_{\theta,\mathbf{v}}(x) \in \mathbb{S}^{d-1}$ as described in Section~\ref{sec_repre_ext}\;

    \textbf{(2) vMF Loss:} Compute batch log-likelihood using the vMF posterior:
    \[
    \mathcal{L}_{\mathcal{B}}(\mathbf{v}) \leftarrow \frac{1}{|\mathcal{B}|}\sum_{(x,y)\in\mathcal{B}}\log p(y\mid f_{\theta,\mathbf{v}}(x), \widehat{\boldsymbol{\mu}}_0, \widehat{\boldsymbol{\mu}}_1).
    \]

    \textbf{(3) Steering Update:} Take a gradient update step: $\mathbf{v} \leftarrow \mathbf{v} + \eta\,\nabla_{\mathbf{v}} \mathcal{L}_{\mathcal{B}}(\mathbf{v})$\;

    \textbf{(4) Mean Directions Update:} For $c\in\{0,1\}$, compute the batch mean representation $\bar{\mathbf{z}}_c(\mathbf{v})$ and apply EMA:
    \[
    \widehat{\boldsymbol{\mu}}_c \leftarrow \frac{(1-\rho)\widehat{\boldsymbol{\mu}}_c + \rho\,\bar{\mathbf{z}}_c(\mathbf{v})}{\|(1-\rho)\widehat{\boldsymbol{\mu}}_c + \rho\,\bar{\mathbf{z}}_c(\mathbf{v})\|},
    \]
    where $\bar{\mathbf{z}}_c(\mathbf{v}) = \frac{1}{|\mathcal{B}_c|}\sum_{x\in \mathcal{B}_c} f_{\theta,\mathbf{v}}(x)$, and $\mathcal{B}_c = \{(x, y) \in \mathcal{B} : y=c\}$ denotes the subset of samples in the mini-batch belonging to class $c$, with $\mathcal{B}=\mathcal{B}_0 \cup \mathcal{B}_1$.
  }
}


\BlankLine
\textbf{Threshold Calibration:}
\For{each $x \in \mathcal{S}_{\text{cal}}$}{
  Compute the likelihood-ratio score $\widehat{\mathcal{S}}(x) = \kappa(\widehat{\boldsymbol{\mu}}_1 - \widehat{\boldsymbol{\mu}}_0)^\top f_{\theta,\mathbf{v}}(x)$\;
}
Set the threshold $\widehat{\tau}$ using $\mathcal{S}_{\text{cal}}$ according to the chosen calibration scheme\;

\BlankLine
\Return{$(\mathbf{v},\widehat{\boldsymbol{\mu}}_0,\widehat{\boldsymbol{\mu}}_1, \widehat{\tau}$)}

\end{algorithm}

\section{Derivation of the Discriminative Posterior Probability}
\label{app:derivation}
In this section, we derive the class posterior probability 
$p(y_i \mid f_{\theta, \mathbf{v}}(x_i), \boldsymbol{\mu}_0, \boldsymbol{\mu}_1)$ 
under the vMF model. 
In the vMF model, given a text $x_i$ with label $y_i = c \in \{0,1\}$, its steered representation $f_{\theta, \mathbf{v}}(x_i)$ follows the following distribution on the unit hypersphere $\mathbb{S}^{d-1}$, with likelihood
\begin{equation*}
    p(f_{\theta, \mathbf{v}}(x_i) \mid y_i = c)
    =
    C_d(\kappa)\exp\left(\kappa \boldsymbol{\mu}_c^\top f_{\theta, \mathbf{v}}(x_i)\right),
\end{equation*}
where $C_d(\kappa) = \frac{\kappa^{d/2-1}}{(2\pi)^{d/2} I_{d/2-1}(\kappa)}$ is the normalization constant and $\boldsymbol{\mu}_c$ is the class mean direction. We assume a shared concentration parameter $\kappa$ across classes and a uniform prior $p(y_i = 0) = p(y_i = 1) = \frac{1}{2}$.
By Bayes' theorem,
\begin{equation*}
    p(y_i \mid f_{\theta, \mathbf{v}}(x_i))
    =
    \frac{p(f_{\theta, \mathbf{v}}(x_i) \mid y_i)p(y_i)}
    {\sum_{c \in \{0,1\}} p(f_{\theta, \mathbf{v}}(x_i) \mid y_i = c)p(y_i = c)}.
\end{equation*}
Substituting the vMF likelihood and the uniform prior gives
\begin{equation*}
    p(y_i \mid f_{\theta, \mathbf{v}}(x_i))
    =
    \frac{C_d(\kappa)\exp\left(\kappa \boldsymbol{\mu}_{y_i}^\top f_{\theta, \mathbf{v}}(x_i)\right)\cdot \frac{1}{2}}
    {\sum_{c \in \{0,1\}} C_d(\kappa)\exp\left(\kappa \boldsymbol{\mu}_c^\top f_{\theta, \mathbf{v}}(x_i)\right)\cdot \frac{1}{2}}.
\end{equation*}

Since $C_d(\kappa)$ and the prior $\frac{1}{2}$ are identical across classes, they cancel out, yielding the softmax form used in Phase I:
\begin{equation*}
    p\left(y_i \mid f_{\theta, \mathbf{v}}(x_i), \boldsymbol{\mu}_0, \boldsymbol{\mu}_1\right)
    =
    \frac{\exp\left(\kappa \boldsymbol{\mu}_{y_i}^\top f_{\theta, \mathbf{v}}(x_i)\right)}
    {\sum_{c \in \{0,1\}} \exp\left(\kappa \boldsymbol{\mu}_c^\top f_{\theta, \mathbf{v}}(x_i)\right)}.
\end{equation*}
This shows that maximizing the conditional log-likelihood under the shared-$\kappa$ vMF model is equivalent to training a logistic-style classifier on the steered directional representations.

\section{Alternative Threshold Calibration via the Youden Index}
\label{app:threshold_calibration}

In standard benchmarking settings, where a balanced trade-off between the true positive rate (TPR) and false positive rate (FPR) is desired, one may select the threshold by maximizing the Youden index~\citep{youden1950index} on a calibration set. Let the calibration sample be given by $\mathcal S_{\mathrm{cal}} = \{(x_i, y_i)\}_{i=1}^{n_2}$, where $y_i \in \{0,1\}$ indicates whether $x_i$ is LLM-generated ($y_i=1$) or human-written ($y_i=0$). For a threshold $\tau$, the empirical TPR and FPR are defined as
$$
\widehat{\operatorname{TPR}}(\tau):=\frac{1}{n_+}\sum_{i: y_i=1}\mathbbm{1}\bigl(\widehat{\mathcal S}(x_i)\ge \tau\bigr),
\qquad
\widehat{\operatorname{FPR}}(\tau):=\frac{1}{n_-}\sum_{i: y_i=0}\mathbbm{1}\bigl(\widehat{\mathcal S}(x_i)\ge \tau\bigr),
$$
where $n_+ = \sum_{i=1}^{n_2}\mathbbm{1}(y_i=1)$ and $n_- = \sum_{i=1}^{n_2}\mathbbm{1}(y_i=0)$. The Youden-based threshold is then given by
$$
\widehat{\tau}_{\mathrm{Y}}
:=
\argmax_{\tau}
\left(
\widehat{\operatorname{TPR}}(\tau)
+ 1
- \widehat{\operatorname{FPR}}(\tau)
\right).
$$
Equivalently, this criterion maximizes $\widehat{\operatorname{TPR}}(\tau)-\widehat{\operatorname{FPR}}(\tau)$, thereby favoring thresholds that achieve high detection power while maintaining a low FPR.

\paragraph{Empirical comparison of Type-I error control and power.}
To further understand the practical behavior of different calibration strategies, we compare the threshold defined in \eqref{type_1_error}, denoted by $\widehat{\tau}_\alpha$, with the Youden-based threshold $\widehat{\tau}_{\mathrm{Y}}$ in terms of both their realized Type-I error and empirical power.

We use \textit{meta-llama/Llama-3.1-8B} as the observer model. Both the training set and the calibration set are drawn from a mixed dataset comprising outputs from four LLMs together with the corresponding human-written texts, while evaluation is conducted on four test sets generated by \textit{ChatGPT-3.5-Turbo}, \textit{Llama-2-70B}, \textit{Google-PaLM}, and \textit{Claude-Instant}, respectively. For each calibration strategy, the threshold is computed using the same mixed calibration set, and we report the realized Type-I error and empirical detection power. Since $\widehat{\tau}_{\mathrm{Y}}$ does not depend on a target level $\alpha$, it yields a single operating point once the calibration set is fixed, whereas $\widehat{\tau}_\alpha$ explicitly adapts to the desired Type-I error level. We focus on the representative operating regime $\alpha = 1\%$.

\begin{table}[h]
\centering
\small
\renewcommand{\arraystretch}{1.15}
\setlength{\tabcolsep}{4pt}
\caption{Realized type-I error (\%) and empirical detection power (\%) under different calibration strategies at target level $\alpha=1\%$.}
\label{tab:type1_power_control}
\begin{tabular}{lcccccccc}
\toprule
\multirow{2}{*}{\textbf{Method}} 
& \multicolumn{2}{c}{ChatGPT-3.5-Turbo}
& \multicolumn{2}{c}{Llama-2-70B}
& \multicolumn{2}{c}{Google-PaLM}
& \multicolumn{2}{c}{Claude-Instant} \\
\cmidrule(lr){2-3} \cmidrule(lr){4-5} \cmidrule(lr){6-7} \cmidrule(lr){8-9}
& Type-I Error & Power
& Type-I Error & Power
& Type-I Error & Power
& Type-I Error & Power \\
\midrule
$\widehat{\tau}_{\mathrm{Y}}$ & 0.20 & 99.70 & 0.00 & 99.50 & 0.20 & 85.70 & 0.20 & 99.80 \\
$\widehat{\tau}_\alpha$ & 0.90 & 100.00 & 0.80 & 99.90 & 0.80 & 92.10 & 1.30 & 99.90 \\
\bottomrule
\end{tabular}
\end{table}
As shown in Table~\ref{tab:type1_power_control}, the threshold $\widehat{\tau}_\alpha$ achieves realized Type-I errors close to the target level of $1\%$ across all test sets, with FPRs ranging from $0.80\%$ to $1.30\%$, which is consistent with the finite-sample guarantee in Theorem~\ref{thm_1}. At the same time, it maintains strong empirical power, reaching at least $99.90\%$ on three generators and substantially improving performance on \textit{Google-PaLM}. By contrast, the Youden-based threshold $\widehat{\tau}_{\mathrm{Y}}$ yields a single fixed operating point and does not explicitly target a prescribed Type-I error level. In our experiments, it behaves conservatively, with realized FPRs between $0.00\%$ and $0.20\%$, all well below the target level of $1\%$. Although this still gives strong detection performance on some generators, it also reduces power by imposing a stricter threshold than necessary; this is most evident on \textit{Google-PaLM}, where the TPR decreases from $92.10\%$ under $\widehat{\tau}_\alpha$ to $85.70\%$ under $\widehat{\tau}_{\mathrm{Y}}$. Overall, these results show that while $\widehat{\tau}_{\mathrm{Y}}$ is suitable for a fixed balanced operating point, $\widehat{\tau}_\alpha$ is more flexible with respect to the desired false positive budget and can yield better power by avoiding unnecessarily conservative thresholding.

\section{Experiment Details}
\label{sec:experiment_details}

This section outlines the experimental setup and evaluation criteria.

\paragraph{Evaluation Metrics.}
To comprehensively evaluate the discriminative capability of each detector under different security requirements, we adopt the Area Under the Receiver Operating Characteristic curve (AUROC) and the True Positive Rate (TPR) at fixed False Positive Rate (FPR) thresholds, namely TPR@FPR=1\% and TPR@FPR=0.01\%. Since the empirical ROC curve is defined over discrete test samples, the TPR values at these specific operating points are obtained via linear interpolation. This treatment enables consistent and fine-grained comparisons across methods, particularly in the high-precision regime where false positives must be tightly controlled.

\paragraph{Setup.}
To ensure comparability, we standardize backbone models across all detectors. We use \textit{meta-llama/Llama-3.1-8B} as the base model for scoring, rewriting, and representation extraction. For training-free methods, the same open-source LLM is used as a surrogate to compute statistics (e.g., Likelihood, LRR, FDGPT). In Binoculars, \textit{meta-llama/Llama-3.1-8B} serves as the observer and \textit{meta-llama/Llama-3.1-8B-Instruct} as the performer. For training-based approaches (e.g., RoBERTa), we report both models fine-tuned on our data and publicly available checkpoints\footnote{\url{https://huggingface.co/openai-community/roberta-large-openai-detector}}. For rewrite-based methods, rewrites are generated by \textit{meta-llama/Llama-3.1-8B-Instruct} with $4$ samples, following~\citep{zhou2026learn}.

For the experiments in Sections~\ref{sec_rob_ana} and~\ref{understanding_s2d} (excluding the ID/OOD experiments), the detectors are trained on a mixed dataset composed of outputs from four LLMs and evaluated under diverse settings. We consistently use \textit{meta-llama/Llama-3.1-8B} as the observer model, and \textit{meta-llama/Llama-3.1-8B-Instruct} for methods requiring rewriting. The rewriting process incurs a substantial computational cost, amounting to approximately 300 GPU hours on NVIDIA A100 80GB GPUs. For experiments comparing with watermarking methods, we employ \textit{Qwen/Qwen2.5-7B} as the base model to generate watermarked text using three different watermarking schemes. 

For our method, unless otherwise specified, we adopt a fixed configuration across all experiments. The steering vector is injected at layer $11$ through the residual connection. Representations are extracted from the last $8$ transformer layers, where we average the final $25\%$ tokens in each layer to obtain the representation. Training is conducted using AdamW~\citep{loshchilov2017decoupled} with a learning rate of $1\times10^{-3}$, batch size $8$, and $10$ epochs. The EMA decay for centroid updates is set to $0.9$, and we use a vMF concentration parameter $\kappa=2.5$. All hyperparameters are kept fixed across datasets and models without extensive tuning.

\section{Empirical Motivation for Representation Modeling}\label{app_repre_norm}

Our modeling choices for $f_{\theta}$ are supported by the following two empirical properties of hidden representations observed across models and domains:
\begin{enumerate}
    \item \textbf{Norm concentration.} By pooling hidden states from the final $20\%$ of tokens across the last $N=8$ layers, we find that the resulting $L_2$ norms are tightly concentrated for \textit{EleutherAI/GPT-J-6B}, \textit{Qwen/Qwen2.5-7B}, and \textit{meta-llama/Llama-3.1-8B} (Figure~\ref{fig_norm_distribution}). This supports a hyperspherical approximation, where semantic information is primarily encoded in direction rather than magnitude.
    
    \item \textbf{Directional unimodality.} On the unit sphere, the normalized representations exhibit a unimodal directional structure within each class (human or LLM), as evidenced by the distribution of $\widehat{\mu}^{\top} f_\theta(x)$ relative to the estimated class mean directions $\widehat{\mu}$ (Figure~\ref{fig_vmf_domain}). Moreover, the estimated concentration parameters $\widehat{\kappa}$ are stable across domains and remain similar between the human and LLM classes.
\end{enumerate}

These observations support the use of a von Mises--Fisher (vMF) model as a simple and effective approximation for the directional geometry, with a shared concentration parameter $\kappa$.

\begin{figure}[htbp]
    \centering
    \includegraphics[width=0.95\textwidth]{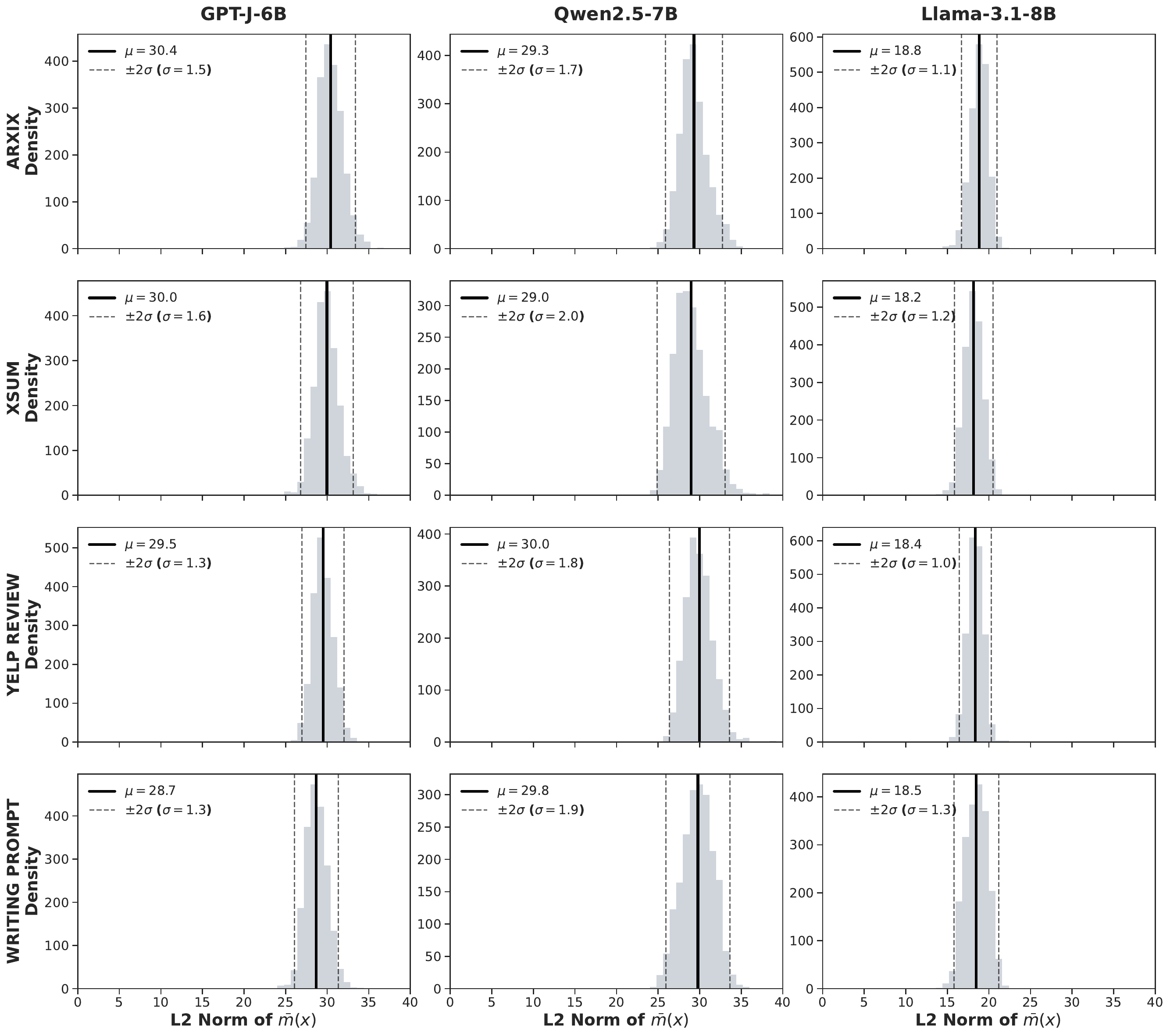}
\caption{
Empirical distributions of $L_2$ norms of representations obtained from the last 8 layers and the final 20\% of tokens, across different models and domains. Columns correspond to \textit{EleutherAI/GPT-J-6B}, \textit{Qwen/Qwen2.5-7B}, and \textit{meta-llama/Llama-3.1-8B}, while rows correspond to the Arxiv, XSum, Yelp, and Writing datasets. The solid and dashed red lines denote the mean and the $\pm 2\sigma$ intervals, respectively, with exact values reported in the upper-left legends.
}
    \label{fig_norm_distribution}
\end{figure}

\begin{figure}[htbp]
    \centering
    \includegraphics[width=0.95\linewidth]{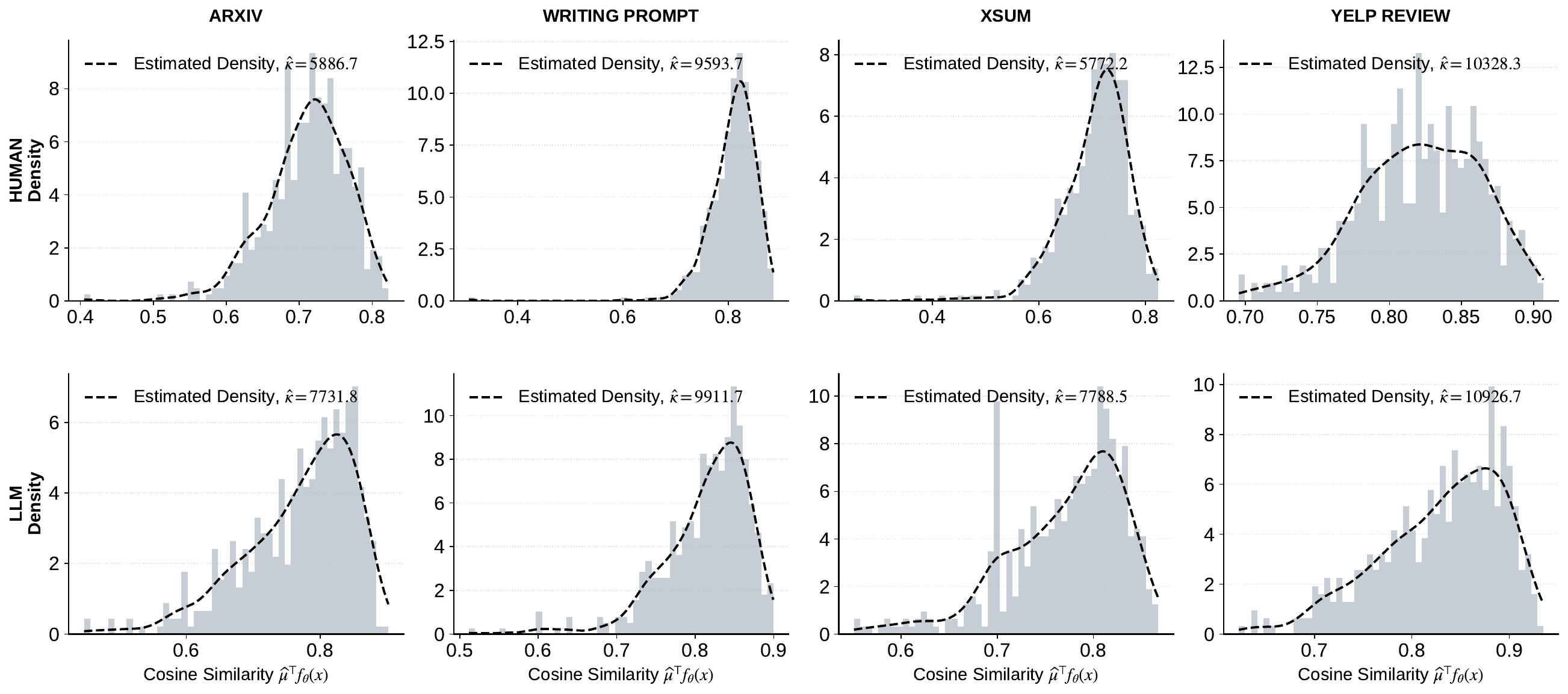}
\caption{
Empirical distributions of projected representations $\widehat{\mu}^\top f_\theta(x)$ across domains. Histograms show empirical frequencies, and dashed curves represent the corresponding estimated densities. The estimated concentration parameter $\kappa$ is reported for each case.
}
    \label{fig_vmf_domain}
\end{figure}

\section{Additional Experiment results}
\subsection{Score Distributions}\label{appendix_score_dist}
In this section, we visualize score distributions from four LLMs under different settings. We use \textit{meta-llama/Llama-3.1-8B} as the observer model. The training data is generated by \textit{GPT-3.5-Turbo}. The evaluated datasets are generated by \textit{GPT-3.5-Turbo}, \textit{Claude-Instant}, \textit{Google-PaLM}, and \textit{Llama-2-70B}.

In Figure~\ref{fig_score_distribution_grid}, the leftmost column shows scores from RepreGuard~\citep{chen2025repreguard}, the middle column corresponds to raw representations without steering, and the rightmost column shows scores produced by our method.
Although the first two approaches exploit representation-level separability, a non-negligible overlap between the two classes remains, which limits detection performance. This issue is particularly pronounced under OOD settings: RepreGuard exhibits $15.24\%$ and $17.25\%$ overlap on datasets generated by \textit{Claude-Instant} and \textit{Google-PaLM}, respectively, while \texttt{S2D} without steering yields $14.10\%$ and $18.27\%$. 
In contrast, our method achieves substantially clearer separation between human-written and LLM-generated text, even under OOD settings, leading to improved discriminability.

\begin{figure*}[t]
\centering
\resizebox{\textwidth}{!}{
\begin{tabular}{c c c c}
    & \textbf{RepreGuard} & \textbf{\texttt{S2D} w/o steering} & \textbf{\texttt{S2D} (ours)} \\

\raisebox{.20\height}{\rotatebox{90}{GPT-3.5-Turbo}} &
\includegraphics[width=0.3\textwidth]{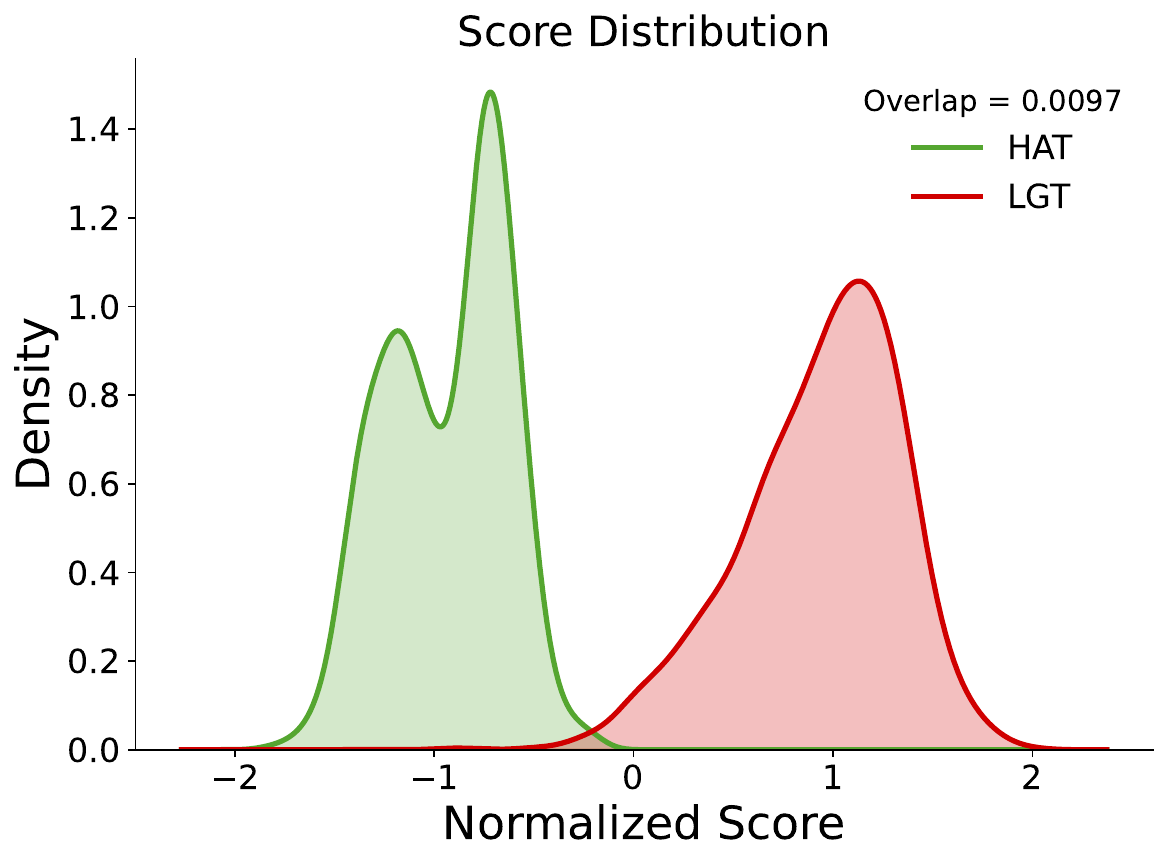} &
\includegraphics[width=0.3\textwidth]{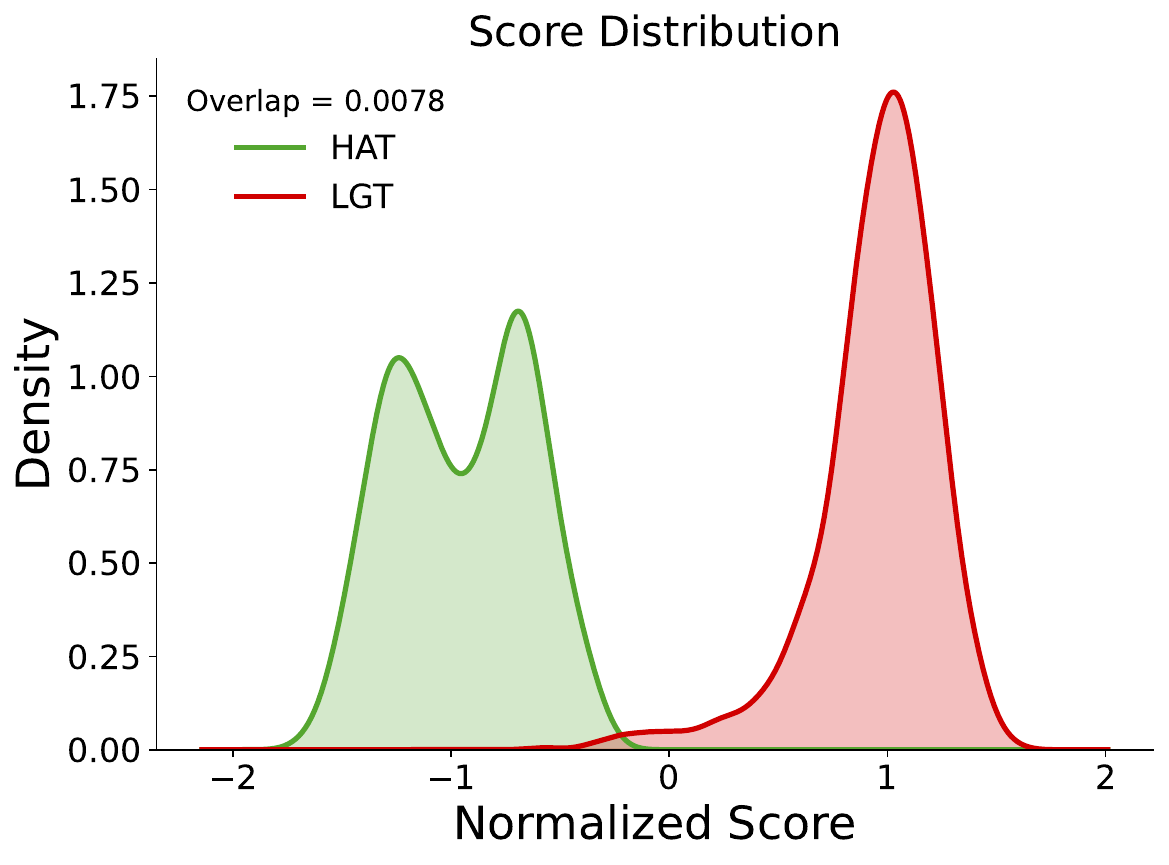} &
\includegraphics[width=0.3\textwidth]{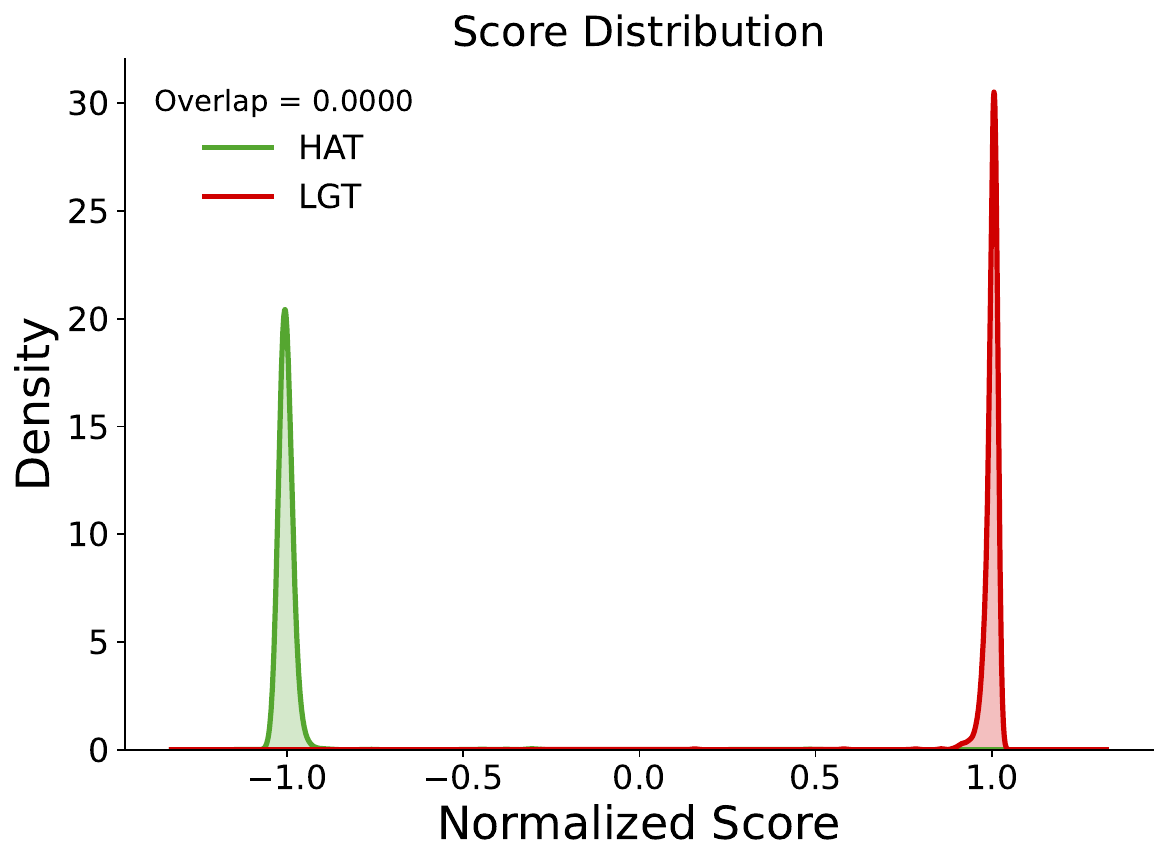} \\

\raisebox{.42\height}{\rotatebox{90}{Llama-2-70B}} &
\includegraphics[width=0.3\textwidth]{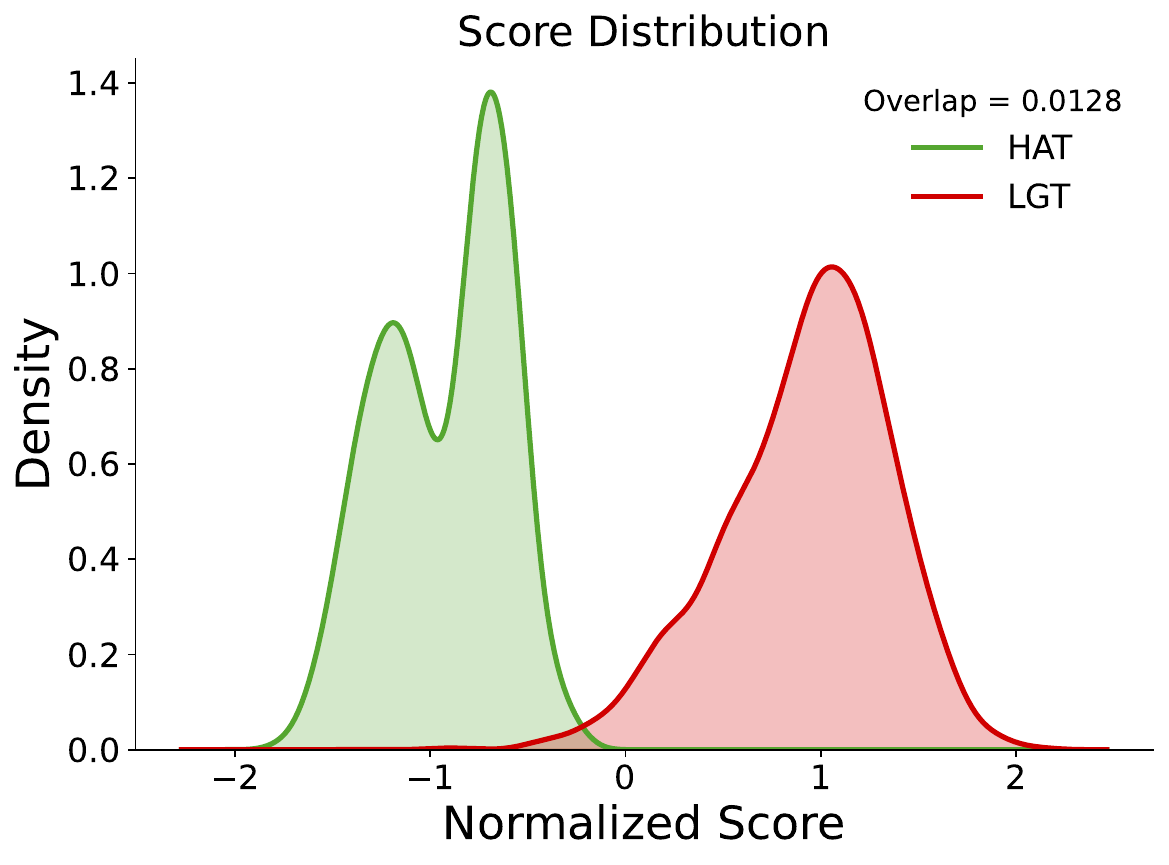} &
\includegraphics[width=0.3\textwidth]{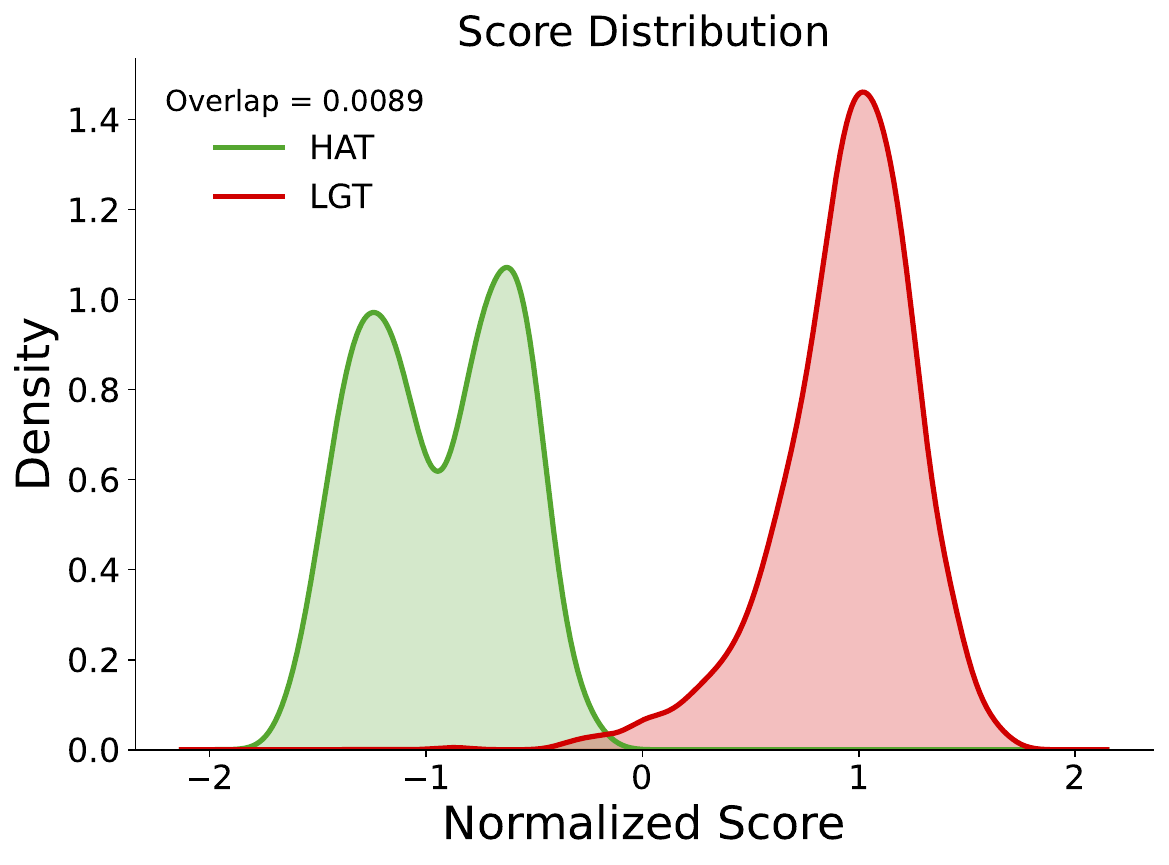} &
\includegraphics[width=0.3\textwidth]{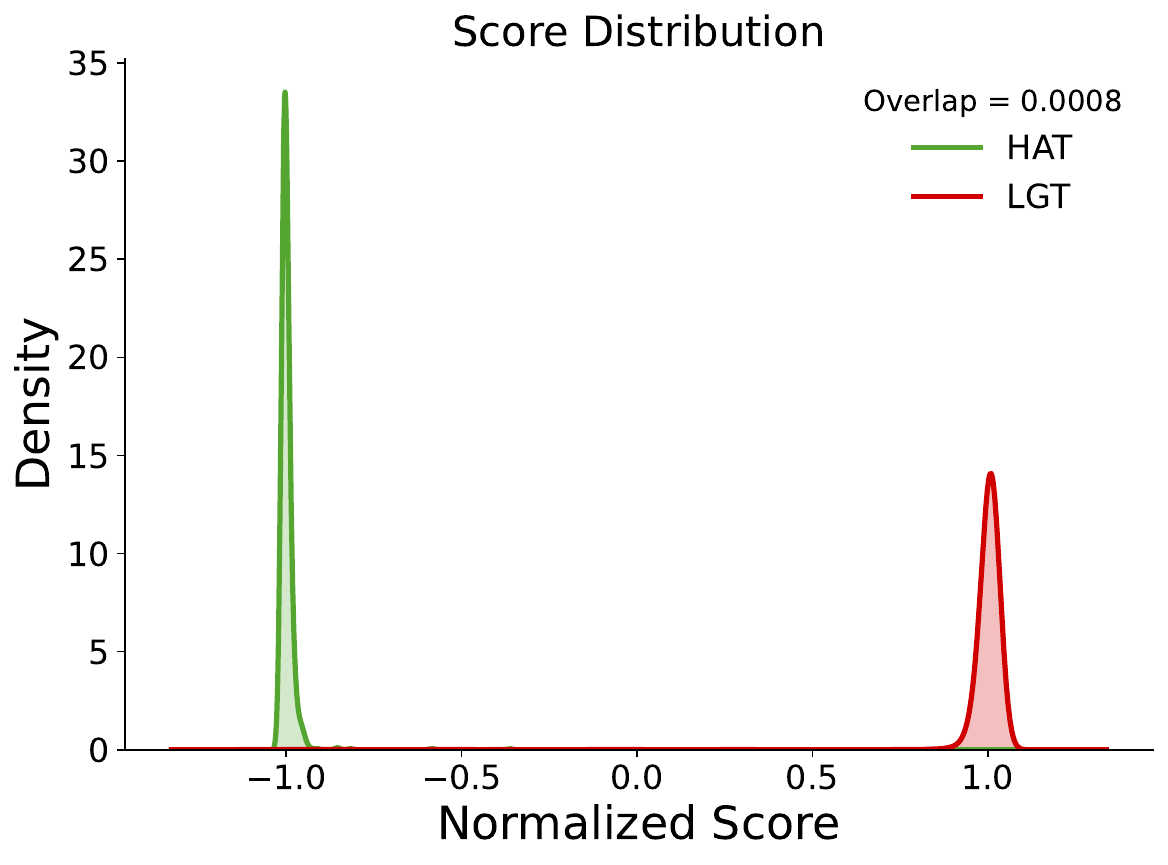} \\

\raisebox{.33\height}{\rotatebox{90}{Claude-Instant}} &
\includegraphics[width=0.3\textwidth]{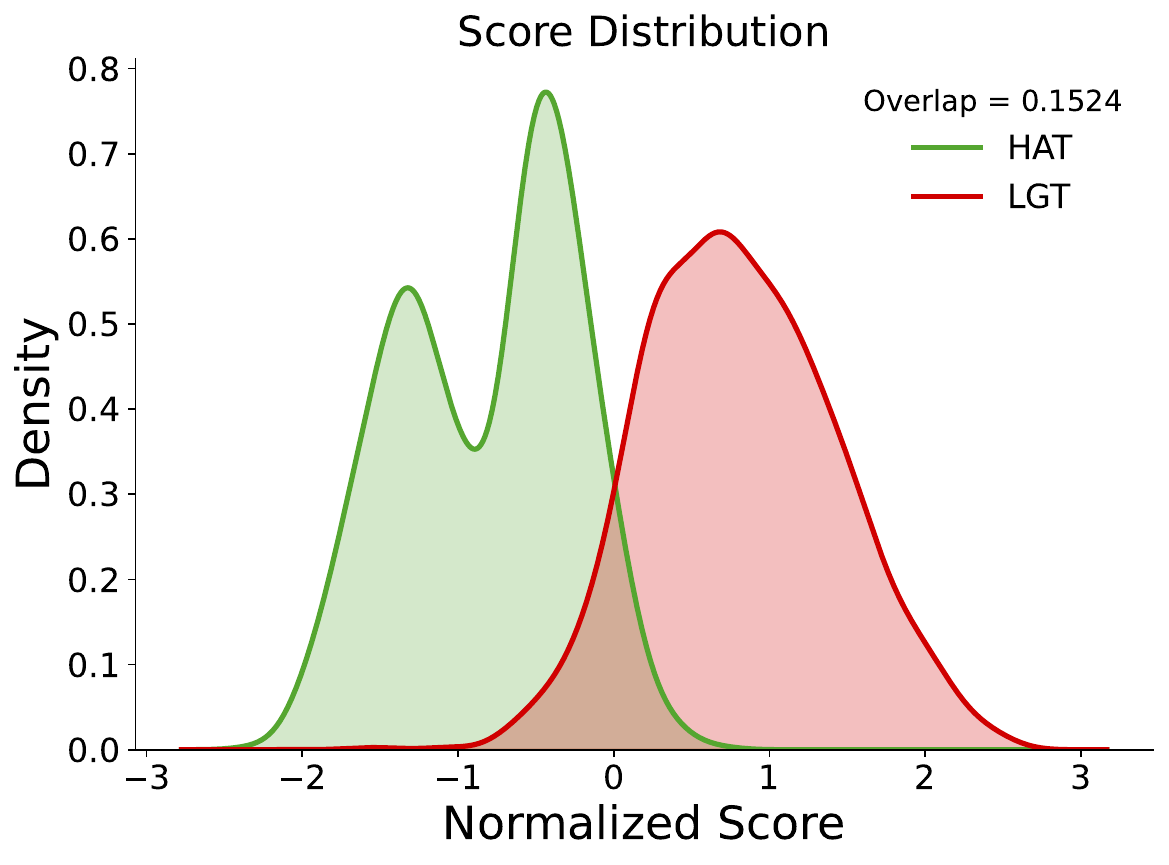} &
\includegraphics[width=0.3\textwidth]{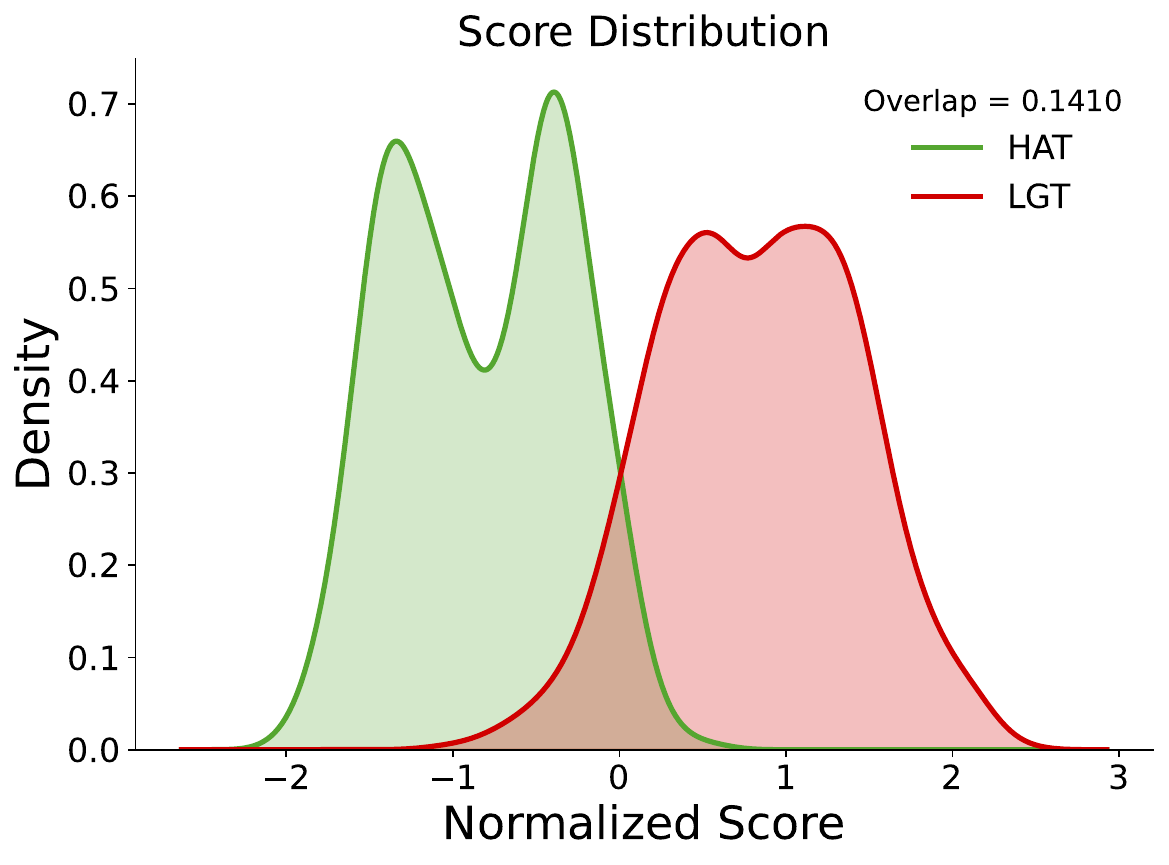} &
\includegraphics[width=0.3\textwidth]{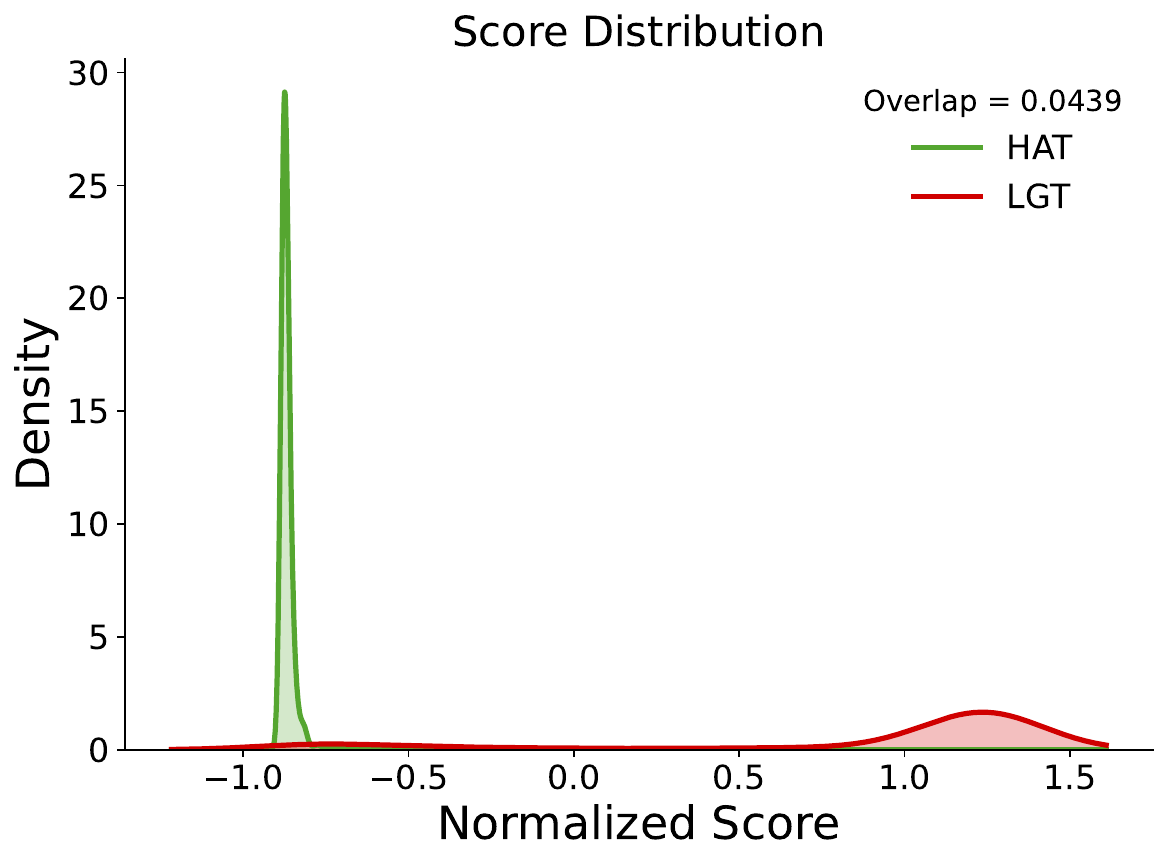} \\

\raisebox{.4\height}{\rotatebox{90}{Google-PaLM}} &
\includegraphics[width=0.3\textwidth]{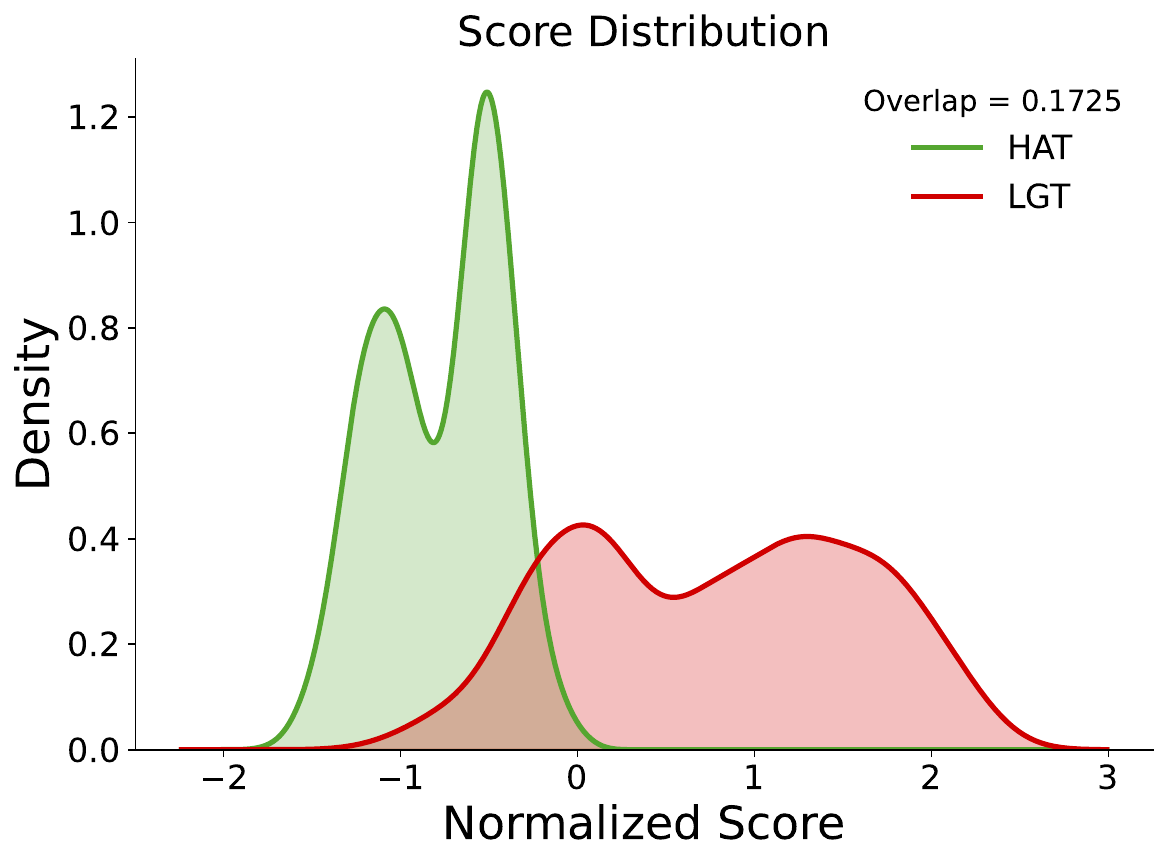} &
\includegraphics[width=0.3\textwidth]{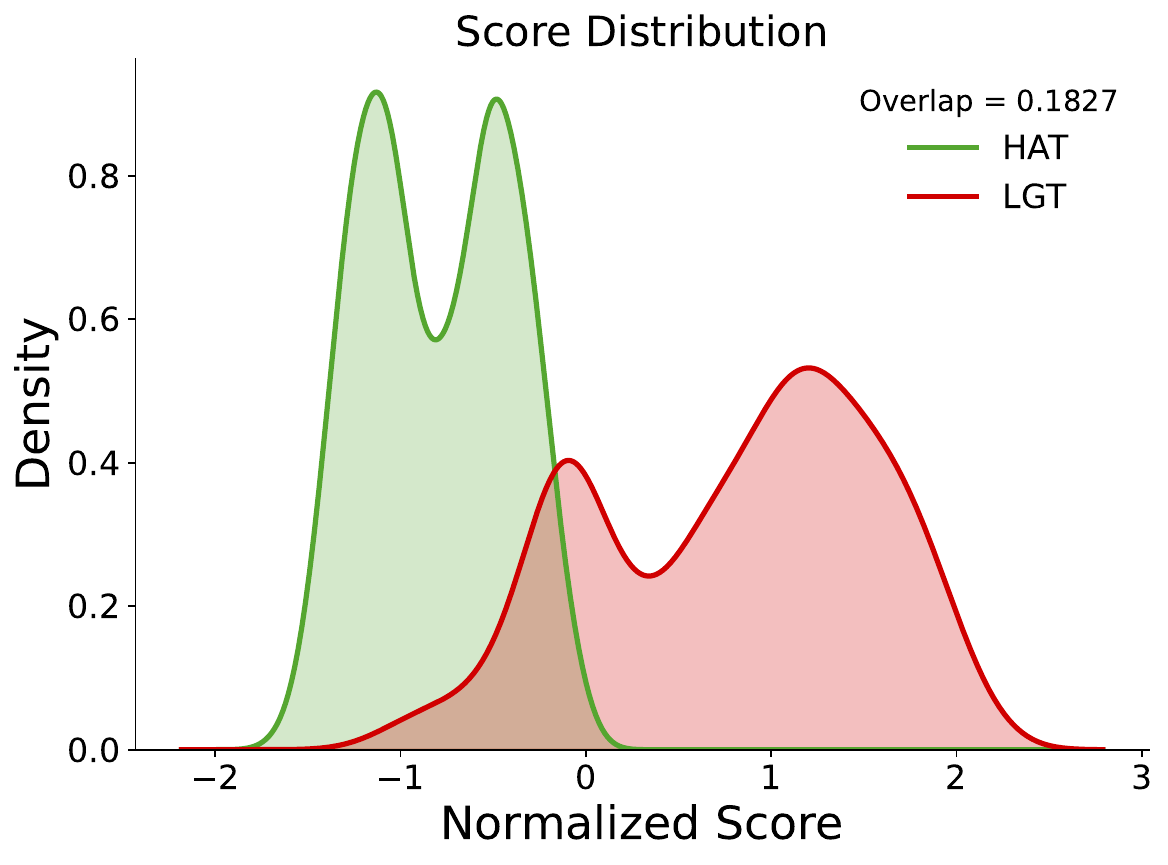} &
\includegraphics[width=0.3\textwidth]{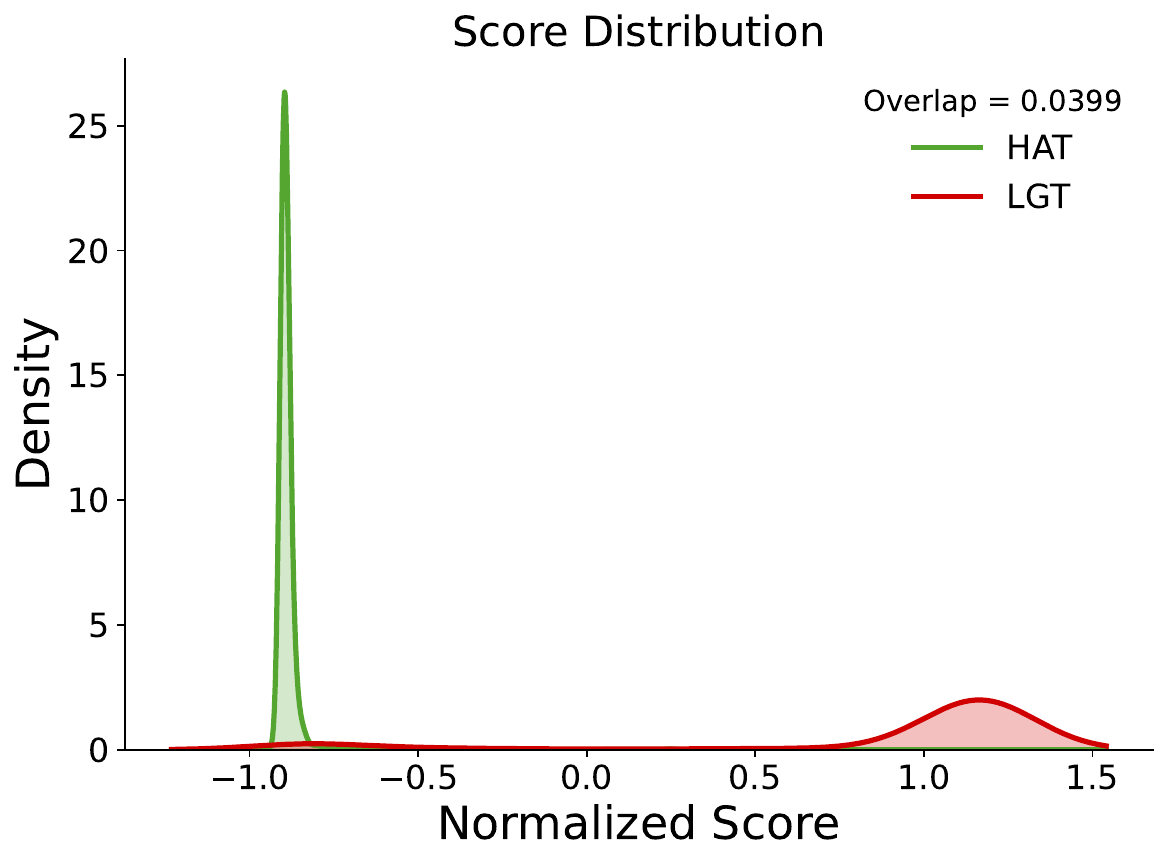} \\
\end{tabular}}
\caption{
Score distributions of different detection methods involving hidden representations across datasets. Columns denote different methods and rows denote datasets generated by different LLMs. HAT and LGT denote human-authored text and LLM-generated text, respectively.
}
\label{fig_score_distribution_grid} 
\end{figure*}

\subsection{Shots of Training Dataset}\label{app_shots_train}
We evaluate the impact of training set size on detection performance. The training sets are randomly sampled from a mixed-LLM dataset with sizes $\{16, 32, 64, 128, 196, 256, 384, 512\}$ pairs. We use \textit{Llama-3.1-8B} as the base model and \textit{Llama-3.1-8B-Instruct} as the rewriter, and evaluate performance on four test sets (\textit{ChatGPT-3.5-Turbo}, \textit{Claude-Instant}, \textit{Google-PaLM}, and \textit{Llama-2-70B}). As shown in Figure~\ref{fig:train_shots}, \texttt{S2D} achieves strong data efficiency and consistent improvement over baselines, especially in low-resource settings. We observe a slight performance drop at the earliest stage as training size increases, since a very small number of samples can perturb the estimated score distribution. Nevertheless, even with as few as 16 pairs, \texttt{S2D} can still exploit the inherent score gap between human-written and LLM-generated texts. As more training data becomes available, the estimated distribution stabilizes, resulting in clearer separation and improved performance.
\begin{figure*}[htbp]
    \centering
    \includegraphics[width=0.98\textwidth]{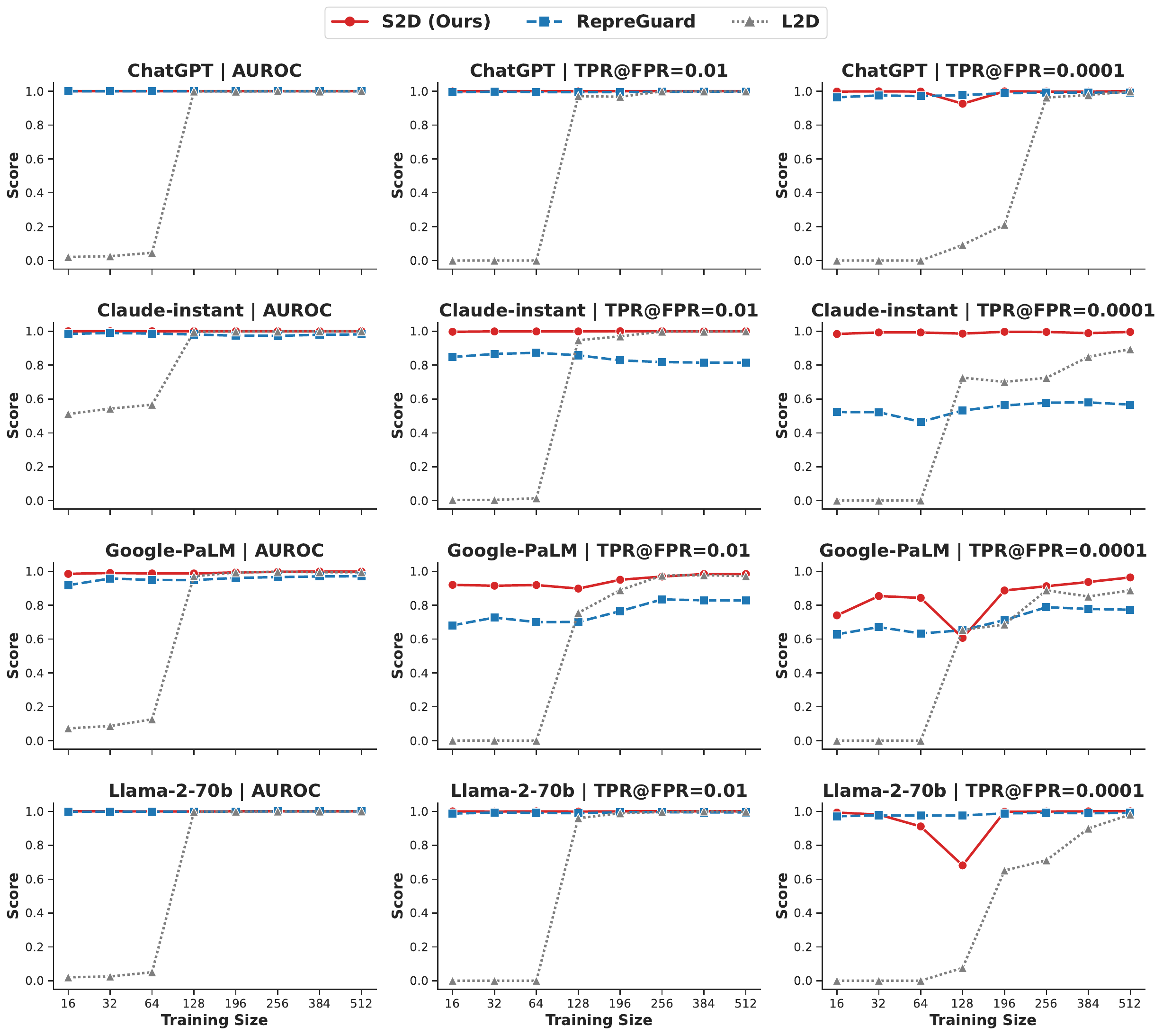}
    \caption{Detection performance across different training set sizes.}
    \label{fig:train_shots}
\end{figure*}

\subsection{Different Observers}
Table~\ref{tab:observer_models_strict_sh} shows that strong observer models (e.g., \textit{Llama-3.1-8B}, \textit{Qwen2.5-7B}, and \textit{Falcon-7B}) achieve consistently near-saturated performance across all generators, with high AUROC and stable TPR even at very low false positive rates. In contrast, weaker models (e.g., \textit{Mistral-7B-v0.3}, \textit{GPT-Neo-2.7B}, and \textit{Gemma-2-9B}) exhibit both lower accuracy and larger variance across generators, suggesting sensitivity to distributional shifts and less discriminative features. Notably, instruction-tuned variants do not consistently improve performance, suggesting that detection effectiveness in \texttt{S2D} is primarily determined by representation quality, rather than model scale or alignment alone.

\begin{table*}[thp]
\centering
\small
\setlength{\tabcolsep}{3.2pt}
\renewcommand{\arraystretch}{1.15}
\resizebox{\textwidth}{!}{
\begin{tabular}{l| ccc ccc ccc ccc | ccc}
\toprule
\multirow{2}{*}{\textbf{Observer Model}$\downarrow$}
& \multicolumn{3}{c}{ChatGPT}
& \multicolumn{3}{c}{Llama-2-70B}
& \multicolumn{3}{c}{Google-PaLM}
& \multicolumn{3}{c|}{Claude-Instant}
& \multicolumn{3}{c}{Avg.} \\
\cmidrule(lr){2-4}\cmidrule(lr){5-7}\cmidrule(lr){8-10}\cmidrule(lr){11-13}\cmidrule(lr){14-16}
& \textit{AUROC} & \textit{TPR@1\%} & \textit{TPR@0.01\%}
& \textit{AUROC} & \textit{TPR@1\%} & \textit{TPR@0.01\%}
& \textit{AUROC} & \textit{TPR@1\%} & \textit{TPR@0.01\%}
& \textit{AUROC} & \textit{TPR@1\%} & \textit{TPR@0.01\%}
& \textit{AUROC} & \textit{TPR@1\%} & \textit{TPR@0.01\%} \\
\midrule
Llama-3.1-8B  & 99.99 & 99.90 & 99.80 & 98.95 & 98.90 & 98.10 & 99.83 & 98.50 & 97.40 & 99.70 & 98.90 & 96.60 & 99.62 & 99.05 & 97.98 \\

Mistral-7B-v0.3   & 68.85 & 29.20 & 24.20 & 67.55 & 33.80 & 26.60 & 67.97 & 35.40 & 28.40 & 69.72 & 23.70 & 13.90 & 68.52 & 30.53 & 23.28 \\


GPT-Neo-2.7B    & 84.44 & 33.70 & 8.30 & 79.97 & 41.50 & 19.40 & 79.63 & 20.40 & 6.70 & 85.46 & 35.00 & 10.90 & 82.38 & 32.65 & 11.33 \\

OPT-2.7B  & 99.95 & 99.80 & 82.60 & 99.93 & 99.30 & 94.10 & 99.66 & 95.80 & 74.50 & 99.88 & 98.30 & 86.80 & 99.86 & 98.30 & 84.50 \\

Qwen2.5-7B  & 99.99 & 99.90 & 99.80 & 99.99 & 99.90 & 98.90 & 99.73 & 98.60 & 96.20 & 99.99 & 99.90 & 99.70 & 99.93 & 99.58 & 98.65 \\

Falcon-7B   & 99.99 & 99.90 & 99.80 & 99.99 & 99.90 & 99.60 & 99.89 & 98.10 & 95.70 & 99.98 & 99.70 & 98.00 & 99.96 & 99.40 & 98.28 \\

Falcon-7B-Instruct    & 99.98 & 99.90 & 99.60 & 99.99 & 99.90 & 95.70 & 99.89 & 97.00 & 93.20 & 99.97 & 99.90 & 95.10 & 99.96 & 99.18 & 95.90 \\

Gemma-2-9B   & 88.01  & 25.70 & 5.90 & 76.25 & 13.90 & 0.10 & 56.10 & 20.90 & 5.20 & 85.79 & 16.50 & 1.00 & 76.54 & 19.25 & 3.05 \\

Gemma-2-9B-Instruct   & 96.25  & 62.70 & 32.50 & 95.66 & 60.10 & 29.70 & 54.10 & 15.60 & 7.30 & 70.34 & 12.50 & 6.10 & 79.09 & 37.73 & 18.90 \\
\bottomrule
\end{tabular}}
\caption{\textbf{Impact of observer model.} Detection performance of \texttt{S2D} across various observer models.}
\label{tab:observer_models_strict_sh}
\vspace{-0.1in}
\end{table*}

\subsection{Computational Efficiency}
\label{sec:computational_efficiency}
We evaluate the computational efficiency of \texttt{S2D} in terms of both inference latency and training cost, and compare it against representative baselines including L2D, RepreGuard, and Binoculars.

\paragraph{Experimental Setup.}
All efficiency profiling is conducted on a single NVIDIA A100 (80GB) GPU. To ensure a fair evaluation, the training phase for all trainable methods is conducted on a mixed dataset comprising 512 text pairs generated by multiple LLMs, and all inference metrics are evaluated over a correspondingly mixed test set. Our test set has an average length of approximately $267$ tokens, with $95\%$ of the samples shorter than $435$ tokens.

For all methods, we adopt \textit{meta-llama/Llama-3.1-8B} as the observer/proxy model. For Binoculars, \textit{meta-llama/Llama-3.1-8B} is used as the observer while \textit{meta-llama/Llama-3.1-8B-Instruct} serves as the performer. For L2D, the number of rewrites is set to 4. Unless otherwise specified, we report per-sample inference latency with a batch size of 1. Peak memory usage is measured using \texttt{torch.cuda.max\_memory\_allocated()}. Importantly, for the rewrite-based baseline (L2D), the reported training and inference times strictly \textit{exclude} the time required for the auxiliary LLM to autoregressively generate the rewrites, as these were pre-computed offline.

\paragraph{Results.}

\begin{table}[h]
\centering
\small
\setlength{\tabcolsep}{4pt}
\resizebox{\textwidth}{!}{
\begin{tabular}{lccccccc}
\toprule
\multirow{2}{*}{\textbf{Method}} & \multirow{2}{*}{\textbf{Rewrite?}} & \multicolumn{4}{c}{\textbf{Efficiency}} & \multicolumn{2}{c}{\textbf{Performance (\%)}} \\
\cmidrule(lr){3-6} \cmidrule(lr){7-8}
& & \textbf{Train Time (s)} $\downarrow$ & \textbf{Train Cost (GB)} $\downarrow$ & \textbf{Infer. Time (s)} $\downarrow$ & \textbf{Infer. Cost (GB)} $\downarrow$ & \textbf{AUROC} $\uparrow$ & \textbf{TPR@1\%} $\uparrow$ \\
\midrule
Binoculars & No  & - & - & 0.50 & 58.0 & 87.70 & 74.70 \\
L2D        & Yes & 1213.12 & 45.0 & 2.03 & 43.0 & \textbf{98.96} & 97.60 \\
RepreGuard & No  & 835.62 & 42.0 & 0.32 & \textbf{38.0} & 98.42 & 81.47 \\
\midrule
\texttt{S2D} (ours) & No & \textbf{759.46} & \textbf{40.0} & \textbf{0.30} & 39.0 & 98.90 & \textbf{97.75} \\
\bottomrule
\end{tabular}}
\vspace{3pt}
\caption{Comparison of computational cost and detection performance evaluated on a mixed test set. \textit{Train Cost} and \textit{Infer Cost} denote the peak memory allocated during the respective phases. Binoculars is a train-free method, hence its training costs are omitted. Note that for L2D, the heavy computational time required to generate rewrites is excluded from the timing metrics.}
\label{tab:efficiency_performance}
\end{table}

During \textbf{inference}, S2D requires only a single forward pass through the frozen observer LLM. Even when L2D is evaluated under the favorable assumption that rewrite generation is excluded, S2D still achieves a \textbf{5.7$\times$ speedup} (0.30s vs.\ 2.03s, $\left(2.03 - 0.30\right) / 0.30\approx 5.7$).
Compared with Binoculars, S2D substantially reduces inference latency and lowers memory usage by avoiding the need to load two LLMs simultaneously. Compared with RepreGuard, S2D incurs virtually no additional runtime overhead while maintaining comparable memory usage, indicating that steering introduces negligible deployment cost. During \textbf{training}, S2D keeps the observer LLM frozen and optimizes only a lightweight steering vector.
As a result, it achieves the lowest training time among the evaluated trainable detectors while maintaining low training memory cost.
Overall, these results show that S2D provides a scalable and deployable detection framework without sacrificing accuracy.

\section{Omitted Theoretical Details and Proofs}\label{app_pf}

\subsection{Proof of Theorem~\ref{thm_1}}\label{app_pf_thm1}
\begin{proof}[\textbf{Proof of Theorem~\ref{thm_1}}]
Condition on the training sample $\mathcal{S}_{\mathrm{train}}$, the estimated scoring function $\widehat{\mathcal S}_t$ is fixed, and the only remaining randomness comes from the calibration sample $\mathcal{S}_{\mathrm{val}}=\{x_i^-\}_{i=1}^{n_2}$ with $x_i^- \overset{\mathrm{i.i.d.}}{\sim}\mathbb P_0$. 

Define the true survival function $\mathcal T(\tau):=\mathbb P_0\bigl(\widehat{\mathcal S}_t(X_{\mathrm{test}}^{-})\ge \tau\bigr)$ (where $X_{\mathrm{test}}^{-}$ follows the same distribution of samples in $\mathcal{S}_{\mathrm{val}}$) and its empirical counterpart $\widehat{\mathcal T}_{n_2}(\tau):=\frac{1}{n_2}\sum_{i=1}^{n_2}\mathbbm 1\bigl(\widehat{\mathcal S}_t(x_i^-)\ge \tau\bigr)$. By the definition of the empirical threshold $\widehat\tau_{\alpha,t}$ in Equation~\eqref{type_1_error}, it is the infimum of $\tau$ such that $\widehat{\mathcal T}_{n_2}(\tau) \le \alpha$. On one hand, by definition, we have $\widehat{\mathcal T}_{n_2}(\widehat\tau_{\alpha,t}) \le \alpha$. On the other hand, because $\widehat{\mathcal T}_{n_2}(\tau)$ is a step function taking values in multiples of $1/n_2$, the infimum definition guarantees that the empirical probability does not fall below $\alpha$ by more than the maximum jump size of the step function. Therefore, $\alpha - \widehat{\mathcal T}_{n_2}(\widehat\tau_{\alpha,t}) \le \frac{1}{n_2}$.
It implies that
\[
\left|\widehat{\mathcal T}_{n_2}(\widehat\tau_{\alpha,t}) - \alpha\right| \le \frac{1}{n_2}.
\]

By applying the triangle inequality, we can bound the absolute difference between the population Type-I error and the target level $\alpha$
\begin{align*}
\left|\mathcal T(\widehat\tau_{\alpha,t}) - \alpha\right| 
&\le \left|\mathcal T(\widehat\tau_{\alpha,t}) - \widehat{\mathcal T}_{n_2}(\widehat\tau_{\alpha,t})\right| + \left|\widehat{\mathcal T}_{n_2}(\widehat\tau_{\alpha,t}) - \alpha\right| \\
&\le \sup_{\tau\in\mathbb R}\left|\mathcal T(\tau)-\widehat{\mathcal T}_{n_2}(\tau)\right| + \frac{1}{n_2}.
\end{align*}

By using the Dvoretzky--Kiefer--Wolfowitz~(DKW) inequality~\citep{massart1990tight}, the uniform deviation between the true and empirical functions is bounded by
\[
\mathbb P\!\left(
\sup_{\tau\in\mathbb R}
\left|\mathcal T(\tau)-\widehat{\mathcal T}_{n_2}(\tau)\right|
>
\sqrt{\frac{\log(2/\delta)}{2n_2}}
\,\middle|\,
\mathcal S_{\mathrm{train}}
\right)
\le
2e^{-2n_2 \left(\sqrt{\frac{\log(2/\delta)}{2n_2}}\right)^2}
=
\delta.
\]
Hence, conditional on $\mathcal S_{\mathrm{train}}$, with probability at least $1-\delta$, we have
\[
\left|\mathcal T(\widehat\tau_{\alpha,t}) - \alpha\right| \le \sqrt{\frac{\log(2/\delta)}{2n_2}} + \frac{1}{n_2}.
\]

Since $\mathbb{P}_0(\widehat{\mathcal S}_t(X_{\mathrm{test}}^{-}) \ge \widehat\tau_{\alpha,t}) = \mathcal T(\widehat\tau_{\alpha,t})$, we can rewrite the conditional probability as:
\[
\mathbb P\!\left(
\left|\mathbb P_0\bigl(\widehat{\mathcal S}_t(X_{\mathrm{test}}^{-}\bigr) - \alpha\right|
\le
\sqrt{\frac{\log(2/\delta)}{2n_2}} + \frac{1}{n_2}
\,\middle|\,
\mathcal S_{\mathrm{train}}
\right)
\ge
1-\delta,
\]
which yields
\[
\mathbb P\!\left(
\left|\mathbb P_0\bigl(\widehat{\mathcal S}_t(X_{\mathrm{test}}^{-}\bigr) - \alpha\right|
\le
\sqrt{\frac{\log(2/\delta)}{2n_2}} + \frac{1}{n_2}
\right)
\ge
1-\delta,
\]
which completes the proof.
\end{proof}

\subsection{Formal Statement for Theorem~\ref{thm_2}}
\label{append:assumptions}

Theorem~\ref{thm_2} characterizes the excess Type II error of the detector trained on $\mathcal{S}_{\mathrm{train}}$. 
To establish this result, we introduce several additional technical assumptions. 
We first present and discuss these assumptions, followed by the formal statement of Theorem~\ref{thm_2}. 


\paragraph{Two-Time-Scale Prototype Tracking.}
Algorithm~\ref{alg:vmf_latent_steering} can be viewed as a two-time-scale stochastic approximation scheme for solving the empirical objective~\eqref{eq:empirical-loss}:
\begin{equation}
\label{eq:empirical-loss}
\mathcal{L}_{\mathrm{vMF}}(\mathbf{v}, \boldsymbol{\mu}_{0}, \boldsymbol{\mu}_{1})
:=
\frac{1}{n_1} \sum_{i=1}^{n_1}
\log p\left(
y_i \mid f_{\theta,\mathbf{v}}(x_i),
\boldsymbol{\mu}_0,
\boldsymbol{\mu}_1
\right),
\end{equation}
where $\mathcal{S}_{\mathrm{train}} = \{(x_i, y_i)\}_{i=1}^{n_1}$ consists of i.i.d.\ samples from $\mathcal{P}$.

We restate the key update steps for clarity and subsequent analysis.
At iteration $t$, let $\mathbf{v}_t$ denote the current steering vector. For each class $c$, define
\[
\bar{\mathbf{z}}_{c,t}(\mathbf{v}_{t}) = \frac{1}{|\{(x,y):x\in \mathcal{B}_{t}, y=c\}|}\sum_{(x,y):x\in \mathcal{B}_{t}, y=c} f_{\theta,\mathbf{v}_t}(x)
\]
as the average representation of class $c$ over a fresh mini-batch $\mathcal{B}_{t}$.
The update rule is given by
\begin{subequations}
\begin{align}
    &\textbf{Fast Process (Mean Direction Tracking)} \quad
    \widehat{\boldsymbol{\mu}}_{c,t+1}
    =
    \Pi_{\mathbb{S}^{d-1}}
    \left(
    (1-\rho)\widehat{\boldsymbol{\mu}}_{c,t}
    +
    \rho\,\bar{\mathbf{z}}_{c,t}(\mathbf{v}_t)
    \right), \label{eq:fast_dynamics1} \\
    &\textbf{Slow Process (Steering Vector Update)} \quad
    \mathbf{v}_{t+1}
    =
    \mathbf{v}_t
    +
    \eta \nabla_{\mathbf{v}}
    \mathcal{L}_{\mathrm{vMF}}
    \bigl(
    \mathbf{v}_t,
    \widehat{\boldsymbol{\mu}}_{0,t},
    \widehat{\boldsymbol{\mu}}_{1,t}
    \bigr), \label{eq:slow_dynamics1}
\end{align}
\end{subequations}
where $\Pi_{\mathbb{S}^{d-1}}(\mathbf{x})=\mathbf{x}/\|\mathbf{x}\|$ denotes projection onto the unit sphere. 
Here, $\widehat{\boldsymbol{\mu}}_{c,t}$ serves as an estimator of the population mean direction for class $c$ at iteration $t$. 
We consider a two-time-scale regime with $0<\eta \ll \rho \le 1$, under which $\{\widehat{\boldsymbol{\mu}}_{c,t}\}_{c=0}^1$ evolves on the fast timescale while $\mathbf{v}_t$ evolves on the slow timescale.

\paragraph{Assumptions.}

Assumption~\ref{asump1:true-model} introduces a generative model under which the representation follows a vMF distribution at the population optimum. This assumption facilitates the theoretical analysis.

\begin{assumption}[Local vMF Model]
\label{asump1:true-model}
For every $\mathbf v$ and each class $c\in\{0,1\}$, define $Z_{\mathbf{ v}}=f_{\theta,\mathbf v}(X)$ and assume
\[
Z_{\mathbf{ v}} \mid Y=c \sim \mathrm{vMF}(\boldsymbol\mu_c(\mathbf v),\kappa),
\]
that is, its density is given by $p(z \mid Y=c)
=
C_d(\kappa)\exp\bigl(\kappa\, \boldsymbol{\mu}_c(\mathbf{v})^{\top} z\bigr)$ for any $z \in \mathbb S^{d-1}$, where $C_d(\kappa)$ is the normalization constant.
\end{assumption}

\begin{remark}[Gradient Noises]
Let $\mathcal{F}_t$ denote the filtration generated by the optimization trajectory up to iteration $t$.
Under Assumption~\ref{asump1:true-model}, the conditional expectation of the mini-batch average satisfies
\[
\mathbb{E}\!\left[\bar{\mathbf{z}}_{c,t}(\mathbf{v}_t)\mid \mathcal{F}_t\right]
=
A_d(\kappa)\boldsymbol{\mu}_c(\mathbf{v}_t),
\]
where  $A_d(\kappa)
=
\frac{I_{d/2}(\kappa)}{I_{d/2-1}(\kappa)}$, and $I_\nu(\cdot)$ denotes the modified Bessel function of the first kind of order $\nu$. 
Based on the observation, we can define the following gradient noise
\begin{equation}
\label{eq:noise}
\boldsymbol{\zeta}_{c,t+1} := \bar{\mathbf{z}}_{c,t}(\mathbf{v}_t)-A_d(\kappa) \boldsymbol{\mu}_c(\mathbf{v}_t),
\end{equation}
for a class $c \in \{0,1\}$. 
We then have that $\mathbb{E}\!\left[\boldsymbol{\zeta}_{c,t+1}\mid \mathcal{F}_t\right] = 0$ and $\|\boldsymbol{\zeta}_{c,t+1} \| \le 2$ almost surely due to the fact that $\|\bar{\mathbf{z}}_{c,t}(\mathbf{v}_t)\|, A_d(\kappa)$, and  $\|\boldsymbol{\mu}_c(\mathbf{v}_t)\| \le 1$.
\end{remark}

\begin{assumption}[Local Tracking Conditions]\label{asump2:local_tracking}
Define $\boldsymbol{\mu}_{c,t} := \boldsymbol{\mu}_{c}(\mathbf{v}_t)$ and assume that
\begin{enumerate}
\item \textbf{(Bounded drift)} 
There exists $C_\mu > 0$ such that $\|\boldsymbol{\mu}_{c,t+1} - \boldsymbol{\mu}_{c,t}\| \le C_{\mu} \eta.$

\item \textbf{(Initialization)}  $\|\widehat{\boldsymbol{\mu}}_{c,0} - \boldsymbol{\mu}_{c,0}\| \le 1/4.$
\end{enumerate}
\end{assumption}

\begin{theorem}[Local Mean Direction Tracking]\label{thm:local_tracking}
Fix $c\in\{0,1\}$ and write $\boldsymbol{\mu}_{c,t}:=\boldsymbol{\mu}_c(\mathbf v_t)$.
Consider the recursion
\[
\mathbf z_{c,t+1}
=
(1-\rho)\widehat{\boldsymbol{\mu}}_{c,t}
+
\rho \bar{\mathbf z}_{c,t}(\mathbf v_t),
\qquad
\widehat{\boldsymbol{\mu}}_{c,t+1}
=
\frac{\mathbf z_{c,t+1}}{\|\mathbf z_{c,t+1}\|}.
\]
Under Assumptions \ref{asump1:true-model} and \ref{asump2:local_tracking}, there exist constants ${c},C,\rho_0,\gamma_0>0$ such that if $0<\rho\le\rho_0$ and $\eta/\rho\le \gamma_0$, then with probability at least $1-\delta$, for any class $c \in \{0, 1\}$,
\begin{enumerate}
    \item For all $0 \le t\le T$, $\|\widehat{\boldsymbol{\mu}}_{c,t}-\boldsymbol{\mu}_{c,t}\|^2
    \le
    (1 - {c}\rho)^t
    \|\widehat{\boldsymbol{\mu}}_{c,0}-\boldsymbol{\mu}_{c,0}\|^2
    +
    C\left(\sqrt{\rho\log\frac{2T}{\delta}}+\frac{\eta^2}{\rho^2}\right).$
    \item For all $0 \le t\le T$, $\|\widehat{\boldsymbol{\mu}}_{c,t}-\boldsymbol{\mu}_{c,t}\|\le \frac14,~
    \|\mathbf z_{c,t+1}\|\ge \frac{1}{4},~
    \langle \widehat{\boldsymbol{\mu}}_{c,t},\boldsymbol{\mu}_{c,t}\rangle>0.$
\end{enumerate}
\end{theorem}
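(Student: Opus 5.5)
We prove both parts together by induction on $t$, maintaining the invariant of part 2 along the trajectory. Write $e_t := \widehat{\boldsymbol{\mu}}_{c,t}-\boldsymbol{\mu}_{c,t}$ and $A:=A_d(\kappa)\in(0,1)$, and treat $A$ as a fixed positive constant. Substituting the noise decomposition \eqref{eq:noise} into the recursion gives
\[
\mathbf z_{c,t+1} = (1-\rho)\widehat{\boldsymbol{\mu}}_{c,t} + \rho A\boldsymbol{\mu}_{c,t} + \rho\boldsymbol{\zeta}_{c,t+1}.
\]
The deterministic part $\mathbf m_t:=(1-\rho)\widehat{\boldsymbol{\mu}}_{c,t}+\rho A\boldsymbol{\mu}_{c,t}$ is a convex-type combination of two unit vectors with positive inner product. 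Hence $\|\mathbf m_t\|\ge 1-\rho(1-A)-O(\rho\|e_t\|)$. On the invariant set $\|e_t\|\le 1/4$ this gives $\|\mathbf z_{c,t+1}\|\ge \|\mathbf m_t\|-2\rho\ge 1/4$ once $\rho\le\rho_0$ is small, which is the middle claim of part 2.

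The key deterministic step is a one-step contraction for the normalized map. Decompose $\widehat{\boldsymbol{\mu}}_{c,t}$ into its component along $\boldsymbol{\mu}_{c,t}$ and its tangential component. The tangential component of $\mathbf m_t$ is multiplied by $(1-\rho)$, while the radial component is $(1-\rho)\langle\widehat{\boldsymbol{\mu}}_{c,t},\boldsymbol{\mu}_{c,t}\rangle+\rho A$. After normalization, the tangent of the angle therefore shrinks by the factor
\[
\frac{(1-\rho)\cos\theta_t}{(1-\rho)\cos\theta_t+\rho A}\le 1-c\rho,
\]
with $c\approx A$ when $\theta_t$ is small. Since the projection $\Pi_{\mathbb S^{d-1}}$ is $4$-Lipschitz on $\{\|x\|\ge 1/4\}$, the noise contributes a perturbation of size $O(\rho\|\boldsymbol{\zeta}\|)$. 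The target drift contributes $\|\boldsymbol{\mu}_{c,t+1}-\boldsymbol{\mu}_{c,t}\|\le C_\mu\eta$ by Assumption~\ref{asump2:local_tracking}. Squaring, I aim for
\[
\|e_{t+1}\|^2\le(1-c\rho)\|e_t\|^2+\rho\langle \mathbf g_t,\boldsymbol{\zeta}_{c,t+1}\rangle + C\rho^2 + C\eta\|e_t\| + C\eta^2,
\]
where $\mathbf g_t$ is $\mathcal F_t$-measurable with $\|\mathbf g_t\|=O(\|e_t\|)$. Young's inequality turns the cross term into $C\eta\|e_t\|\le \tfrac{c\rho}{2}\|e_t\|^2+C\eta^2/\rho$. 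Unrolling the recursion then gives
\[
\|e_t\|^2\le(1-\tfrac c2\rho)^t\|e_0\|^2+\rho M_t+C\rho+C\eta^2/\rho^2,
\]
where $M_t=\sum_{s<t}(1-\tfrac c2\rho)^{t-1-s}\langle \mathbf g_s,\boldsymbol{\zeta}_{c,s+1}\rangle$ is a weighted martingale sum.

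For the stochastic term, the increments are bounded, since $\|\boldsymbol{\zeta}\|\le2$ and $\|\mathbf g_s\|=O(1)$ on the invariant set. The sum of squared weights is $O(1/\rho)$. Azuma–Hoeffding together with a union bound over $t\le T$ therefore gives, with probability $1-\delta$ simultaneously for all $t$,
\[
\rho|M_t|\le C\rho\sqrt{\rho^{-1}\log(2T/\delta)}=C\sqrt{\rho\log(2T/\delta)}.
\]
This term dominates the $C\rho$ term, which yields part 1 after renaming $c$. To keep the invariant valid inside the martingale argument, I apply Azuma to the stopped process at $\tau=\inf\{t:\|e_t\|>1/4\}$. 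On the good event, if $\rho_0$ and $\gamma_0$ are small enough that $\|e_0\|^2\le 1/16$ plus the remainder stays below $1/16$, then $\tau>T$. The remainder being small is where I use $\eta/\rho\le\gamma_0$ and $\sqrt{\rho\log(2T/\delta)}$ small; the latter is implicitly required by $\rho\le\rho_0$ depending on $T$ and $\delta$. Finally, $\|e_t\|\le 1/4<\sqrt2$ implies $\langle\widehat{\boldsymbol{\mu}}_{c,t},\boldsymbol{\mu}_{c,t}\rangle=1-\|e_t\|^2/2>0$. A union bound over $c\in\{0,1\}$ costs a factor $2$ in $\delta$, which is absorbed into the constants.

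The main obstacle is the contraction step: controlling the nonlinear normalization uniformly on the $1/4$-neighborhood with an explicit $c>0$, and keeping the martingale increment $\mathcal F_t$-predictable. The latter requires linearizing the projection around $\mathbf m_t$ and placing the quadratic-in-noise remainder into the $C\rho^2$ term.
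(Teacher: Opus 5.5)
Your proposal is correct and follows essentially the same route as the paper's proof: an invariant (bootstrap) region $\|\mathbf e_t\|\le 1/4$ and a one-step contraction of the normalized EMA map. It then linearizes the projection to isolate an $\mathcal F_t$-predictable martingale increment plus an $O(\rho^2)$ remainder, uses Young's inequality to turn the drift into an $\eta^2/\rho$ per-step cost, unrolls geometrically, and applies a weighted martingale bound with a union bound over $t$ to the stopped (indicator-truncated) process. The differences are cosmetic: your tangent-of-angle contraction replaces the paper's $\Psi(\alpha,\rho)$ derivative argument, and Azuma--Hoeffding replaces Freedman. As in the paper's Step~6, closing the induction needs $\|\mathbf e_0\|^2$ plus the stochastic and drift remainder to be at most $1/16$. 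This is slightly stronger than the assumed $\|\mathbf e_0\|\le 1/4$, and it cannot be obtained just by shrinking $\rho_0$ and $\gamma_0$.
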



Theorem \ref{thm:local_tracking} establishes that the proposed exponential moving average update $\widehat{\boldsymbol{\mu}}_{c,t}$ can accurately track the evolving class prototypes ${\boldsymbol{\mu}}_{c,t}$ under the local vMF model. Despite stochastic gradient noise and slow drift in the true mean direction, the estimated prototype remains close to the target with high probability. The result shows a contraction behavior up to a steady-state error, where the averaging parameter and the drift rate govern the tracking accuracy. Its proof is deferred in Appendix \ref{proof:local_tracking}.

Since the score function depends on the class prototypes only through their difference, the tracking error bound in Theorem \ref{thm:local_tracking} directly translates into a uniform bound on the deviation between the plug-in and oracle score functions (i.e., $\widehat{\mathcal{S}}_t$ and ${\mathcal{S}}_t$), as formalized in the following corollary.

\begin{corollary}\label{cor:score_deviation}
Define the oracle and plug-in log-likelihood ratios with respect to the same steered representation $f_{\theta,\mathbf v_t}(x)$ as
\[
\mathcal{S}_t(x)
=
\kappa (\boldsymbol{\mu}_{1,t}-\boldsymbol{\mu}_{0,t})^\top f_{\theta,\mathbf v_t}(x),
\qquad
\widehat{\mathcal{S}}_t(x)
=
\kappa (\widehat{\boldsymbol{\mu}}_{1,t}-\widehat{\boldsymbol{\mu}}_{0,t})^\top f_{\theta,\mathbf v_t}(x).
\]
Under Assumptions \ref{asump1:true-model} and \ref{asump2:local_tracking}, for any $\delta\in(0,1)$, with probability at least $1-\delta$, there exists some $c \in (0, 1)$ and $C >0$ such that the following uniform bound holds for all $0 \le t \le T$:
\[
\sup_{x\in\mathcal X} \left|\widehat{\mathcal{S}}_t(x)-\mathcal{S}_t(x)\right|
\le r_t(\delta, \rho, \eta),
\]
where 
\[
r_t(\delta, \rho, \eta)
:=
\kappa C \sqrt{(1- {c} \rho)^t + \sqrt{\rho \log \frac{2 T}{\delta}}+\frac{\eta^2}{\rho^2}}.
\]
\end{corollary}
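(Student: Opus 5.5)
The plan is to reduce the statement to Theorem~\ref{thm:local_tracking} by a Cauchy--Schwarz argument on the unit sphere. Fix $t$ and $x\in\mathcal X$, and write $z:=f_{\theta,\mathbf v_t}(x)\in\mathbb S^{d-1}$. Both scores use the same steered representation $z$, so
\[
\bigl|\widehat{\mathcal S}_t(x)-\mathcal S_t(x)\bigr|
=\kappa\,\bigl|\bigl((\widehat{\boldsymbol\mu}_{1,t}-\boldsymbol\mu_{1,t})-(\widehat{\boldsymbol\mu}_{0,t}-\boldsymbol\mu_{0,t})\bigr)^\top z\bigr|
\le \kappa\bigl(\|\widehat{\boldsymbol\mu}_{1,t}-\boldsymbol\mu_{1,t}\|+\|\widehat{\boldsymbol\mu}_{0,t}-\boldsymbol\mu_{0,t}\|\bigr).
\]
The last step uses $\|z\|=1$ and the triangle inequality. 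The right-hand side does not depend on $x$, so the supremum over $\mathcal X$ comes for free. The whole problem therefore becomes a uniform-in-$t$ bound on the two prototype tracking errors.

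Next I would invoke part~1 of Theorem~\ref{thm:local_tracking} for each class $c\in\{0,1\}$. Assumption~\ref{asump2:local_tracking} gives the initialization bound $\|\widehat{\boldsymbol\mu}_{c,0}-\boldsymbol\mu_{c,0}\|^2\le 1/16\le 1$, which yields
\[
\|\widehat{\boldsymbol\mu}_{c,t}-\boldsymbol\mu_{c,t}\|^2\le (1-c\rho)^t+C\Bigl(\sqrt{\rho\log\tfrac{2T}{\delta}}+\tfrac{\eta^2}{\rho^2}\Bigr)
\le C'\Bigl((1-c\rho)^t+\sqrt{\rho\log\tfrac{2T}{\delta}}+\tfrac{\eta^2}{\rho^2}\Bigr)
\]
for all $0\le t\le T$, with $C'=\max(1,C)$. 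Taking square roots and summing over the two classes gives the claimed $r_t$ with constant $2\sqrt{C'}$, renamed $C$. We also need $c\in(0,1)$; this can be arranged by shrinking $c$, or by taking $\rho_0$ small enough that $c\rho<1$, since that only weakens the bound.

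The only delicate step is the probability bookkeeping, that is, ensuring both classes hold simultaneously with probability $1-\delta$. Theorem~\ref{thm:local_tracking} is stated to hold ``for any class $c$'' with probability $1-\delta$, and I would read it as a simultaneous event. If it is instead only per-class, I would apply it at level $\delta/2$ and take a union bound. This replaces $\log(2T/\delta)$ by $\log(4T/\delta)\le 2\log(2T/\delta)$ for $T\ge1$, $\delta<1$, and the extra factor is absorbed into $C$. In either case the same event works for every $t\le T$ at once, because the theorem's bound is already uniform over the horizon. No further obstacle arises, since the corollary is a deterministic consequence of the tracking bound.
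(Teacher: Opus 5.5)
Your proposal is correct and follows the paper's proof: the triangle inequality with $\|f_{\theta,\mathbf v_t}(x)\|\le 1$ reduces the score gap to the two prototype tracking errors, Theorem~\ref{thm:local_tracking} is applied with the initialization bound $1/16$, and the constants are enlarged. Your explicit union bound over the two classes (running the theorem at level $\delta/2$ and absorbing $\log(4T/\delta)\le 2\log(2T/\delta)$ into $C$) fills in a step the paper leaves implicit.
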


\begin{proof}[Proof of Corollary \ref{cor:score_deviation}]
By definition, we have
$$
\begin{aligned}
\left|\widehat{\mathcal{S}}_t(x)-\mathcal{S}_t(x)\right|
&=
\left|\kappa (\widehat{\boldsymbol{\mu}}_{1,t}-\widehat{\boldsymbol{\mu}}_{0, t})^\top f_{\theta,\mathbf v}(x)-\kappa (\boldsymbol{\mu}_{1,t}-\boldsymbol{\mu}_{0,t})^\top f_{\theta,\mathbf v}(x)
\right|\\
&\overset{(a)}{\le
}
\kappa
\left(\|\widehat{\boldsymbol{\mu}}_{0,t}-\boldsymbol{\mu}_{0,t}\| + \|\widehat{\boldsymbol{\mu}}_{1,t}-\boldsymbol{\mu}_{1,t}\|\right)\\
&\overset{(b)}{\le} 2 \kappa \sqrt{ (1 - {c}\rho)^t \cdot \frac{1}{16}
    +
    C\left(\sqrt{\rho\log\frac{2T}{\delta}}+\frac{\eta^2}{\rho^2}\right)} \\
&\overset{(c)}{\le
} \kappa C \sqrt{(1- {c} \rho)^t + \sqrt{\rho \log \frac{2 T}{\delta}}+\frac{\eta^2}{\rho^2}}.
\end{aligned}
$$
where (a) follows from the triangle inequality and the fact that $\|f_{\theta,\mathbf v}(x)\|\le 1$, (b) applies Theorem \ref{thm:local_tracking}, and (c) enlarges the constant $C$ properly.
\end{proof}


\begin{assumption}[Local Mass Condition Around the Threshold]
\label{asump3:local_mass}
There exist constants $c_0, C_0 > 0$, exponents $\underline{\gamma}, \bar{\gamma} > 0$, and $\varepsilon_0 > 0$ such that for all $\varepsilon \in [0,\varepsilon_0]$,
\[
c_0 \varepsilon^{\underline{\gamma}}
\le
\mathbb{P}_0\!\left(
\left|\mathcal{S}(X)-\tau_{\alpha}^{\star}\right|\le \varepsilon
\right)
\le
C_0 \varepsilon^{\bar{\gamma}}.
\]
\end{assumption}
This assumption imposes regularity conditions on the null score distribution in a neighborhood of $\tau_{\alpha}^{\star}$. 
The upper bound controls the concentration of probability mass near the threshold, corresponding to a standard margin-type condition in classification theory~\citep{polonik1995measuring,mammen1999smooth,tong2013plug,zhou2025adadetectgpt}. 
The lower bound ensures that there is sufficient probability mass around the threshold, preventing the distribution from being too sparse in this region.


\begin{theorem}[Formal Statement of Theorem \ref{thm_2}]\label{thm:formal}
Let Assumptions~\ref{asump1:true-model} ---\ref{asump3:local_mass} hold. Then there exists a constant ${c} \in (0, 1)$ such that, with probability at least $1-2\delta$ over the randomness in the training procedure and the calibration sample, the excess Type-II error satisfies
$$
\begin{aligned}
&\mathbb{P}_1(\widehat{\mathcal R}_t)-\mathbb{P}_1(\mathcal R_t^{\star})
\le
\mathcal{O}\Bigg( \underbrace{\Big((1- {c}\rho)^t+\sqrt{\rho \log (2T / \delta)}+\eta^2 / \rho^2 \Big)^{\frac{1+\bar{\gamma}}{2}}}_{\text{Estimation Error}}  + \underbrace{ 
\sqrt{\frac{\log(2/\delta)}{n_2}} + \frac{1}{n_2}}_{\text{Calibration Error}} \Bigg)
\end{aligned}
$$
where $t\in [1, T]$ is the iteration index within the total horizon $T$, $\rho \in (0,1)$ is the EMA coefficient satisfying $\rho \le 1 / {c}$, $\eta>0$ is the steering learning rate, $n_2$ is the calibration sample size, $\delta \in (0,1)$ is the confidence parameter, and $\underline{\gamma}, \bar{\gamma}$ are the constants in Assumption~\ref{asump3:local_mass}.
\end{theorem}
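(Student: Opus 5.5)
We write $\widehat{\mathcal R}_t=\{x:\widehat{\mathcal S}_t(x)<\widehat\tau_{\alpha,t}\}$ and $\mathcal R_t^\star=\{x:\mathcal S_t(x)<\tau_\alpha^\star\}$ for the acceptance (Type-II) regions, where $\tau_\alpha^\star$ is the oracle level-$\alpha$ critical value of $\mathcal S_t$ under $\mathbb P_0$. We work on the intersection of two events, each of probability at least $1-\delta$. The first is the event of Corollary~\ref{cor:score_deviation}: $\sup_x|\widehat{\mathcal S}_t-\mathcal S_t|\le r_t$ for all $t\le T$. The second is the DKW event of Theorem~\ref{thm_1}, which gives $|\mathbb P_0(\widehat{\mathcal S}_t\ge\widehat\tau_{\alpha,t})-\alpha|\le \epsilon_{n_2}:=\sqrt{\log(2/\delta)/(2n_2)}+1/n_2$. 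A union bound gives probability at least $1-2\delta$.

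\textbf{Step 1: threshold localization.} On the first event we have $\{\mathcal S_t\ge \widehat\tau+r_t\}\subseteq\{\widehat{\mathcal S}_t\ge\widehat\tau\}\subseteq\{\mathcal S_t\ge\widehat\tau-r_t\}$. Combining this with $\mathbb P_0(\widehat{\mathcal S}_t\ge\widehat\tau)\in[\alpha-\epsilon_{n_2},\alpha+\epsilon_{n_2}]$ and $\mathbb P_0(\mathcal S_t\ge\tau^\star_\alpha)=\alpha$ yields
\[
\mathbb P_0\bigl(\tau^\star_\alpha\le\mathcal S_t<\widehat\tau_{\alpha,t}+r_t\bigr)\le\epsilon_{n_2}
\]
when $\widehat\tau_{\alpha,t}+r_t>\tau^\star_\alpha$, and the symmetric bound in the other direction. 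The lower mass bound in Assumption~\ref{asump3:local_mass} then converts this probability statement into a distance statement, $|\widehat\tau_{\alpha,t}-\tau^\star_\alpha|\le r_t+(\epsilon_{n_2}/c_0)^{1/\underline\gamma}$.

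\textbf{Step 2: decomposition of the excess Type-II error.} We have
\[
\mathbb P_1(\widehat{\mathcal R}_t)-\mathbb P_1(\mathcal R^\star_t)\le\mathbb P_1\bigl(\mathcal S_t\in[\tau^\star_\alpha,\widehat\tau+r_t)\bigr).
\]
To pass from $\mathbb P_1$ to $\mathbb P_0$ we use the vMF structure of Assumption~\ref{asump1:true-model}: the likelihood ratio equals $e^{\mathcal S_t}$, so on this band $d\mathbb P_1\le e^{\tau^\star_\alpha+O(1)}\,d\mathbb P_0$. The excess is therefore at most a constant times the $\mathbb P_0$-mass of that band.

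\textbf{Step 3: splitting the band.} We split the band mass into two parts. The first is the calibration part, which by Step 1 is directly bounded by $\epsilon_{n_2}$ and gives the calibration term. The second is the part of width $O(r_t)$ around $\tau^\star_\alpha$ caused by the score perturbation, which we bound by $C_0(2r_t)^{\bar\gamma}$ using the upper mass condition. This alone gives $r_t^{\bar\gamma}$. To reach the exponent $1+\bar\gamma$ in $r_t$, we use the Neyman–Pearson optimality of $\mathcal S_t$ in margin form. The regret of any region with the same $\mathbb P_0$-size equals $\int_{\text{sym.\ diff.}}|e^{\mathcal S_t}-e^{\tau^\star}|\,d\mathbb P_0$. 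On the symmetric difference, $|\mathcal S_t-\tau^\star|\lesssim r_t$, so the integrand is $O(r_t)$ and the mass is $O(r_t^{\bar\gamma})$, giving $O(r_t^{1+\bar\gamma})$. Since $r_t^{1+\bar\gamma}=\bigl((1-c\rho)^t+\sqrt{\rho\log(2T/\delta)}+\eta^2/\rho^2\bigr)^{(1+\bar\gamma)/2}$ up to constants, this is exactly the estimation term. The size mismatch between the regions is absorbed into the $\epsilon_{n_2}$ term, multiplied by the bounded factor $e^{\tau^\star}$.

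\textbf{Main obstacle.} The main difficulty is this last refinement: making the margin argument rigorous while the two regions have slightly different null sizes. Our first attempt is to introduce an intermediate threshold $\tilde\tau$ with $\mathbb P_0(\widehat{\mathcal S}_t\ge\tilde\tau)=\alpha$ exactly. We would then apply the NP regret identity to $\{\widehat{\mathcal S}_t\ge\tilde\tau\}$ versus $\mathcal R^\star_t$, and handle the difference between $\tilde\tau$ and $\widehat\tau$ via Step 1 and the bounded likelihood ratio, which yields the $\epsilon_{n_2}$ calibration term.
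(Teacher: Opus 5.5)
Your final argument, the intermediate-threshold plan in your last paragraph, is correct. It differs from the paper in how it handles the size mismatch between $\widehat{\mathcal R}_t$ and $\mathcal R_t^\star$.

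\textbf{The paper's route.} It applies the Neyman--Pearson regret identity (Lemma~\ref{lem:error-decomposition}) directly to $\widehat{\mathcal R}_t$ versus $\mathcal R_t^\star$. This yields an explicit ``price of conservativeness'' term $e^{\tau_\alpha^\star}(\alpha-\mathbb{P}_0(\widehat{\mathcal R}_t^c))\le e^{\tau_\alpha^\star}\epsilon_{n_2}$. However, $\widehat{\mathcal R}_t\,\Delta\,\mathcal R_t^\star$ is then only contained in a band of width $2r_t+(\epsilon_{n_2}/c_0)^{1/\underline{\gamma}}$ around $\tau_\alpha^\star$. So the paper needs the threshold-localization Lemma~\ref{lem:diff-bound} (your Step~1) and ends with the coupled term $[2r_t+(\epsilon_{n_2}/c_0)^{1/\underline{\gamma}}]^{1+\bar{\gamma}}$. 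Collapsing its calibration part $(\epsilon_{n_2}/c_0)^{(1+\bar{\gamma})/\underline{\gamma}}$ to $O(\epsilon_{n_2})$ tacitly requires $\underline{\gamma}\le 1+\bar{\gamma}$.

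\textbf{Your route.} The exactly-sized threshold $\tilde\tau$, with $\mathbb{P}_0(\widehat{\mathcal S}_t\ge\tilde\tau)=\alpha$, decouples the two effects.
\begin{itemize}
\item The regret identity for $\{\widehat{\mathcal S}_t<\tilde\tau\}$ versus $\mathcal R_t^\star$ has no conservativeness term. Since $|\tilde\tau-\tau_\alpha^\star|\le r_t$, the symmetric difference lies in $\{|\mathcal S_t-\tau_\alpha^\star|\le 2r_t\}$, so this term is at most $e^{2\kappa}C_0(2r_t)^{1+\bar{\gamma}}$.
\item The remaining piece is $\mathbb{P}_1(\tilde\tau\le\widehat{\mathcal S}_t<\widehat\tau_{\alpha,t})\le e^{2\kappa}[\alpha-\mathbb{P}_0(\widehat{\mathcal S}_t\ge\widehat\tau_{\alpha,t})]_+\le e^{2\kappa}\epsilon_{n_2}$. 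It is nonpositive if $\widehat\tau_{\alpha,t}<\tilde\tau$.
\end{itemize}
This gives exactly the stated bound, with a calibration term linear in $\epsilon_{n_2}$ and no dependence on $\underline{\gamma}$.

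\textbf{What you pay.}
\begin{itemize}
\item $\tilde\tau$ must exist, so the null law of $\widehat{\mathcal S}_t$ must be continuous. This holds under the vMF model when $\widehat{\boldsymbol{\mu}}_{1,t}\neq\widehat{\boldsymbol{\mu}}_{0,t}$; otherwise use a randomized test.
\item The localization $|\tilde\tau-\tau_\alpha^\star|\le r_t$ needs the null law of $\mathcal S_t$ to charge every one-sided interval adjacent to $\tau_\alpha^\star$. This holds for the vMF projection. The paper's Lemma~\ref{lem:diff-bound} uses the same one-sided fact, even though Assumption~\ref{asump3:local_mass} is stated two-sidedly.
\item Your calibration constant is $e^{2\kappa}$ rather than $e^{\tau_\alpha^\star}$, which is immaterial inside $\mathcal{O}(\cdot)$.
\end{itemize}

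\textbf{Two small corrections.} First, the inequality displayed in your Step~1 does not follow from the sandwich, because $\mathbb{P}_0(\mathcal S_t\ge\widehat\tau_{\alpha,t}+r_t)$ is only bounded from above. What you actually get is:
\begin{itemize}
\item $\mathbb{P}_0(\tau_\alpha^\star\le\mathcal S_t<\widehat\tau_{\alpha,t}-r_t)\le\epsilon_{n_2}$ when $\widehat\tau_{\alpha,t}-r_t>\tau_\alpha^\star$;
\item $\mathbb{P}_0(\widehat\tau_{\alpha,t}+r_t\le\mathcal S_t<\tau_\alpha^\star)\le\epsilon_{n_2}$ when $\widehat\tau_{\alpha,t}+r_t<\tau_\alpha^\star$.
\end{itemize}
The slab of width $2r_t$ around $\widehat\tau_{\alpha,t}$ is not controlled by calibration; for instance, $\widehat{\mathcal S}_t=\mathcal S_t+r_t$ makes your displayed band have mass of order $r_t$.

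Second, the $\tilde\tau$ route does not need Step~1 at all. The localization you actually use, $|\tilde\tau-\tau_\alpha^\star|\le r_t$, is the same sandwich argument with $\epsilon_{n_2}$ replaced by $0$. The passage from $\tilde\tau$ to $\widehat\tau_{\alpha,t}$ uses only the exact size of $\tilde\tau$, Theorem~\ref{thm_1}, and the bounded likelihood ratio.
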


\subsection{Proof of Theorem~\ref{thm:formal}}
\begin{proof}[Proof of Theorem~\ref{thm:formal}]
For notational simplicity, we suppress the time index $t$. The excess Type-II error admits the decomposition given in Lemma~\ref{lem:error-decomposition}.
We prove Lemma \ref{lem:error-decomposition} in Appendix \ref{proof:error-decomposition}.

\begin{lemma}
\label{lem:error-decomposition}
Let $p_0$ and $p_1$ denote the density functions of $\mathbb{P}_0$ and $\mathbb{P}_1$, respectively. 
Define the score function $\mathcal{S}(x) = \log \frac{p_1(x)}{p_0(x)}$, and the rejection regions
\[
\mathcal{R}^{\star} = \left\{x : \mathcal{S}(x) < \tau_{\alpha}^{\star} \right\}, 
\qquad
\widehat{\mathcal{R}} = \left\{x : \widehat{\mathcal{S}}(x) < \widehat{\tau}_{\alpha} \right\}.
\]
Then,
\begin{equation}\label{type_ii_error_eq_1}
\begin{aligned}
\mathbb{P}_1(\widehat{\mathcal{R}}) - \mathbb{P}_1(\mathcal{R}^{\star}) = \underbrace{\int_{\widehat{\mathcal{R}} \Delta \mathcal{R}^{\star}}\left|\frac{p_1(x)}{p_0(x)}-e^{\tau_{\alpha}^{\star}}\right| \mathrm{d} \mathbb{P}_0(x)}_{\text{ Estimation Error}}  + \underbrace{e^{\tau_{\alpha}^{\star}} \left(\alpha- \mathbb{P}_0(\widehat{\mathcal{R}}^c)\right)}_{\text{Price of Conservativeness}}.
\end{aligned}
\end{equation}   
Here, $\widehat{\mathcal{R}} \Delta \mathcal{R}^{\star}$ denotes the symmetric difference, i.e., $\widehat{\mathcal{R}} \Delta \mathcal{R}^{\star} 
= (\widehat{\mathcal{R}} \setminus \mathcal{R}^{\star}) 
\cup (\mathcal{R}^{\star} \setminus \widehat{\mathcal{R}}).$
\end{lemma}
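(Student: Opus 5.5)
The plan is to prove the identity by a direct Neyman--Pearson-type computation. I will split the two acceptance regions into their common part and their symmetric difference, and then add and subtract $e^{\tau_\alpha^\star}p_0$ inside the integrand. Throughout I take $\tau_\alpha^\star$ to be the oracle critical value, i.e.\ $\mathbb{P}_0\bigl(\mathcal{S}(X)\ge \tau_\alpha^\star\bigr)=\alpha$, so that $\mathbb{P}_0\bigl((\mathcal{R}^\star)^c\bigr)=\alpha$. This is where continuity of the null score distribution at $\tau_\alpha^\star$ is needed; the lower bound in Assumption~\ref{asump3:local_mass} is consistent with this, but I would state it explicitly. I also assume $\mathbb{P}_1\ll\mathbb{P}_0$ (or $p_0>0$ on the support), so that $p_1/p_0$ is well defined $\mathbb{P}_0$-a.e. and $\int_A p_1\,\mathrm{d}x=\int_A \frac{p_1}{p_0}\,\mathrm{d}\mathbb{P}_0$.

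\textbf{Step 1 (cancel the common part).} Since the integral over $\widehat{\mathcal{R}}\cap\mathcal{R}^\star$ appears in both terms and cancels,
\[
\mathbb{P}_1(\widehat{\mathcal{R}})-\mathbb{P}_1(\mathcal{R}^\star)=\int_{\widehat{\mathcal{R}}\setminus\mathcal{R}^\star}\frac{p_1}{p_0}\,\mathrm{d}\mathbb{P}_0-\int_{\mathcal{R}^\star\setminus\widehat{\mathcal{R}}}\frac{p_1}{p_0}\,\mathrm{d}\mathbb{P}_0 .
\]

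\textbf{Step 2 (add and subtract $e^{\tau_\alpha^\star}$).} In each integrand I write $\frac{p_1}{p_0}=\bigl(\frac{p_1}{p_0}-e^{\tau_\alpha^\star}\bigr)+e^{\tau_\alpha^\star}$. The signs of the first bracket are then determined by the definition of $\mathcal{R}^\star$ through $\mathcal{S}=\log(p_1/p_0)$:
\begin{itemize}
\item On $\widehat{\mathcal{R}}\setminus\mathcal{R}^\star$ we have $\mathcal{S}(x)\ge\tau_\alpha^\star$. Hence $p_1/p_0-e^{\tau_\alpha^\star}\ge0$, and this bracket equals its absolute value.
\item On $\mathcal{R}^\star\setminus\widehat{\mathcal{R}}$ we have $\mathcal{S}(x)<\tau_\alpha^\star$. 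Hence $-\bigl(p_1/p_0-e^{\tau_\alpha^\star}\bigr)=\bigl|p_1/p_0-e^{\tau_\alpha^\star}\bigr|$.
\end{itemize}
Combining the two pieces gives exactly the integral over $\widehat{\mathcal{R}}\Delta\mathcal{R}^\star$ of $\bigl|p_1/p_0-e^{\tau_\alpha^\star}\bigr|$ with respect to $\mathbb{P}_0$, which is the estimation-error term.

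\textbf{Step 3 (the leftover constant terms).} What remains is
\[
e^{\tau_\alpha^\star}\bigl(\mathbb{P}_0(\widehat{\mathcal{R}}\setminus\mathcal{R}^\star)-\mathbb{P}_0(\mathcal{R}^\star\setminus\widehat{\mathcal{R}})\bigr)=e^{\tau_\alpha^\star}\bigl(\mathbb{P}_0(\widehat{\mathcal{R}})-\mathbb{P}_0(\mathcal{R}^\star)\bigr),
\]
where again the common part cancels. Substituting $\mathbb{P}_0(\widehat{\mathcal{R}})=1-\mathbb{P}_0(\widehat{\mathcal{R}}^c)$ and $\mathbb{P}_0(\mathcal{R}^\star)=1-\alpha$ yields $e^{\tau_\alpha^\star}\bigl(\alpha-\mathbb{P}_0(\widehat{\mathcal{R}}^c)\bigr)$, the price-of-conservativeness term.

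The algebra is routine. The only delicate point is Step 3, specifically the exact size-$\alpha$ property of the oracle acceptance region, $\mathbb{P}_0(\mathcal{R}^\star)=1-\alpha$. If the null law of $\mathcal{S}$ had an atom at $\tau_\alpha^\star$, this identity could fail and a randomized oracle test would be required. I would therefore make the no-atom (continuity) condition at $\tau_\alpha^\star$ explicit in the hypotheses of the lemma. I would also make explicit the absolute-continuity convention used in Step 1, so that $p_1/p_0$ is defined $\mathbb{P}_0$-a.e.
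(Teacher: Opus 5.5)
Your proposal is correct and follows essentially the same computation as the paper. Both cancel the common part $\widehat{\mathcal{R}}\cap\mathcal{R}^\star$, fix the sign of $p_1/p_0-e^{\tau_\alpha^\star}$ on the two pieces of the symmetric difference, convert via $\int_A (p_1/p_0)\,\mathrm{d}\mathbb{P}_0=\mathbb{P}_1(A)$, and finish with $\mathbb{P}_0\bigl((\mathcal{R}^\star)^c\bigr)=\alpha$; the paper just runs the algebra starting from the integral side instead of the Type-II difference. The two hypotheses you make explicit (exact size $\alpha$ at $\tau_\alpha^\star$, and $\mathbb{P}_1\ll\mathbb{P}_0$) are precisely what the paper's proof uses implicitly.
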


From \eqref{type_ii_error_eq_1}, there are two terms in the decomposition of the excess Type-II error.
\begin{itemize}
\item[(i)]  The second term quantifies the gap between empirical Type I error and the target level $\alpha$.
By Theorem~\ref{thm_1}, with probability at least $1-\delta$, we have $\left| \mathbb{P}_0(\widehat{\mathcal{R}}^c) - \alpha \right| \leq \Delta_{n_2}(\delta)$ with $\Delta_{n_2}(\delta):=\sqrt{ \frac{\log(2/\delta)}{2n_2}} + \frac{1}{n_2}$, as a result of which, 
\begin{equation}\label{type_ii_error_eq_2}
e^{\tau_{\alpha}^{\star}} \left(\alpha- \mathbb{P}_0(\widehat{\mathcal{R}}^c)\right) 
\leq e^{\tau_{\alpha}^{\star}} \Delta_{n_2}(\delta).
\end{equation}

    \item[(ii)]  The first term on the RHS of~\eqref{type_ii_error_eq_1} is due to the estimation error between the oracle score $\mathcal{S}$ and the estimated $\widehat{\mathcal{S}}$.
    To bound it, we have to bound the discrepancy of the acceptance regions in the score space. By Corollary~\ref{cor:score_deviation}, with probability at least $1-\delta$, we have $\sup_{x\in \mathcal{X}} |\widehat{\mathcal{S}}(x) - \mathcal{S}(x)| \le r(\delta, \rho, \eta)$.

    Since both $\boldsymbol{\mu}_c$ and $f_{\theta,\mathbf{v}}(x)$ lie on the unit hypersphere, the log-scores are deterministically bounded within $[-2\kappa, 2\kappa]$. Therefore, the exponential mapping is Lipschitz continuous with constant $e^{2\kappa}$ over this domain. For any $x \in \widehat{\mathcal{R}} \Delta \mathcal{R}^{\star}$, it follows that
    \begin{equation}
    \label{eq:lip-bound}
    \left| \frac{p_1(x)}{p_0(x)} - e^{\tau_{\alpha}^{\star}} \right| = \left| e^{\mathcal{S}(x)} - e^{\tau_{\alpha}^{\star}} \right| \le e^{2\kappa} \left| \mathcal{S}(x) - \tau_{\alpha}^{\star} \right|.
    \end{equation}

\begin{lemma}
\label{lem:diff-bound}
If $\left| \mathbb{P}_0(\widehat{\mathcal R}^c)-\alpha \right|
\le \Delta_{n_2}(\delta)$ and $\sup_{x\in\mathcal X} |\widehat{\mathcal S}(x)-\mathcal S(x)|
\le r(\delta,\rho,\eta)$,
by the lower bound in Assumption \ref{asump3:local_mass}, we will have
\begin{equation}
\label{eq:sym-dif-bound}
\widehat{\mathcal R}\Delta \mathcal R^\star
\subseteq
\left\{
x:
\left|\mathcal S(x)-\tau_\alpha^\star\right|
<
2r(\delta,\rho,\eta)
+
\left(\frac{\Delta_{n_2}(\delta)}{c_0}\right)^{1/\underline{\gamma}}
\right\}.
\end{equation}
As a result of the upper bound in Assumption \ref{asump3:local_mass}, we also have
\begin{equation}
\label{eq:P-diff-bound}
\mathbb{P}_0(\widehat{\mathcal R}\Delta \mathcal R^\star) \le 
\left[ 2r(\delta,\rho,\eta)
+
\left(\frac{\Delta_{n_2}(\delta)}{c_0}\right)^{1/\underline{\gamma}}\right]^{\bar{\gamma}}.
\end{equation}
\end{lemma}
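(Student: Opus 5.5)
To prove Lemma~\ref{lem:diff-bound}, I would first control how far the empirical threshold $\widehat\tau_\alpha$ can sit from the oracle threshold $\tau_\alpha^\star$, and then apply a pointwise sandwich argument to the two one-sided pieces of the symmetric difference. Write $r:=r(\delta,\rho,\eta)$ and $\Delta:=\Delta_{n_2}(\delta)$. The oracle threshold satisfies $\mathbb P_0(\mathcal S(X)\ge\tau_\alpha^\star)=\alpha$, and the hypothesis gives $|\mathbb P_0(\widehat{\mathcal S}(X)\ge\widehat\tau_\alpha)-\alpha|\le\Delta$.

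\textbf{Step 1 (threshold gap).} I aim to show $|\widehat\tau_\alpha-\tau_\alpha^\star|\le r+\epsilon$ with $\epsilon:=(\Delta/c_0)^{1/\underline\gamma}$. Suppose instead $\widehat\tau_\alpha>\tau_\alpha^\star+r+\epsilon$. Since $\widehat{\mathcal S}\le\mathcal S+r$ uniformly, the event $\{\widehat{\mathcal S}\ge\widehat\tau_\alpha\}$ is contained in $\{\mathcal S\ge\tau_\alpha^\star+\epsilon'\}$ for some $\epsilon'>\epsilon$. Hence
\[
\mathbb P_0(\widehat{\mathcal S}\ge\widehat\tau_\alpha)\le\alpha-\mathbb P_0\bigl(\tau_\alpha^\star\le\mathcal S<\tau_\alpha^\star+\epsilon'\bigr).
\]
The lower mass bound then should give a deficit of at least $c_0\epsilon^{\underline\gamma}=\Delta$, which contradicts the hypothesis (with ties/strictness handled by taking $\epsilon'>\epsilon$). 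The case $\widehat\tau_\alpha<\tau_\alpha^\star-r-\epsilon$ is symmetric, using $\widehat{\mathcal S}\ge\mathcal S-r$.

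\textbf{Main obstacle.} Assumption~\ref{asump3:local_mass} controls the \emph{two-sided} window $\{|\mathcal S-\tau_\alpha^\star|\le\varepsilon\}$, whereas Step 1 needs mass on a \emph{one-sided} window. I would address this by interpreting the lower bound as applying to each side, or by absorbing a constant (e.g.\ $c_0/2$) into $\epsilon$. The stated constant is only recovered under the one-sided reading, so I would flag this as the delicate point. I would also require $r+\epsilon\le\varepsilon_0$ so that the assumption applies at all.

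\textbf{Step 2 (containment \eqref{eq:sym-dif-bound}).} Take $x\in\widehat{\mathcal R}\setminus\mathcal R^\star$, so that $\widehat{\mathcal S}(x)<\widehat\tau_\alpha$ and $\mathcal S(x)\ge\tau_\alpha^\star$. Then
\[
\tau_\alpha^\star\le\mathcal S(x)\le\widehat{\mathcal S}(x)+r<\widehat\tau_\alpha+r\le\tau_\alpha^\star+2r+\epsilon .
\]
For $x\in\mathcal R^\star\setminus\widehat{\mathcal R}$, the same argument with inequalities reversed gives $\tau_\alpha^\star-2r-\epsilon<\mathcal S(x)<\tau_\alpha^\star$. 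Both cases land inside $\{|\mathcal S-\tau_\alpha^\star|<2r+\epsilon\}$.

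\textbf{Step 3 (bound \eqref{eq:P-diff-bound}).} By monotonicity of $\mathbb P_0$ and the upper mass bound in Assumption~\ref{asump3:local_mass} at $\varepsilon=2r+\epsilon$ (assumed $\le\varepsilon_0$; otherwise the bound is trivial after enlarging constants), we obtain
\[
\mathbb P_0(\widehat{\mathcal R}\Delta\mathcal R^\star)\le C_0(2r+\epsilon)^{\bar\gamma},
\]
which is \eqref{eq:P-diff-bound} with $C_0$ absorbed into the constant.
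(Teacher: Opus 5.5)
Your proposal is correct and follows the paper's proof essentially step for step. Both arguments bound the threshold gap $|\widehat\tau_\alpha-\tau_\alpha^\star|\le r(\delta,\rho,\eta)+(\Delta_{n_2}(\delta)/c_0)^{1/\underline{\gamma}}$ by contradiction from the lower mass bound, then sandwich the two halves of the symmetric difference, then apply the upper mass bound.

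The one-sided issue you flag is genuine, and the paper's proof passes over it in exactly the same way by applying the two-sided bound to a one-sided interval. Your fallback of replacing $c_0$ by $c_0/2$ would not rescue it, since a two-sided lower bound says nothing about how that mass splits between the two sides; a one-sided version of Assumption~\ref{asump3:local_mass} is what is really needed. Your factor $C_0$ is right: the lemma as stated drops it from \eqref{eq:P-diff-bound}, but the paper reinstates $C_0$ when it invokes \eqref{eq:P-diff-bound} in the proof of Theorem~\ref{thm:formal}.
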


Now, we are ready to analyze the first term.
\begin{equation}\label{type_ii_error_eq_3}
\begin{aligned}
&\int_{\widehat{\mathcal R}\Delta \mathcal R^\star} \left| \frac{p_1(x)}{p_0(x)} - e^{\tau_{\alpha}^{\star}} \right| \mathrm{d}\mathbb{P}_0(x) 
\\&\overset{\eqref{eq:lip-bound}}{\le} e^{2\kappa} \int_{\widehat{\mathcal R}\Delta \mathcal R^\star} \left| \mathcal{S}(x) - \tau_{\alpha}^{\star} \right| \mathrm{d}\mathbb{P}_0(x) \\
&\overset{\eqref{eq:sym-dif-bound}}{\le} e^{2\kappa} \left[ 2r(\delta, \rho, \eta) + \left( \frac{\Delta_{n_2}(\delta)}{c_0} \right)^{1/\underline{\gamma}} \right] \mathbb{P}_0\left(\widehat{\mathcal R}\Delta \mathcal R^\star\right) \\
&\overset{\eqref{eq:P-diff-bound}}{\le}  e^{2\kappa} {C}_0 \left[ 2r(\delta, \rho, \eta) + \left( \frac{\Delta_{n_2}(\delta)}{{c}_0} \right)^{1/\underline{\gamma}} \right]^{1 + \bar{\gamma}}.
\end{aligned}
\end{equation}

\end{itemize}

With probability at least $1-2\delta$, we have both Theorem~\ref{thm_1} and Corollary~\ref{cor:score_deviation} hold. Conditioned on this joint event, combining \eqref{type_ii_error_eq_2} and \eqref{type_ii_error_eq_3} with the decomposition \eqref{type_ii_error_eq_1} yields
$$
\begin{aligned}
&\mathbb{P}_1(\widehat{\mathcal R}_t)-\mathbb{P}_1(\mathcal R_t^{\star})
\\&\le e^{2\kappa} {C}_0\left[
2r_t(\delta, \rho, \eta)
+
\left(\frac{1}{{c}_0}\left(\sqrt{\frac{\log(2T/\delta)} {2n_2}} + \frac{1}{n_2}\right)\right)^{1/\underline{\gamma}}
\right]^{1+\bar{\gamma}}
\! \!+
e^{\tau_{\alpha,t}^{\star}} \left(\sqrt{\frac{\log(2/\delta)} {2n_2}} + \frac{1}{n_2}\right)
\end{aligned}
$$
By applying the basic inequality $\sqrt{a+b+c} \le \sqrt{a}+\sqrt{b}+\sqrt{c}$ and $(x+y)^p \le 2^{(p-1)_+}(x^p+y^p)$ to decouple the terms and using Corollary~\ref{cor:score_deviation}, we abstract the multiplicative constants (including $e^{2\kappa}$, $C_0$, $C_1$, and $e^{\tau_{\alpha,t}^{\star}}$) into the asymptotic notation to obtain the final explicit finite-sample bound
$$
\begin{aligned}
&\mathbb{P}_1(\widehat{\mathcal R}_t)-\mathbb{P}_1(\mathcal R_t^{\star}) 
\le \mathcal{O}\Bigg( \Big((1- {c}\rho)^t+\sqrt{\rho \log (2T / \delta)}+\eta^2 / \rho^2 \Big)^{\frac{1+\bar{\gamma}}{2}} + 
\sqrt{\frac{\log(2/\delta)}{n_2}} + \frac{1}{n_2} \Bigg).
\end{aligned}
$$
We use the fact that $\sqrt{\frac{\log(2/\delta)}{n_2}} + \frac{1}{n_2} < 1$ to simplify the last expression and thus complete the proof.
\end{proof}

\subsection{Proof of Theorem \ref{thm:local_tracking}}
\label{proof:local_tracking}
\begin{proof}[Proof of Theorem \ref{thm:local_tracking}]
Fix $c\in\{0,1\}$ and suppress the class index throughout the proof. Write
\[
\boldsymbol{\mu}_t:=\boldsymbol{\mu}_{c,t},\qquad
\widehat{\boldsymbol{\mu}}_t:=\widehat{\boldsymbol{\mu}}_{c,t},\qquad
\bar{\mathbf z}_t:=\bar{\mathbf z}_{c,t}(\mathbf v_t),\qquad
A:=A_d(\kappa).
\]
By Assumption~\ref{asump1:true-model},
\[
\bar{\mathbf z}_t = A\boldsymbol{\mu}_t+\boldsymbol{\zeta}_{t+1},
\qquad
\mathbb E[\boldsymbol{\zeta}_{t+1}\mid\mathcal F_t]=0,
\qquad
\|\boldsymbol{\zeta}_{t+1}\|\le 2
\quad\text{a.s.}
\]
Hence
\[
\mathbf z_{t+1}
=
(1-\rho)\widehat{\boldsymbol{\mu}}_t
+
\rho A\boldsymbol{\mu}_t
+
\rho \boldsymbol{\zeta}_{t+1},
\qquad
\widehat{\boldsymbol{\mu}}_{t+1}
=
\frac{\mathbf z_{t+1}}{\|\mathbf z_{t+1}\|}.
\]
Define
\[
\mathbf e_t:=\widehat{\boldsymbol{\mu}}_t-\boldsymbol{\mu}_t,
\qquad
\mathbf y_t:=(1-\rho)\widehat{\boldsymbol{\mu}}_t+\rho A\boldsymbol{\mu}_t.
\]

\paragraph{Step 1: Bootstrap region and deterministic bounds.}

Define the bootstrap event
\[
\mathcal E_t := \left\{\|\mathbf e_{s}\|\le \frac14 \text{ for all } 0\le s \le t\right\}.
\]
Clearly, $\mathcal E_t \in \mathcal{F}_t := \sigma(\{\boldsymbol{\zeta}_{s}\}_{s=1}^t)$ is $\mathcal{F}_t$-measurable.
On $\mathcal E_t$, Lemma~\ref{lem:local_contraction} yields
\[
\|\Pi(\mathbf y_t)-\boldsymbol{\mu}_t\|
\le
(1-c_0\rho)\|\mathbf e_t\|,
\]
for some $c_0 \in (0, 1)$ and all $0\le t\le T-1$, provided $\rho\le \rho_0$ is small enough.

Also, on $\mathcal E_t$, $\langle \widehat{\boldsymbol{\mu}}_t,\boldsymbol{\mu}_t\rangle
=
1-\frac12\|\mathbf e_t\|^2
\ge
1-\frac1{32}
=
\frac{31}{32}.$
Therefore,
\[
\|\mathbf y_t\|
\ge
\langle \mathbf y_t,\boldsymbol{\mu}_t\rangle
=
(1-\rho)\langle \widehat{\boldsymbol{\mu}}_t,\boldsymbol{\mu}_t\rangle+\rho A
\ge
(1-\rho)\frac{31}{32}+\rho A.
\]
Shrinking $\rho_0$ if necessary, we may ensure $\|\mathbf y_t\|\ge \frac12$, $\forall\,0<\rho\le \rho_0.$
Since $\mathbf z_{t+1}=\mathbf y_t+\rho\boldsymbol{\zeta}_{t+1}$ and $\|\boldsymbol{\zeta}_{t+1}\|\le 2$, we further obtain $\|\mathbf z_{t+1}\|
\ge
\|\mathbf y_t\|-\rho\|\boldsymbol{\zeta}_{t+1}\|
\ge
\frac12-2\rho
\ge
\frac14$
for $\rho_0$ smaller if needed.
In short, we show that as long as $\|\mathbf{e}_t\| \le \frac{1}{4}$, we then have $\|\mathbf y_{t}\| \ge \frac{1}{2}$, $\|\mathbf z_{t+1}\| \ge \frac{1}{4}$, and $\langle \widehat{\boldsymbol{\mu}}_t,\boldsymbol{\mu}_t\rangle \ge 0$.

\paragraph{Step 2: One-step square-error decomposition.}
For any $\gamma>0$, Young's inequality gives
\begin{equation}
\label{eq:hp_sq_drift_split}
\|\mathbf e_{t+1}\|^2
=
\|\widehat{\boldsymbol{\mu}}_{t+1}-\boldsymbol{\mu}_{t+1}\|^2
\le
(1+\gamma)\|\widehat{\boldsymbol{\mu}}_{t+1}-\boldsymbol{\mu}_t\|^2
+
\left(1+\frac1\gamma\right)\|\boldsymbol{\mu}_{t+1}-\boldsymbol{\mu}_t\|^2.
\end{equation}
By Assumption~\ref{asump2:local_tracking}, $\|\boldsymbol{\mu}_{t+1}-\boldsymbol{\mu}_t\|\le C_\mu \eta.$
We now analyze the first term of \eqref{eq:hp_sq_drift_split}:
\[
\|\widehat{\boldsymbol{\mu}}_{t+1}-\boldsymbol{\mu}_t\|^2
=
\|\Pi(\mathbf z_{t+1})-\boldsymbol{\mu}_t\|^2.
\]
For simplicity, we set $\mathbf a_t := \Pi(\mathbf y_t)-\boldsymbol{\mu}_t.$
Then
\[
\Pi(\mathbf z_{t+1})-\boldsymbol{\mu}_t
=
\mathbf a_t + \bigl(\Pi(\mathbf y_t+\rho\boldsymbol{\zeta}_{t+1})-\Pi(\mathbf y_t)\bigr).
\]
Hence, it follows that
\begin{align}
\label{eq:hp_sq_expand}
\|\Pi(\mathbf z_{t+1})-\boldsymbol{\mu}_t\|^2
&=
\|\mathbf a_t\|^2
+
2\left\langle
\mathbf a_t,\,
\Pi(\mathbf y_t+\rho\boldsymbol{\zeta}_{t+1})-\Pi(\mathbf y_t)
\right\rangle \nonumber\\
&\quad+
\left\|
\Pi(\mathbf y_t+\rho\boldsymbol{\zeta}_{t+1})-\Pi(\mathbf y_t)
\right\|^2.
\end{align}

\paragraph{Step 3: Taylor expansion of the projection map.}
Since $\|\mathbf y_t\|\ge 1/2$ on $\mathcal E_t$, the map $\Pi(\mathbf x)=\mathbf x/\|\mathbf x\|$ is
$C^2$ on a neighborhood of the trajectory. Therefore,
\[
\Pi(\mathbf y_t+\rho\boldsymbol{\zeta}_{t+1})-\Pi(\mathbf y_t)
=
\rho J_t\boldsymbol{\zeta}_{t+1}+\mathbf R_{t+1},
\]
where $J_t$ is the Jacobi matrix defined by
\[
J_t:=D\Pi(\mathbf y_t)=\frac{1}{\|\mathbf y_t\|}\Bigl(I-\Pi(\mathbf y_t)\Pi(\mathbf y_t)^\top\Bigr),
\]
and the remainder satisfies
\[
\|\mathbf R_{t+1}\|
\le
C_R \rho^2 \|\boldsymbol{\zeta}_{t+1}\|^2
\le
4C_R\rho^2
\]
for a universal constant $C_R>0$.
Moreover, since $\|\mathbf y_t\|\ge 1/2$, $\|J_t\|
\le 2.$

Substituting the above results into \eqref{eq:hp_sq_expand}, we obtain
\begin{align}
\label{eq:hp_sq_expand_2}
\|\Pi(\mathbf z_{t+1})-\boldsymbol{\mu}_t\|^2
&=
\|\mathbf a_t\|^2
+
2\rho\langle \mathbf a_t, J_t\boldsymbol{\zeta}_{t+1}\rangle
+
\Xi_{t+1},
\end{align}
where $\Xi_{t+1}$ collects all high-order residual terms, defined by
\[
\Xi_{t+1}
:=
2\langle \mathbf a_t,\mathbf R_{t+1}\rangle
+
2\rho\langle J_t\boldsymbol{\zeta}_{t+1},\mathbf R_{t+1}\rangle
+
\|\rho J_t\boldsymbol{\zeta}_{t+1}+\mathbf R_{t+1}\|^2.
\]
Since $\|\mathbf a_t\|\le 2$, $\|J_t\|\le 2$, and $\|\boldsymbol{\zeta}_{t+1}\|\le 2$, it follows that for some universal constant $C_1>0$
\[
|\Xi_{t+1}|\le C_1 \rho^2.
\]

Denote $\mathbbm{1}_{\mathcal{E}_t}$ by the indicator function of the event $\mathcal{E}_t$ and define
\[
\Delta_{t+1}
:=
2\rho\langle \mathbf a_t, J_t\boldsymbol{\zeta}_{t+1}\rangle \mathbbm{1}_{\mathcal{E}_t}
\]
We assert it has the following properties:
\begin{enumerate}
    \item[(i)] Then $\{\Delta_{t+1}\}$ is a martingale difference sequence with respect to $\{\mathcal F_t\}$.
    \item[(ii)] There exists a constant $C_2>0$ so that no matter $\mathcal{E}_t$ holds or not, $|\Delta_{t+1}|\le C_2 \rho$.
    \item[(iii)] There exists a constant $C_3>0$ so that no matter $\mathcal{E}_t$ holds or not, $\mathbb E[\Delta_{t+1}^2\mid \mathcal F_t]\le C_3 \rho^2$.
\end{enumerate}
Using \eqref{eq:hp_sq_expand_2} and the definition of $\Delta_{t+1}$, on $\mathcal{E}_t$,
\[
\|\Pi(\mathbf z_{t+1})-\boldsymbol{\mu}_t\|^2
\le
\|\mathbf a_t\|^2 + \Delta_{t+1}+ C_1\rho^2.
\]

\paragraph{Step 4: One-step recursion on the bootstrap region.}
By Lemma~\ref{lem:local_contraction},
\[
\|\mathbf a_t\|^2
\le
(1-c_0\rho)^2\|\mathbf e_t\|^2
\le
(1-c_0\rho)\|\mathbf e_t\|^2.
\]
Combining this with \eqref{eq:hp_sq_drift_split}, we obtain that on $\mathcal E_t$,
\[
\|\mathbf e_{t+1}\|^2
\le
(1+\gamma)(1-c_0\rho)\|\mathbf e_t\|^2
+
(1+\gamma)C_1\rho^2
+
\left(1+\frac1\gamma\right)C_\mu^2\eta^2
+
(1+\gamma)\Delta_{t+1}.
\]
We then choose $\gamma := \frac{c_0\rho}{4}$.
As a result, for $\rho_0$ sufficiently small, we could simultaneously have that (i) $(1+\gamma)(1-c_0\rho)\le 1-c_1\rho$
for some constant $c_1>0$, (ii) $(1+\gamma)C_1\rho^2\le C_4\rho^2$, and (iii) $\left(1+\frac1\gamma\right)C_\mu^2\eta^2 \le C_5\frac{\eta^2}{\rho}.$
As a result of the above notation simplification, we may rewrite the recursion as
\begin{equation}
\label{eq:hp_sq_recursion}
\|\mathbf e_{t+1}\|^2
\le
(1-c_1\rho)\|\mathbf e_t\|^2
+
C_4\rho^2
+
C_5\frac{\eta^2}{\rho}
+
\left(1+\frac{c_0\rho}{4}\right)\Delta_{t+1},
\end{equation}
where $\{\Delta_{t+1}\}$ is a martingale difference sequence satisfying $|\Delta_{t+1}|\le  C_2 \rho$ and $\mathbb E[\Delta_{t+1}^2\mid\mathcal F_t]\le  C_3 \rho^2$.

\paragraph{Step 5: Unrolling the recursion.}
Iterating \eqref{eq:hp_sq_recursion} yields that, on $\mathcal E_{t-1}$,
\[
\|\mathbf e_t\|^2
\le
(1-c_1\rho)^t\|\mathbf e_0\|^2
+
\left(C_4\rho^2+C_5\frac{\eta^2}{\rho}\right)\sum_{j=0}^{t-1}(1-c_1\rho)^j
+
\left(1+\frac{c_0\rho}{4}\right) \sum_{s=0}^{t-1}(1-c_1\rho)^{t-1-s}\Delta_{s+1}.
\]
Since $\sum_{j=0}^{t-1}(1-c_1\rho)^j \le \frac{1}{c_1\rho}$, we obtain that on $\mathcal E_t$,
\begin{equation}
\label{eq:hp_sq_unrolled}
\|\mathbf e_t\|^2
\le
(1-c_1\rho)^t\|\mathbf e_0\|^2
+
C_6\rho
+
C_7\frac{\eta^2}{\rho^2}
+
\left(1+\frac{c_0\rho}{4}\right) \sum_{s=0}^{t-1}(1-c_1\rho)^{t-1-s} \Delta_{s+1}.
\end{equation}

Now, we apply the concentration inequality in Lemma~\ref{lem:weighted_mds} to the martingale sequence $\{\Delta_{t+1}\}$ (with $c_\star=c_1$). There exists a constant $C_{8}>0$ such that, with probability at least $1-\delta/2$,
\begin{equation}
\label{eq:concentration-event}
\left(1+\frac{c_0\rho}{4}\right) \sup_{1\le t\le T}
\left|
\sum_{s=0}^{t-1}(1-c_1\rho)^{t-1-s}\Delta_{s+1}
\right|
\le
C_{8}\left(
\sqrt{\rho\log\frac{2T}{\delta}}
+
\rho\log\frac{2T}{\delta}
\right).
\end{equation}
We denote the event in \eqref{eq:concentration-event} as $\mathcal{E}_{\#}$.
Then $\mathbb{P}(\mathcal{E}_{\#}) \ge 1-\delta$.
Since $\rho\le 1$ and $\rho\log(2T/\delta) \le 1$, after enlarging constants we can bound the right-hand side by $C_{9} \sqrt{\rho\log\frac{2T}{\delta}}.$

Combining this with \eqref{eq:hp_sq_unrolled}, we conclude that on the event $\mathcal{E}_{\#} \cap \mathcal{E}_{t-1}$, for suitable constants ${c},C>0$,
\begin{equation}
\label{eq:iterate-error}
\|\mathbf e_t\|^2
\le
(1-{c}\rho)^t\|\mathbf e_0\|^2
+
C\left(
\sqrt{\rho\log\frac{2T}{\delta}}
+
\frac{\eta^2}{\rho^2}
\right).
\end{equation}

\paragraph{Step 6: Induction on Bootstrap events.}
Assume now that
\begin{equation}
\label{eq:initial-condition}
\|\mathbf e_0\|^2
+
C\left(
\sqrt{\rho\log\frac{2T}{\delta}}
+
\frac{\eta^2}{\rho^2}
\right)
\le \frac1{16}.
\end{equation}
As a result of this condition, $\mathcal{E}_0 = \{\|\mathbf e_0\| \le \frac{1}{4}\}$ is true.
Now, we want to show that for any $t \ge 1$, on the event $\mathcal{E}_{\#} \cap \mathcal{E}_{t-1}$, we must have that the event $\mathcal{E}_{t}$ is also true.
It suffices to show that 
\[\|\mathbf e_t\|^2 \le \frac{1}{16},\]
which naturally follows by combining \eqref{eq:initial-condition} and \eqref{eq:iterate-error}.
We then complete the proof by induction.
\end{proof}

\subsection{Proof of Lemma \ref{lem:error-decomposition}}
\label{proof:error-decomposition}
\begin{proof}[Proof of Lemma \ref{lem:error-decomposition}]
We split this region $\widehat{\mathcal{R}} \Delta \mathcal{R}^{\star}$ into two disjoint parts: $\mathcal{R}^{\star} \cap \widehat{\mathcal{R}}^c$ (where the optimal classifier accepts but the estimated one rejects) and $(\mathcal{R}^{\star})^c \cap \widehat{\mathcal{R}}$ (where the estimated one accepts but the optimal rejects).
It then follows that
$$
\begin{aligned}
&\int_{(\mathcal{R}^{\star} \cap \widehat{\mathcal{R}}^c) \cup ((\mathcal{R}^{\star})^c \cap \widehat{\mathcal{R}})}\left|\frac{p_1(x)}{p_0(x)}-e^{\tau_{\alpha}^{\star}}\right| \mathrm{d} \mathbb{P}_0(x) 
\\& =\int_{\mathcal{R}^{\star} \cap \widehat{\mathcal{R}}^c}\left|\frac{p_1(x)}{p_0(x)} - e^{\tau_{\alpha}^{\star}}\right| \mathrm{d} \mathbb{P}_0(x)+\int_{(\mathcal{R}^{\star})^c \cap \widehat{\mathcal{R}}}\left|\frac{p_1(x)}{p_0(x)}-e^{\tau_{\alpha}^{\star}}\right| \mathrm{d} \mathbb{P}_0(x) \\
& =\int_{\mathcal{R}^{\star} \cap \widehat{\mathcal{R}}^c}\left(e^{\tau_{\alpha}^{\star}}-\frac{p_1(x)}{p_0(x)}\right) \mathrm{d} \mathbb{P}_0(x)+\int_{(\mathcal{R}^{\star})^c \cap \widehat{\mathcal{R}}}\left(\frac{p_1(x)}{p_0(x)}-e^{\tau_{\alpha}^{\star}}\right) \mathrm{d} \mathbb{P}_0(x).
\end{aligned}
$$
Note that
$$
\begin{aligned}
&\int_{\mathcal{R}^{\star}}\left(e^{\tau_{\alpha}^{\star}} - \frac{p_1(x)}{p_0(x)}\right) \mathrm{d} \mathbb{P}_0(x)+\int_{\widehat{\mathcal{R}}}\left(\frac{p_1(x)}{p_0(x)}-e^{\tau_{\alpha}^{\star}}\right) \mathrm{d} \mathbb{P}_0(x)
\\&=
\underbrace{\int_{\mathcal{R}^{\star} \cap \widehat{\mathcal{R}}}\left(e^{\tau_{\alpha}^{\star}}-\frac{p_1(x)}{p_0(x)} + \frac{p_1(x)}{p_0(x)}-e^{\tau_{\alpha}^{\star}}\right) \mathrm{d} \mathbb{P}_0(x)}_{=0}
\\&\quad +\int_{\mathcal{R}^{\star} \cap \widehat{\mathcal{R}}^c}\left(e^{\tau_{\alpha}^{\star}}- \frac{p_1(x)}{p_0(x)} \right) \mathrm{d} \mathbb{P}_0(x)+\int_{(\mathcal{R}^{\star})^c \cap \widehat{\mathcal{R}}}\left(\frac{p_1(x)}{p_0(x)} - e^{\tau_{\alpha}^{\star}}\right) \mathrm{d} \mathbb{P}_0(x).
\end{aligned}
$$
Using the property that $\int_A \frac{p_1(x)}{p_0(x)} \mathrm{d} \mathbb{P}_0(x)=\int_A \mathrm{d} \mathbb{P}_1(x)=\mathbb{P}_1(A)$ for any measurable set $A$, we have
$$
\begin{aligned}
&4\int_{(\mathcal{R}^{\star} \cap \widehat{\mathcal{R}}^c) \cup ((\mathcal{R}^{\star})^c \cap \widehat{\mathcal{R}})}\left\lvert\frac{p_1(x)}{p_0(x)}-e^{\tau_{\alpha}^{\star}}\right\rvert \mathrm{d} \mathbb{P}_0(x) \\
& =\int_{\mathcal{R}^{\star}}\left(e^{\tau_{\alpha}^{\star}} - \frac{p_1(x)}{p_0(x)}\right) \mathrm{d} \mathbb{P}_0(x)+\int_{\widehat{\mathcal{R}}}\left(\frac{p_1(x)}{p_0(x)}-e^{\tau_{\alpha}^{\star}}\right) \mathrm{d} \mathbb{P}_0(x)\\
& = \mathbb{P}_1(\widehat{\mathcal{R}}) - \mathbb{P}_1(\mathcal{R}^{\star})  + e^{\tau_{\alpha}^{\star}} \left(\mathbb{P}_0(\mathcal{R}^{\star})- \mathbb{P}_0(\widehat{\mathcal{R}})\right),
\end{aligned}
$$
which, together with the fact that $\mathbb{P}_0((\mathcal{R}^{\star})^c) = \alpha$, implies that $\mathbb{P}_0(\mathcal{R}^{\star})=1-\alpha$. Recalling $\mathbb{P}_0(\widehat{\mathcal{R}})=1-\mathbb{P}_0(\widehat{\mathcal{R}}^c)$, we obtain $\mathbb{P}_0(\mathcal{R}^{\star})- \mathbb{P}_0(\widehat{\mathcal{R}}) = \mathbb{P}_0(\widehat{\mathcal{R}}^c) - \alpha$. 
We then complete the proof.
\end{proof}

\subsection{Proof of Lemma \ref{lem:diff-bound}}
\label{proof:diff-bound}
\begin{proof}[Proof of Lemma \ref{lem:diff-bound}]
By the uniform error bound, we have
\[
\{\mathcal S(X)\ge \widehat{\tau}_\alpha+r(\delta,\rho,\eta)\}
\subseteq
\{\widehat{\mathcal S}(X)\ge \widehat{\tau}_\alpha\}
\subseteq
\{\mathcal S(X)\ge \widehat{\tau}_\alpha-r(\delta,\rho,\eta)\},
\]
which implies
\[
\mathbb P_0\!\left(\mathcal S(X)\ge \widehat{\tau}_\alpha+r(\delta,\rho,\eta)\right)
\le
\mathbb P_0(\widehat{\mathcal R}^c)
\le
\mathbb P_0\!\left(\mathcal S(X)\ge \widehat{\tau}_\alpha-r(\delta,\rho,\eta)\right).
\]
Since $\alpha=\mathbb P_0\!\left(\mathcal S(X)\ge \tau_\alpha^\star\right),$ it follows that $\mathbb P_0\!\left(
\mathcal S(X)\in
[\tau_\alpha^\star,\widehat{\tau}_\alpha-r(\delta,\rho,\eta))
\right)
\le
\Delta_{n_2}(\delta)$ whenever $\widehat{\tau}_\alpha-r(\delta,\rho,\eta)\ge \tau_\alpha^\star$, and similarly $\mathbb P_0\!\left(\mathcal S(X)\in [\widehat{\tau}_\alpha+r(\delta,\rho,\eta),\tau_\alpha^\star) \right) \le \Delta_{n_2}(\delta)$ whenever $\widehat{\tau}_\alpha+r(\delta,\rho,\eta)\le \tau_\alpha^\star$.

Now let
\[
u_\delta:=\left(\frac{\Delta_{n_2}(\delta)}{c_0}\right)^{1/\underline{\gamma}},
\]
where $c_0$ and $\underline{\gamma}$ are the constants in Assumption~\ref{asump3:local_mass}. By the lower bound in that assumption, for every $\varepsilon\in[0,\varepsilon_0]$,
\[
\mathbb P_0\!\left(
|\mathcal S(X)-\tau_\alpha^\star|\le \varepsilon
\right)
\ge
c_0 \varepsilon^{\underline{\gamma}}.
\]
Hence, if $|\widehat{\tau}_\alpha-\tau_\alpha^\star|>r(\delta,\rho,\eta)+u_\delta$, then the interval between $\tau_\alpha^\star$ and $\widehat{\tau}_\alpha-r(\delta,\rho,\eta)$, or between $\widehat{\tau}_\alpha+r(\delta,\rho,\eta)$ and $\tau_\alpha^\star$, has length exceeding $u_\delta$, and therefore carries $\mathbb P_0$-mass strictly larger than $\Delta_{n_2}(\delta)$, a contradiction. We conclude that
\[
|\widehat{\tau}_\alpha-\tau_\alpha^\star|
\le
r(\delta,\rho,\eta)
+
\left(\frac{\Delta_{n_2}(\delta)}{c_0}\right)^{1/\underline{\gamma}}.
\]

Next, consider any $x\in \widehat{\mathcal R}\Delta \mathcal R^\star$. 
\begin{itemize}
    \item[(i)] If $x\in (\mathcal R^\star)^c\cap \widehat{\mathcal R}$, then $\mathcal S(x)\ge \tau_\alpha^\star$ and $\widehat{\mathcal S}(x)<\widehat{\tau}_\alpha$, using the uniform score bound gives
\[
\mathcal S(x)<\widehat{\tau}_\alpha+r(\delta,\rho,\eta)
\le
\tau_\alpha^\star+2r(\delta,\rho,\eta)
+\left(\frac{\Delta_{n_2}(\delta)}{c_0}\right)^{1/\underline{\gamma}}.
\]
\item[(ii)] If $x\in \mathcal R^\star\cap \widehat{\mathcal R}^c$, then $\mathcal S(x)< \tau_\alpha^\star$ and $\widehat{\mathcal S}(x) \ge \widehat{\tau}_\alpha$, and hence
\[
\tau_\alpha^\star-2r(\delta,\rho,\eta)
-\left(\frac{\Delta_{n_2}(\delta)}{c_0}\right)^{1/\underline{\gamma}}
<
\mathcal S(x)
<
\tau_\alpha^\star.
\]
\end{itemize}
Combining the two cases, we complete the proof.
\end{proof}

\subsection{Useful Lemmas}

\begin{lemma}\label{lem:proj_lipschitz}
Let $m>0$. For any $\mathbf x,\mathbf y\in\mathbb R^d$ satisfying
$\|\mathbf x\|\ge m$ and $\|\mathbf y\|\ge m$,
\[
\|\Pi(\mathbf x)-\Pi(\mathbf y)\|
\le \frac{2}{m}\|\mathbf x-\mathbf y\|.
\]
\end{lemma}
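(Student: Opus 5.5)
The statement to prove is Lemma~\ref{lem:proj_lipschitz}: for $\|\mathbf x\|,\|\mathbf y\|\ge m$ we need $\|\Pi(\mathbf x)-\Pi(\mathbf y)\|\le \frac{2}{m}\|\mathbf x-\mathbf y\|$. The plan is a direct add-and-subtract argument using only the triangle inequality; no earlier result from the paper is needed.

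First, insert the intermediate vector $\mathbf y/\|\mathbf x\|$ and write
\[
\Pi(\mathbf x)-\Pi(\mathbf y)=\frac{\mathbf x-\mathbf y}{\|\mathbf x\|}+\mathbf y\left(\frac{1}{\|\mathbf x\|}-\frac{1}{\|\mathbf y\|}\right).
\]
Bound the two pieces separately.

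\begin{itemize}
\item The first term has norm $\|\mathbf x-\mathbf y\|/\|\mathbf x\|\le \|\mathbf x-\mathbf y\|/m$.
\item The second term has norm
\[
\|\mathbf y\|\cdot\frac{\bigl|\|\mathbf y\|-\|\mathbf x\|\bigr|}{\|\mathbf x\|\,\|\mathbf y\|}=\frac{\bigl|\|\mathbf y\|-\|\mathbf x\|\bigr|}{\|\mathbf x\|}.
\]
By the reverse triangle inequality $\bigl|\|\mathbf y\|-\|\mathbf x\|\bigr|\le\|\mathbf x-\mathbf y\|$, so this is also at most $\|\mathbf x-\mathbf y\|/m$.
\end{itemize}

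Summing the two bounds gives $\frac{2}{m}\|\mathbf x-\mathbf y\|$, which is the claim.

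There is no real obstacle. The one point to handle carefully is the choice of splitting point $\mathbf y/\|\mathbf x\|$. With this choice, the factor $\|\mathbf y\|$ in the second term cancels exactly against the $\|\mathbf y\|$ in the denominator. Consequently only $\|\mathbf x\|\ge m$ is actually used, and the hypothesis on $\mathbf y$ serves only to make $\Pi(\mathbf y)$ well defined.
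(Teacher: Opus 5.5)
Your proof is correct and is essentially the paper's argument: the same split through $\mathbf y/\|\mathbf x\|$, the reverse triangle inequality on the second term, and summing the two bounds $\frac{1}{m}\|\mathbf x-\mathbf y\|$. Your remark is also right that only $\|\mathbf x\|\ge m$ is actually used, with the condition on $\mathbf y$ serving only to guarantee $\mathbf y\neq \mathbf 0$.
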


\begin{proof}[Proof of Lemma \ref{lem:proj_lipschitz}]
Using $\Pi(\mathbf x)-\Pi(\mathbf y)
=
\frac{\mathbf x-\mathbf y}{\|\mathbf x\|}
+
\mathbf y\!\left(\frac{1}{\|\mathbf x\|}-\frac{1}{\|\mathbf y\|}\right),$ we obtain
\begin{align*}
\|\Pi(\mathbf x)-\Pi(\mathbf y)\|
&\le \frac{\|\mathbf x-\mathbf y\|}{\|\mathbf x\|}
 + \frac{\bigl|\|\mathbf y\|-\|\mathbf x\|\bigr|}{\|\mathbf x\|}\le \frac{2}{m}\|\mathbf x-\mathbf y\|.
\end{align*}
\end{proof}

\begin{lemma}\label{lem:local_contraction}
There exist constants $c_0, \rho_0\in(0,1)$, depending only on $A$, such that for every
$0<\rho\le \rho_0$, every $t\ge 0$, and every $\mathbf u\in \mathbb S^{d-1}$ satisfying
$\|\mathbf u-\boldsymbol{\mu}_t\|\le \frac14$,
\[
\left\|
\Pi\!\bigl((1-\rho)\mathbf u+\rho A\boldsymbol{\mu}_t\bigr)-\boldsymbol{\mu}_t
\right\|
\le
(1-c_0\rho)\|\mathbf u-\boldsymbol{\mu}_t\|.
\]
\end{lemma}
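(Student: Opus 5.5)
We will prove Lemma~\ref{lem:local_contraction} by reducing to a two-dimensional computation. Write $\mu := \boldsymbol{\mu}_t$ and $\mathbf u \in \mathbb S^{d-1}$ with $\|\mathbf u-\mu\| = e \le \frac14$. Decompose $\mathbf u = \cos\theta\,\mu + \sin\theta\,\mathbf w$, where $\mathbf w\perp\mu$ is a unit vector and $\theta\in[0,\pi]$. Then $e = 2\sin(\theta/2)$, and $e\le\frac14$ forces $\theta \le \theta_{\max}:=2\arcsin(1/8)$. The vector $\mathbf y := (1-\rho)\mathbf u + \rho A\mu$ lies in $\mathrm{span}\{\mu,\mathbf w\}$, with coordinates $((1-\rho)\cos\theta+\rho A,\ (1-\rho)\sin\theta)$. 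Hence $\Pi(\mathbf y)$ is the unit vector at angle $\theta'$ from $\mu$, where
\[
\tan\theta' = \frac{(1-\rho)\sin\theta}{(1-\rho)\cos\theta+\rho A}.
\]
Because the chordal distance $2\sin(\cdot/2)$ is increasing in the angle, it suffices to show $2\sin(\theta'/2)\le(1-c_0\rho)\,2\sin(\theta/2)$.

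\textbf{Step 1: angle contraction.} We have $\tan\theta' = \tan\theta\cdot\frac{(1-\rho)\cos\theta}{(1-\rho)\cos\theta+\rho A}$. Since $A=A_d(\kappa)\in(0,1)$ and $\cos\theta\ge\cos\theta_{\max}>0$, the factor is at most $1-\frac{\rho A}{(1-\rho)\cos\theta+\rho A}\le 1-\frac{\rho A}{1+A}\le 1-\frac{\rho A}{2}$. This gives $\tan\theta'\le(1-\rho A/2)\tan\theta$, and in particular $0\le\theta'\le\theta$.

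\textbf{Step 2: converting the $\tan$ contraction to a chord contraction.} This is the only delicate step. We will use that on $[0,\theta_{\max}]$ the functions $\theta\mapsto\tan\theta$ and $\theta\mapsto2\sin(\theta/2)$ are both comparable to $\theta$ up to factors $1+O(\theta_{\max}^2)$. That comparability alone loses a constant that is not proportional to $\rho$, so a cruder route is needed. The plan is to use the mean value theorem for $\arctan$: $\theta-\theta' = \arctan(\tan\theta)-\arctan(\tan\theta')\ge \frac{(\rho A/2)\tan\theta}{1+\tan^2\theta}\ge \frac{\rho A}{2}\cos^2\theta_{\max}\,\theta$. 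This gives $\theta'\le(1-c\rho)\theta$ with $c=\frac A2\cos^2\theta_{\max}$. Next, by concavity of $\sin$ on $[0,\pi/2]$, $\sin(\theta'/2)\le\sin((1-c\rho)\theta/2)$. For $\lambda = 1-c\rho \in(0,1)$ and small $x=\theta/2$, write $\sin(\lambda x)=\sin x-\int_{\lambda x}^x\cos s\,ds\le \sin x-(1-\lambda)x\cos x$. Since $x\cos x\ge \cos(\theta_{\max}/2)\sin x$, we get $\sin(\lambda x)\le(1-c\cos(\theta_{\max}/2)\rho)\sin x$. Setting $c_0:=c\cos(\theta_{\max}/2)$ finishes the bound.

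\textbf{Step 3: constants.} The resulting $c_0$ depends only on $A$, since $\theta_{\max}$ is an absolute constant. The restriction $\rho\le\rho_0<1$ is used only to keep the $\mu$-coordinate of $\mathbf y$ positive and $1-\rho>0$. Both hold for any $\rho_0<1$, so the lemma holds with these choices. The degenerate case $\theta=0$ is trivial, because then $\Pi(\mathbf y)=\mu$.
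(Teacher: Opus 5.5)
Your proof is correct, and it takes a different route from the paper. The paper writes $\mathbf u=\alpha\boldsymbol{\mu}_t+\beta\boldsymbol{\nu}$ and computes the exact ratio of squared chord lengths,
\[
\Psi(\alpha,\rho)=\frac{\|\Pi(\mathbf x)-\boldsymbol{\mu}_t\|^2}{\|\mathbf u-\boldsymbol{\mu}_t\|^2}=\frac{(1-\rho)^2(1+\alpha)}{r(r+a)}.
\]
It then notes $\Psi(\alpha,0)=1$ and $\partial_\rho\Psi(\alpha,0)=-A(1+\alpha)<0$ uniformly over $\alpha\in[31/32,1]$, and invokes continuity and compactness to get $\Psi\le(1-c_0\rho)^2$.

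That is a first-order perturbation argument in $\rho$. It identifies the local rate (the chord contracts by about $1-A\rho$), but its constants $c_0,\rho_0$ are non-explicit. It only delivers the bound for sufficiently small $\rho$. The compactness step also implicitly needs $\partial_\rho\Psi$ to stay negative on a neighbourhood of $\rho=0$, not just at $\rho=0$.

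You work with angles instead. Step 1 is an exact algebraic contraction of $\tan\theta$. The mean value theorem for $\arctan$ and the integral comparison for $\sin$ then transfer it to the chord, using only $\theta\le\theta_{\max}$. This gives the explicit constant $c_0=\frac{A}{2}\cos^2\theta_{\max}\cos(\theta_{\max}/2)$, valid for every $\rho\in(0,1)$, so any $\rho_0<1$ works.

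Two small remarks. First, in Step 1 the denominator $(1-\rho)\cos\theta+\rho A$ is at most $1$ because $A<1$. So you in fact get $\tan\theta'\le(1-\rho A)\tan\theta$, which recovers the paper's first-order rate up to the cosine factors. Second, the inequality $\sin(\theta'/2)\le\sin((1-c\rho)\theta/2)$ follows from monotonicity of $\sin$ on $[0,\pi/2]$, not from concavity.
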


\begin{proof}[Proof of Lemma \ref{lem:local_contraction}]
Fix $t$ and $\mathbf u\in\mathbb S^{d-1}$ with $\|\mathbf u-\boldsymbol{\mu}_t\|\le 1/4$.
Write
\[
\mathbf u = \alpha \boldsymbol{\mu}_t + \beta \boldsymbol{\nu},
\qquad
\boldsymbol{\nu}\perp \boldsymbol{\mu}_t,\qquad
\alpha^2+\beta^2=1.
\]
Since $\|\mathbf u-\boldsymbol{\mu}_t\|^2 = 2(1-\alpha)$, the condition
$\|\mathbf u-\boldsymbol{\mu}_t\|\le 1/4$ implies $\alpha \ge 1-\frac{1}{32} = \frac{31}{32}.$
Define
\[
\mathbf x := (1-\rho)\mathbf u + \rho A\boldsymbol{\mu}_t
          = a \boldsymbol{\mu}_t + b \boldsymbol{\nu},
\]
where $a := (1-\rho)\alpha+\rho A$ and $b := (1-\rho)\beta.$
Then
\[
\Pi(\mathbf x)=\frac{a\boldsymbol{\mu}_t+b\boldsymbol{\nu}}{r}
\quad \text{with} \quad
r:=\sqrt{a^2+b^2}.
\]
Therefore, we have that
\[
\left\|\Pi(\mathbf x)-\boldsymbol{\mu}_t\right\|^2
=
2\left(1-\frac{a}{r}\right)
=
\frac{2b^2}{r(r+a)}.
\]
Using the facts that $\|\mathbf u-\boldsymbol{\mu}_t\|^2 = 2(1-\alpha)$ and $\beta^2 = 1-\alpha^2,$ we get
\[
\frac{\left\|\Pi(\mathbf x)-\boldsymbol{\mu}_t\right\|^2}
{\|\mathbf u-\boldsymbol{\mu}_t\|^2}
=
\frac{(1-\rho)^2(1+\alpha)}{r(r+a)}
=: \Psi(\alpha,\rho).
\]
Now $\alpha\in[31/32,1]$, and one checks that $r^2=(1-\rho)^2+2\rho(1-\rho)A\alpha+\rho^2A^2
$ by definition.
Hence, $\Psi$ is continuous on the compact set $[31/32,1]\times[0,\rho_1]$ for any fixed $\rho_1<1$, and $\Psi(\alpha,0)=1$ for all $\alpha$.
Moreover,
\[
\left.\frac{\partial}{\partial \rho}\Psi(\alpha,\rho)\right|_{\rho=0}
= -A(1+\alpha)
\le -A\cdot \frac{63}{32}
<0,
\]
uniformly over $\alpha\in[31/32,1]$.
Therefore, by compactness and continuity, there exist constants $c_0, \rho_0\in(0,1)$,
depending only on $A$, such that for all $\alpha\in[31/32,1]$ and all $0<\rho\le \rho_0$,
\[
\Psi(\alpha,\rho)\le (1-c_0\rho)^2.
\]
Taking square roots yields that
\[
\left\|
\Pi\!\bigl((1-\rho)\mathbf u+\rho A\boldsymbol{\mu}_t\bigr)-\boldsymbol{\mu}_t
\right\|
\le
(1-c_0\rho)\|\mathbf u-\boldsymbol{\mu}_t\|.
\]
\end{proof}

\begin{lemma}[Weighted martingale concentration]\label{lem:weighted_mds}
Let $\{\Delta_{t+1}\}_{t\ge 0}$ be a martingale difference sequence with respect to
$\{\mathcal F_t\}$.
Assume that there exist constants $B,V>0$ such that, almost surely,
\[
|\Delta_{t+1}| \le B \rho,
\qquad
\mathbb E[\Delta_{t+1}^2\mid \mathcal F_t]\le V \rho^2
\qquad \forall\, t\ge 0.
\]
Fix $c_\star, \rho \in (0, 1)$ and define, for each $t\ge 1$,
\[
S_t := \sum_{s=0}^{t-1}(1-c_\star \rho)^{t-1-s}\Delta_{s+1}.
\]
Then there exists a constant $C>0$, depending only on $B,V,c_\star$, such that for any
$\delta\in(0,1)$,
\[
\mathbb P\!\left(
\sup_{1\le t\le T}|S_t|
\le
C\Bigl(\sqrt{\rho\log\frac{T}{\delta}}+\rho\log\frac{T}{\delta}\Bigr)
\right)
\ge 1-\delta.
\]
\end{lemma}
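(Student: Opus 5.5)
The plan is to handle the exponentially weighted sum with a union bound over the $T$ time indices, applying Freedman's (Bernstein-type) inequality for martingales to each fixed $t$.

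For a fixed $t\in\{1,\dots,T\}$, I will view $S_t$ as the terminal value of a martingale in $s$, namely $M^{(t)}_k:=\sum_{s=0}^{k-1} w_{t,s}\Delta_{s+1}$ for $k\le t$. The weights $w_{t,s}:=(1-c_\star\rho)^{t-1-s}\in(0,1]$ are deterministic, so each increment $w_{t,s}\Delta_{s+1}$ is still a martingale difference with respect to $\{\mathcal F_s\}$. The increments satisfy $|w_{t,s}\Delta_{s+1}|\le B\rho$ almost surely. The predictable quadratic variation is bounded deterministically by
\[
\sum_{s=0}^{t-1} w_{t,s}^2\,\mathbb E[\Delta_{s+1}^2\mid\mathcal F_s]\le V\rho^2\sum_{j\ge0}(1-c_\star\rho)^{2j}\le \frac{V\rho^2}{1-(1-c_\star\rho)^2}\le \frac{V\rho}{c_\star}.
\]
The last step uses $1-(1-c_\star\rho)^2=c_\star\rho(2-c_\star\rho)\ge c_\star\rho$. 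This geometric-sum computation is the key point: the discounting converts a per-step variance of order $\rho^2$ into a total variance of order $\rho$, uniformly in $t$. This is why the bound has no $\sqrt{T}$ dependence.

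Freedman's inequality, with variance bound $\sigma^2=V\rho/c_\star$ and increment bound $b=B\rho$, then gives
\[
\mathbb P\bigl(|S_t|\ge u\bigr)\le 2\exp\!\Bigl(-\frac{u^2}{2(\sigma^2+bu/3)}\Bigr).
\]
Set the right-hand side equal to $\delta/T$ and write $L:=\log(2T/\delta)$. Solving the resulting quadratic in $u$ yields
\[
u\le \sqrt{2\sigma^2L}+\tfrac{2}{3}bL\le C\bigl(\sqrt{\rho L}+\rho L\bigr),
\]
with $C$ depending only on $B,V,c_\star$. A union bound over $t=1,\dots,T$ then gives $\sup_t|S_t|\le C(\sqrt{\rho L}+\rho L)$ with probability at least $1-\delta$. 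Finally, $\log(2T/\delta)\le 2\log(T/\delta)$ whenever $T/\delta\ge 2$, which is automatic if $T\ge2$. The degenerate case can be absorbed by enlarging $C$, so the logarithm can be written as $\log(T/\delta)$, as in the statement.

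The only delicate point is that the $S_t$ for different $t$ are not a single martingale in $t$, because the weights change with $t$. A direct maximal inequality therefore does not apply. The union bound sidesteps this at the cost of the $\log T$ factor, which the statement already allows. Beyond this, the argument is routine once the variance bound above is in place.
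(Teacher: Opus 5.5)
Your proposal follows the paper's proof essentially step for step: fix $t$, view the weighted increments $(1-c_\star\rho)^{t-1-s}\Delta_{s+1}$ as a martingale difference sequence bounded by $B\rho$, bound the predictable variance by $V\rho/c_\star$ via the geometric sum, apply Freedman's inequality, and union bound over $t\le T$. One correction to your last remark: the case $T=1$ cannot be absorbed by enlarging $C$. Take $\Delta_1=\pm\min(B,\sqrt{V})\rho$ with equal probabilities and let $\delta\to 1$: the bound $C(\sqrt{\rho\log(1/\delta)}+\rho\log(1/\delta))$ tends to $0$ while $|S_1|$ stays at $\min(B,\sqrt{V})\rho$, so the statement as written fails there. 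Keeping $\log(2T/\delta)$, which is what the paper actually uses when it invokes the lemma, or assuming $T\ge 2$ repairs this, and the paper's own proof glosses over the same point.
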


\begin{proof}[Proof of Lemma \ref{lem:weighted_mds}]
Fix $t\ge 1$. Define
\[
\xi_{t,s+1}:=(1-c_\star\rho)^{t-1-s}\Delta_{s+1},
\qquad 0\le s\le t-1.
\]
Then $\{\xi_{t,s+1}\}_{s=0}^{t-1}$ is a martingale difference sequence with respect to
$\{\mathcal F_s\}$, and
\[
|\xi_{t,s+1}|
\le
(1-c_\star\rho)^{t-1-s}B\rho.
\]
Moreover,
\[
\sum_{s=0}^{t-1}\mathbb E[\xi_{t,s+1}^2\mid \mathcal F_s]
\le
V \rho^2 \sum_{s=0}^{t-1}(1-c_\star\rho)^{2(t-1-s)}
\le
V \rho^2 \sum_{j=0}^{\infty}(1-c_\star\rho)^{2j}.
\]
Since
\[
\sum_{j=0}^{\infty}(1-c_\star\rho)^{2j}
=
\frac{1}{1-(1-c_\star\rho)^2}
\le
\frac{1}{c_\star\rho},
\]
it follows that
\[
\sum_{s=0}^{t-1}\mathbb E[\xi_{t,s+1}^2\mid \mathcal F_s]
\le
\frac{V}{c_\star}\rho.
\]
Applying Freedman's inequality \citep{freedman1975tail} yields that, for any $u>0$,
\[
\mathbb P\!\left(
|S_t|\ge u
\right)
\le
2\exp\!\left(
-\frac{u^2}{2\left(\frac{V}{c_\star}\rho+\frac13 B\rho\,u\right)}
\right).
\]
Taking a union bound over $1\le t\le T$ and choosing
\[
u
=
C\Bigl(\sqrt{\rho\log\frac{T}{\delta}}+\rho\log\frac{T}{\delta}\Bigr).
\]
for a sufficiently large constant $C$ proves the claim.
\end{proof}

\section{Additional Theory}\label{app_prop}
\subsection{Gradient Analysis of the Population Training Objective}\label{app_prop_1}

The training procedure in Phase I follows a two-timescale scheme: the steering vector $\mathbf v$ is updated on a slow timescale, while the class mean directions are updated on a fast timescale. To make the role of this dynamics clear, we analyze the corresponding population objective.

For a fixed steering vector $\mathbf v$, we then define
\[
\mathcal{J}(\mathbf v)
:=
\max_{\boldsymbol{\mu}_0,\boldsymbol{\mu}_1 \in \mathbb{S}^{d-1}}\mathbb{E}_{x,y}\!\left[
\log p\!\left(
y \mid f_{\theta,\mathbf v}(x),
\boldsymbol{\mu}_0,
\boldsymbol{\mu}_1
\right)
\right]
=
\mathbb{E}_{x,y}\!\left[
\log p\!\left(
y \mid f_{\theta,\mathbf v}(x),
\boldsymbol{\mu}_0(\mathbf v),
\boldsymbol{\mu}_1(\mathbf v)
\right)
\right]
\]
where $\boldsymbol{\mu}_c(\mathbf v)$ denotes the population-optimal mean direction for class $c$ induced by the steered representation map $f_{\theta,\mathbf v}$. 

The following proposition characterizes how optimizing $\mathcal J(\mathbf v)$ reshapes the latent geometry.
\begin{proposition}\label{prop_1}
The gradient of $\mathcal{J}(\mathbf v)$ admits the decomposition
\[
\nabla_{\mathbf v}\mathcal J(\mathbf v)
=
\kappa\,\bar{\omega}(\mathbf v)\,\nabla_{\mathbf{v}} \|\boldsymbol{\mu}_1(\mathbf{v})-\boldsymbol{\mu}_0(\mathbf{v})\|^2
+
\frac{\kappa}{2}
\bigl(\boldsymbol{\mu}_1(\mathbf v)-\boldsymbol{\mu}_0(\mathbf v)\bigr)^\top
\bigl(
\Delta_1(\mathbf v)-\Delta_0(\mathbf v)
\bigr).
\]
Here, $\Delta_0(\mathbf v)$ and $\Delta_1(\mathbf v)$ are residual terms defined in~\eqref{residual_term} as
\begin{equation}\label{residual_term}
\Delta_c(\mathbf v):=
\mathbb E_x\!\left[
p(1-c\mid x)
\bigl(
\nabla_{\mathbf v} f_{\theta,\mathbf v}(x)
-
\nabla_{\mathbf v}\boldsymbol{\mu}_c(\mathbf v)
\bigr)
\mid y=c
\right],
\end{equation}
and $\bar{\omega}(\mathbf{v}) \geq 0$ is a data-dependent weight given by
\[
\bar{\omega}(\mathbf v) = \frac{1}{8}
\left(1-\mathbb E_x\!\left[
\tanh^2\!\left(\frac{\kappa}{2}
\bigl(\boldsymbol{\mu}_1(\mathbf v)-\boldsymbol{\mu}_0(\mathbf v)\bigr)^\top
f_{\theta,\mathbf v}(x)\right)\right]\right).
\]
\end{proposition}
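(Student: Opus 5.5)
The plan is a direct chain-rule computation on the two-class softmax in \eqref{eq:postvMF}, followed by regrouping the terms. Write $z=f_{\theta,\mathbf v}(x)$, $\mathbf w(\mathbf v):=\boldsymbol\mu_1(\mathbf v)-\boldsymbol\mu_0(\mathbf v)$ and $u:=\kappa\,\mathbf w^\top z$.

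\textbf{Step 1: logistic form and per-sample gradient.} With two classes, \eqref{eq:postvMF} is logistic:
\[
\log p(1\mid x)=\log\sigma(u),\qquad \log p(0\mid x)=\log\sigma(-u),
\]
where $\sigma$ is the sigmoid. Since $\frac{d}{du}\log\sigma(\pm u)=\pm\sigma(\mp u)$, the per-sample gradient for a sample with label $c$ is
\[
(2c-1)\,p(1-c\mid x)\,\nabla_{\mathbf v}u .
\]
Because $\mathcal J(\mathbf v)$ is evaluated at $\boldsymbol\mu_c(\mathbf v)$, I differentiate the composite map directly; I will not invoke an envelope argument, since the statement keeps the $\nabla_{\mathbf v}\boldsymbol\mu_c$ terms explicit. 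Dominated convergence (everything lies on the sphere) justifies exchanging $\nabla_{\mathbf v}$ with $\mathbb E_{x,y}$. Under the uniform prior this gives
\[
\nabla_{\mathbf v}\mathcal J=\tfrac12\sum_{c}(2c-1)\,\mathbb E\!\left[p(1-c\mid x)\,\nabla_{\mathbf v}u\mid y=c\right].
\]

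\textbf{Step 2: split $\nabla_{\mathbf v} f$ around $\nabla_{\mathbf v}\boldsymbol\mu_c$.} Inside class $c$, write
\[
\nabla_{\mathbf v}f_{\theta,\mathbf v}(x)=\nabla_{\mathbf v}\boldsymbol\mu_c(\mathbf v)+\bigl(\nabla_{\mathbf v}f_{\theta,\mathbf v}(x)-\nabla_{\mathbf v}\boldsymbol\mu_c(\mathbf v)\bigr).
\]
The second piece, contracted with $\kappa\mathbf w$ and weighted by $p(1-c\mid x)$, is exactly $\kappa\,\mathbf w^\top\Delta_c(\mathbf v)$ from \eqref{residual_term}. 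After the factor $\tfrac12$ and the sign $(2c-1)$, these pieces produce the residual term $\frac{\kappa}{2}\mathbf w^\top(\Delta_1-\Delta_0)$.

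\textbf{Step 3: reduce the remaining terms to $\nabla\|\mathbf w\|^2$.} What is left combines $\mathbf w^\top\nabla\boldsymbol\mu_c$ with the term coming from differentiating $\mathbf w$ itself, $(\nabla\mathbf w)^\top z$. Two sphere identities do the work. First, $\mathbf w^\top\boldsymbol\mu_1=1-\boldsymbol\mu_0^\top\boldsymbol\mu_1=\tfrac12\|\mathbf w\|^2$ and $\mathbf w^\top\boldsymbol\mu_0=-\tfrac12\|\mathbf w\|^2$. Second, $\boldsymbol\mu_c^\top\nabla\boldsymbol\mu_c=0$, because $\nabla\boldsymbol\mu_c$ is tangent to the sphere. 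Together these should turn the $\boldsymbol\mu_c$-parts of both classes into multiples of $\nabla_{\mathbf v}\|\mathbf w\|^2=2\mathbf w^\top(\nabla\boldsymbol\mu_1-\nabla\boldsymbol\mu_0)$. For the $(\nabla\mathbf w)^\top z$ term, I would replace $z$ by its class mean; by Assumption~\ref{asump1:true-model}, $\mathbb E[z\mid y=c]=A_d(\kappa)\boldsymbol\mu_c$. This step also needs the optimality of $\boldsymbol\mu_c(\mathbf v)$: the tangential component of the $\boldsymbol\mu$-gradient vanishes, which cancels the part of this term that would not fit the $\|\mathbf w\|^2$ form.

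\textbf{Step 4 (main obstacle): identifying $\bar\omega$.} The class weights from Step 1 are $\mathbb E[p(1-c\mid x)\mid y=c]$. The target weight uses
\[
1-\tanh^2(u/2)=4\,\sigma(u)\sigma(-u)=4\,p(0\mid x)\,p(1\mid x),
\]
so $\bar\omega=\tfrac12\,\mathbb E_x[p(0\mid x)p(1\mid x)]$. To pass from the conditional weights to this product form, I would use the likelihood ratio. Under the uniform prior, $\mathrm d\mathbb P(\cdot\mid y=c)=2\,p(c\mid x)\,\mathrm d\mathbb P(x)$. Hence
\[
\mathbb E[g(x)\,p(1-c\mid x)\mid y=c]=2\,\mathbb E_x\!\left[p(0\mid x)\,p(1\mid x)\,g(x)\right],
\]
which is symmetric in $c$. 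This is the step I would check most carefully, since it requires that the $\nabla\boldsymbol\mu_c$ factor be $x$-independent; it is. Collecting constants should then give $\kappa\bar\omega\,\nabla\|\mathbf w\|^2$, and $\bar\omega\ge0$ because $\tanh^2\le1$.
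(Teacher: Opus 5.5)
Steps 1, 2 and 4 are fine, but Step 3 mishandles the one delicate term. As planned, it would give the wrong coefficient.

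Because you differentiate the composite map, $\nabla_{\mathbf v}u$ contains the extra piece $\kappa(\nabla_{\mathbf v}\mathbf w)^\top z$. Since $(2y-1)\,p(1-y\mid x)=y-p(1\mid x)$, its total contribution is
\[
T:=\kappa\,(\nabla_{\mathbf v}\mathbf w)^\top\,\mathbb E_{x,y}\bigl[(y-p(1\mid x))\,z\bigr]
=\sum_{c\in\{0,1\}}\bigl(\nabla_{\mathbf v}\boldsymbol\mu_c(\mathbf v)\bigr)^\top\,\nabla_{\boldsymbol\mu_c}\mathbb E_{x,y}\bigl[\log p(y\mid z,\boldsymbol\mu_0,\boldsymbol\mu_1)\bigr],
\]
evaluated at $\boldsymbol\mu_c=\boldsymbol\mu_c(\mathbf v)$. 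This is exactly the chain-rule term through $\boldsymbol\mu_c(\mathbf v)$.

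For the stated identity to hold, $T$ must vanish entirely. The $\mathbf w^\top\nabla_{\mathbf v}\boldsymbol\mu_c$ pieces left over from Step 2, combined with the equal class weights $\mathbb E[p(1-c\mid x)\mid y=c]=2\,\mathbb E_x[p(0\mid x)p(1\mid x)]$ from Step 4, already give exactly
\[
\kappa\,\mathbb E_x[p(0\mid x)p(1\mid x)]\,\mathbf w^\top(\nabla_{\mathbf v}\boldsymbol\mu_1-\nabla_{\mathbf v}\boldsymbol\mu_0)=\kappa\bar\omega\,\nabla_{\mathbf v}\|\mathbf w\|^2 .
\]
No sphere identities such as $\mathbf w^\top\boldsymbol\mu_1=\tfrac12\|\mathbf w\|^2$ are needed for this.

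Your plan instead folds part of $T$ into the $\nabla\|\mathbf w\|^2$ term by replacing $z$ with its class mean $A_d(\kappa)\boldsymbol\mu_c$. That replacement is not legitimate, because $z$ is multiplied by $p(1-c\mid x)$, which itself depends on $z$. Even taken at face value, it adds $\kappa A_d(\kappa)\bar\omega\,\nabla_{\mathbf v}\|\mathbf w\|^2$, so you would end with coefficient $\kappa\bar\omega\,(1+A_d(\kappa))$ instead of $\kappa\bar\omega$.

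The missing idea is the envelope argument you chose to skip, and your reason for skipping it is a misreading. The $\nabla_{\mathbf v}\boldsymbol\mu_c$ terms in the statement do not come from the chain rule through $\boldsymbol\mu_c(\mathbf v)$. They are created artificially by the add-and-subtract that defines $\Delta_c$.

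The paper's proof disposes of $T$ first:
\begin{itemize}
\item At the maximizer on $\mathbb S^{d-1}$, $\nabla_{\boldsymbol\mu_c}\mathbb E_{x,y}[\log p]$ is normal to the sphere, i.e.\ parallel to $\boldsymbol\mu_c(\mathbf v)$.
\item Differentiating $\|\boldsymbol\mu_c(\mathbf v)\|^2=1$ gives $\boldsymbol\mu_c^\top\nabla_{\mathbf v}\boldsymbol\mu_c=0$.
\item Hence each summand of $T$ vanishes, and $\nabla_{\mathbf v}\mathcal J=\mathbb E_{x,y}[(\nabla_f\log p)^\top\nabla_{\mathbf v}f_{\theta,\mathbf v}(x)]$.
\end{itemize}
Only then does it split $\nabla_{\mathbf v}f_{\theta,\mathbf v}$ around $\nabla_{\mathbf v}\boldsymbol\mu_c$ and use the reweighting identity from your Step 4.

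You could also kill $T$ with your own tools. The reweighting identity is not restricted to $x$-independent factors. Applied with $g=z$, it gives $\mathbb E[p(0\mid x)\,z\mid y=1]=\mathbb E[p(1\mid x)\,z\mid y=0]$, so the two class contributions to $T$ cancel.

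This identity, in your proof and in the paper's, silently requires correct specification. The model posterior at $\boldsymbol\mu_c(\mathbf v)$ must coincide with the true posterior of $y$ given $z$, as under Assumption~\ref{asump1:true-model}. With Step 3 replaced by a proof that $T=0$, the rest of your argument goes through.
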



Proposition~\ref{prop_1} decomposes the gradient of the population objective into two components. 
The first term drives an increase in the separation between the class mean directions 
$\|\boldsymbol{\mu}_1(\mathbf v)-\boldsymbol{\mu}_0(\mathbf v)\|^2$, with a data-dependent weight 
that emphasizes samples near the decision boundary. 
The second term captures a residual effect arising from the mismatch between the sample-level 
response $\nabla_{\mathbf v} f_{\theta,\mathbf v}(x)$ and the induced movement of the class means. 

This decomposition clarifies the optimization dynamics: away from optimality, the first term 
dominates and pushes the representation toward greater class separation, while the residual term 
acts as a correction that depends on local variability. At a stationary point, these two effects 
balance each other, characterizing how the objective shapes the resulting latent geometry.
\begin{proof}[Proof of Proposition \ref{prop_1}]
Denote by
$$
\mathcal{L}_{\mathrm{vMF}}(\mathbf{v},\boldsymbol{\mu}_0,\boldsymbol{\mu}_1)
:=
\log p\!\left(y \mid f_{\theta,\mathbf{v}}(x), \boldsymbol{\mu}_0, \boldsymbol{\mu}_1\right).
$$
Then
$\mathcal J(\mathbf v)=\mathbb E_{x,y}\!\left[
\mathcal{L}_{\mathrm{vMF}}\bigl(\mathbf v,\boldsymbol{\mu}_0(\mathbf v),\boldsymbol{\mu}_1(\mathbf v)\bigr)\right]$. Differentiating with respect to $\mathbf v$ gives
\[
\nabla_{\mathbf v}\mathcal J(\mathbf v)
=
\mathbb E_{x,y}\!\left[
\nabla_f \mathcal{L}_{\mathrm{vMF}}^\top
\nabla_{\mathbf v} f_{\theta,\mathbf v}(x)
\right]
+
\sum_{c\in\{0,1\}}
\Bigl(\nabla_{\mathbf v}\boldsymbol{\mu}_c(\mathbf v)\Bigr)^\top
\nabla_{\boldsymbol{\mu}_c}\mathcal J(\mathbf v).
\]
Since $\boldsymbol{\mu}_c(\mathbf v)$ is the optimal class mean direction on $\mathbb S^{d-1}$, the first-order optimality condition implies that $\nabla_{\boldsymbol{\mu}_c}\mathcal J(\mathbf v)$ is normal to $\mathbb S^{d-1}$ at $\boldsymbol{\mu}_c(\mathbf v)$. On the other hand, $\|\boldsymbol{\mu}_c(\mathbf v)\|=1$ for all $\mathbf v$, and differentiating $\|\boldsymbol{\mu}_c(\mathbf v)\|^2=1$ yields $\boldsymbol{\mu}_c(\mathbf v)^\top \nabla_{\mathbf v}\boldsymbol{\mu}_c(\mathbf v)=0$, which shows that $\nabla_{\mathbf v}\boldsymbol{\mu}_c(\mathbf v)$ lies in the tangent space of $\mathbb S^{d-1}$ at $\boldsymbol{\mu}_c(\mathbf v)$. Consequently, $ \Bigl(\nabla_{\mathbf v}\boldsymbol{\mu}_c(\mathbf v)\Bigr)^\top
\nabla_{\boldsymbol{\mu}_c}\mathcal J(\mathbf v)=0$, which implies that
\begin{equation}\label{prop_1_pf_eq_1_new}
\nabla_{\mathbf v}\mathcal J(\mathbf v)
=
\mathbb E_{x,y}\!\left[
\left(\nabla_f \mathcal{L}_{\mathrm{vMF}}\right)^\top
\nabla_{\mathbf v} f_{\theta,\mathbf v}(x)
\right].
\end{equation}
For simplicity, write
$p(y\mid x):=p\!\left(y \mid f_{\theta,\mathbf{v}}(x), \boldsymbol{\mu}_0(\mathbf v), \boldsymbol{\mu}_1(\mathbf v)\right)$. Under the shared-$\kappa$ vMF model,
\[
p(y\mid x)=
\frac{\exp\{\kappa \boldsymbol{\mu}_y(\mathbf v)^\top f_{\theta,\mathbf{v}}(x)\}}
{\exp\{\kappa \boldsymbol{\mu}_0(\mathbf v)^\top f_{\theta,\mathbf{v}}(x)\}+\exp\{\kappa \boldsymbol{\mu}_1(\mathbf v)^\top f_{\theta,\mathbf{v}}(x)\}}.
\]
A direct calculation gives
\[
\nabla_f \mathcal{L}_{\mathrm{vMF}}
=
\kappa\Bigl(
\boldsymbol{\mu}_y(\mathbf v)
-
p(0\mid x)\boldsymbol{\mu}_0(\mathbf v)
-
p(1\mid x)\boldsymbol{\mu}_1(\mathbf v)
\Bigr),
\]
which can be rewritten as $\nabla_f \mathcal{L}_{\mathrm{vMF}}
=-\kappa\bigl(y-p(1\mid x)\bigr)
\bigl(\boldsymbol{\mu}_1(\mathbf v)-\boldsymbol{\mu}_0(\mathbf v)\bigr)$. Substituting this into \eqref{prop_1_pf_eq_1_new} and conditioning on $y$ yields
\begin{equation}\label{eq:prop_pf_1_new}
\begin{aligned}
\nabla_{\mathbf v}\mathcal J(\mathbf v)
&=
\frac{\kappa}{2}
\bigl(\boldsymbol{\mu}_1(\mathbf v)-\boldsymbol{\mu}_0(\mathbf v)\bigr)^\top
\Bigl(
\mathbb E_x\!\left[p(0\mid x)\nabla_{\mathbf v} f_{\theta,\mathbf v}(x)\mid y=1\right]
-
\mathbb E_x\!\left[p(1\mid x)\nabla_{\mathbf v} f_{\theta,\mathbf v}(x)\mid y=0\right]
\Bigr),
\end{aligned}
\end{equation}
where we used $p(y=0)=p(y=1)=1/2$. For each $c\in\{0,1\}$, introduce the residual term
\begin{equation}
\tag{\ref{residual_term}}
\Delta_c(\mathbf v):=
\mathbb E_x\!\left[
p(1-c\mid x)
\bigl(
\nabla_{\mathbf v} f_{\theta,\mathbf v}(x)
-
\nabla_{\mathbf v}\boldsymbol{\mu}_c(\mathbf v)
\bigr)
\mid y=c
\right].
\end{equation}
Then, we obtain
$$
\mathbb E_x\!\left[p(0\mid x)\nabla_{\mathbf v} f_{\theta,\mathbf v}(x)\mid y=1\right]
=
\mathbb E_x[p(0\mid x)\mid y=1]\nabla_{\mathbf v}\boldsymbol{\mu}_1(\mathbf v)
+
\Delta_1(\mathbf v),
$$
and
$$
\mathbb E_x\!\left[p(1\mid x)\nabla_{\mathbf v} f_{\theta,\mathbf v}(x)\mid y=0\right]
=
\mathbb E_x[p(1\mid x)\mid y=0]\nabla_{\mathbf v}\boldsymbol{\mu}_0(\mathbf v)
+
\Delta_0(\mathbf v).
$$
Substituting these identities into \eqref{eq:prop_pf_1_new} gives
\begin{equation}\label{eq:prop_pf_2_new}
\begin{aligned}
\nabla_{\mathbf v}\mathcal J(\mathbf v)
&=
\frac{\kappa}{2}
\bigl(\boldsymbol{\mu}_1(\mathbf v)-\boldsymbol{\mu}_0(\mathbf v)\bigr)^\top
\Bigl(
\mathbb E_x[p(0\mid x)\mid y=1]\nabla_{\mathbf v}\boldsymbol{\mu}_1(\mathbf v)
-
\mathbb E_x[p(1\mid x)\mid y=0]\nabla_{\mathbf v}\boldsymbol{\mu}_0(\mathbf v)
\Bigr)
\\
&\quad+
\frac{\kappa}{2}
\bigl(\boldsymbol{\mu}_1(\mathbf v)-\boldsymbol{\mu}_0(\mathbf v)\bigr)^\top
\bigl(
\Delta_1(\mathbf v)-\Delta_0(\mathbf v)
\bigr).
\end{aligned}
\end{equation}

Under the uniform class prior, we have
\begin{align*}
\mathbb E_x[p(0\mid x)\mid y=1]
&=\int p(0\mid x)p(x\mid y=1)\,dx\\
&=2\int p(0\mid x)p(1\mid x)p(x)\,dx\\
&=2 \mathbb E_x[p(0\mid x)p(1\mid x)].
\end{align*}
Similarly, $\mathbb E_x[p(1\mid x)\mid y=0]
=
2\mathbb E_x[p(0\mid x)p(1\mid x)]$. Using these identities in \eqref{eq:prop_pf_2_new}, we obtain
\[
\begin{aligned}
\nabla_{\mathbf v}\mathcal J(\mathbf v)
&=
\kappa\,
\mathbb E_x[p(0\mid x)p(1\mid x)]
\bigl(\boldsymbol{\mu}_1(\mathbf v)-\boldsymbol{\mu}_0(\mathbf v)\bigr)^\top
\bigl(
\nabla_{\mathbf v}\boldsymbol{\mu}_1(\mathbf v)
-
\nabla_{\mathbf v}\boldsymbol{\mu}_0(\mathbf v)
\bigr)
\\
&\quad+
\frac{\kappa}{2}
\bigl(\boldsymbol{\mu}_1(\mathbf v)-\boldsymbol{\mu}_0(\mathbf v)\bigr)^\top
\bigl(
\Delta_1(\mathbf v)-\Delta_0(\mathbf v)
\bigr).
\end{aligned}
\]
By using $
\nabla_{\mathbf v}\|\boldsymbol{\mu}_1(\mathbf v)-\boldsymbol{\mu}_0(\mathbf v)\|^2
=
2\bigl(\boldsymbol{\mu}_1(\mathbf v)-\boldsymbol{\mu}_0(\mathbf v)\bigr)^\top
\bigl(
\nabla_{\mathbf v}\boldsymbol{\mu}_1(\mathbf v)
-
\nabla_{\mathbf v}\boldsymbol{\mu}_0(\mathbf v)
\bigr)$, we obtain
$$
\nabla_{\mathbf v}\mathcal J(\mathbf v)
=
\frac{\kappa}{2}\,
\mathbb E_x[p(0\mid x)p(1\mid x)]\,
\nabla_{\mathbf v}\|\boldsymbol{\mu}_1(\mathbf v)-\boldsymbol{\mu}_0(\mathbf v)\|^2
+
\frac{\kappa}{2}
\bigl(\boldsymbol{\mu}_1(\mathbf v)-\boldsymbol{\mu}_0(\mathbf v)\bigr)^\top
\bigl(
\Delta_1(\mathbf v)-\Delta_0(\mathbf v)
\bigr).
$$
Finally, using the logistic form of the posterior under the shared-$\kappa$ vMF model,
\begin{equation}
\label{eq:p_1_p_0_new}
\begin{aligned}
\frac{1}{2}\mathbb E_x[p(0\mid x)p(1\mid x)]
&=
\mathbb E_x\!\left[
\frac{
\exp\!\left\{
\kappa\bigl(\boldsymbol{\mu}_0(\mathbf v)+\boldsymbol{\mu}_1(\mathbf v)\bigr)^\top f_{\theta,\mathbf v}(x)
\right\}
}{
2\left(
\sum_{c\in\{0,1\}}
\exp\!\left\{
\kappa \boldsymbol{\mu}_c(\mathbf v)^\top f_{\theta,\mathbf v}(x)
\right\}
\right)^2
}
\right] \\
&=\frac{1}{8}
\left(1-\mathbb E_x\!\left[
\tanh^2\left(\frac{\kappa}{2}
\bigl(\boldsymbol{\mu}_1(\mathbf v)-\boldsymbol{\mu}_0(\mathbf v)\bigr)^\top
f_{\theta,\mathbf v}(x)\right)\right]\right),
\end{aligned}
\end{equation}
which yields
$$
\nabla_{\mathbf v}\mathcal J(\mathbf v)
=
\kappa\,\bar{\omega}(\mathbf v)\,\nabla_{\mathbf v}\|\boldsymbol{\mu}_1(\mathbf v)-\boldsymbol{\mu}_0(\mathbf v)\|^2
+
\frac{\kappa}{2}
\bigl(\boldsymbol{\mu}_1(\mathbf v)-\boldsymbol{\mu}_0(\mathbf v)\bigr)^\top
\bigl(
\Delta_1(\mathbf v)-\Delta_0(\mathbf v)
\bigr).
$$
This decomposition shows that the population gradient combines a global separation term and a residual alignment term, which together promote improved class separability.
\end{proof}

\subsection{Distribution Shift on Two Types of Errors}\label{app_prop_2}
In practice, the deployment distribution may differ from the source distribution used for training and calibration. 
We consider this mismatch as a distribution shift between the source and deployment distributions. 
In this section, we quantify its impact via a Wasserstein perturbation and study how it affects Type I error control.

\begin{assumption}\label{asm_4}
We denote by $\mathbb{P}_c$ the source distribution of class $c$ used for training and calibration, and by $\widetilde{\mathbb{P}}_c$ the corresponding class-conditional distribution under deployment.
Assume there exists a constant $\mathcal{E} > 0$ such that, for a fixed representation map $f_{\theta,\mathbf{v}} : \mathcal{X} \to \mathbb{S}^{d-1}$, the following holds for each class $c \in \{0,1\}$:
\[
\mathcal{W}_1\!\left( (f_{\theta,\mathbf{v}})_\# \mathbb{P}_c,\; (f_{\theta,\mathbf{v}})_\# \widetilde{\mathbb{P}}_c \right) \le \mathcal{E},
\]
where $\mathcal{W}_1(\mu, \nu) := \inf_{\gamma \in \Pi(\mu, \nu)} \mathbb{E}_{(z, \tilde{z}) \sim \gamma} [\|z - \tilde{z}\|]$ denotes the $1$-Wasserstein distance on $\mathbb{S}^{d-1}$, and $(f_{\theta,\mathbf{v}})_\# \mathbb{P}_c$ denotes the pushforward measure of $\mathbb{P}_c$ under $f_{\theta,\mathbf{v}}$.
\end{assumption}

\begin{proposition}\label{prop_3}
Suppose Assumption~\ref{asm_4} holds. For any $\delta \in (0, 1)$, with probability at least $1-\delta$ over the randomness of training and calibration samples, the Type-I error under the shifted distribution $\widetilde{\mathbb{P}}_0$ satisfies
$$
\begin{aligned}
\widetilde{\mathbb{P}}_0\big(\widehat{\mathcal{S}}_t(X)\ge \widehat{\tau}_{\alpha, t}\big) -\alpha \le \inf _{\varepsilon>0}\left(\sqrt{\frac{\log(2/\delta)} {2n_2}} + \frac{1}{n_2} +\mathbb{P}_0\left(\widehat{\tau}_{\alpha, t}-\varepsilon \leq \widehat{\mathcal{S}}_t(X)<\widehat{\tau}_{\alpha, t}\right)+\frac{2 \kappa \mathcal{E}}{\varepsilon}\right) \wedge 1.
\end{aligned}
$$
The Type-II error under the shifted distribution $\widetilde{\mathbb{P}}_1$ satisfies
$$
\widetilde{\mathbb{P}}_1\big( \widehat{\mathcal{S}}_t(X) < \widehat{\tau}_{\alpha,t}   \big) \le  \inf_{\varepsilon>0}\left(\mathbb{P}_1\big(\widehat{\mathcal{S}}_t(x) < \widehat{\tau}_{\alpha,t} + \varepsilon\big) + \frac{2\kappa \mathcal{E}}{\varepsilon} \right) \wedge 1.
$$
\end{proposition}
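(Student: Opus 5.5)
The plan is to convert the Wasserstein shift on representations into a shift of the score distribution, using that the score is Lipschitz in the representation. Fix the training outcome, so that $\widehat{\mathcal S}_t(x)=g(f_{\theta,\mathbf v_t}(x))$ with $g(z)=\kappa(\widehat{\boldsymbol\mu}_{1,t}-\widehat{\boldsymbol\mu}_{0,t})^\top z$. Since $\|\widehat{\boldsymbol\mu}_{1,t}-\widehat{\boldsymbol\mu}_{0,t}\|\le 2$, the map $g$ is $2\kappa$-Lipschitz on $\mathbb S^{d-1}$. The threshold $\widehat\tau_{\alpha,t}$ depends only on the source calibration sample, so the randomness enters only through the event of Theorem~\ref{thm_1}. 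On that event, which has probability at least $1-\delta$,
\[
\mathbb P_0\bigl(\widehat{\mathcal S}_t(X)\ge\widehat\tau_{\alpha,t}\bigr)\le \alpha+\sqrt{\log(2/\delta)/(2n_2)}+1/n_2 .
\]

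\textbf{Coupling step.} For each class $c$, take an optimal (or $\epsilon'$-optimal, then let $\epsilon'\to0$) coupling $\gamma_c$ of $(f_{\theta,\mathbf v})_\#\mathbb P_c$ and $(f_{\theta,\mathbf v})_\#\widetilde{\mathbb P}_c$. Let $(Z,\tilde Z)\sim\gamma_c$, and write $S=g(Z)$, $\tilde S=g(\tilde Z)$. By the Lipschitz property, $\mathbb E|S-\tilde S|\le 2\kappa\,\mathbb E\|Z-\tilde Z\|\le 2\kappa\mathcal E$. Markov's inequality then gives $\Pr(|S-\tilde S|\ge\varepsilon)\le 2\kappa\mathcal E/\varepsilon$ for every $\varepsilon>0$.

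\textbf{Type-I bound.} Write $\tau=\widehat\tau_{\alpha,t}$. Split the event $\{\tilde S\ge\tau\}$ according to whether $|S-\tilde S|<\varepsilon$:
\[
\{\tilde S\ge\tau\}\subseteq\{S\ge\tau-\varepsilon\}\cup\{|S-\tilde S|\ge\varepsilon\}.
\]
This gives
\[
\widetilde{\mathbb P}_0(\widehat{\mathcal S}_t\ge\tau)\le \mathbb P_0(S\ge\tau)+\mathbb P_0(\tau-\varepsilon\le S<\tau)+2\kappa\mathcal E/\varepsilon.
\]
Substitute the Theorem~\ref{thm_1} bound for $\mathbb P_0(S\ge\tau)$ and subtract $\alpha$. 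Take the infimum over $\varepsilon>0$, and cap at $1$ using the trivial bound; capping at $1$ is harmless since the left side is at most $1-\alpha\le1$.

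\textbf{Type-II bound.} Apply the same argument to class $1$:
\[
\{\tilde S<\tau\}\subseteq\{S<\tau+\varepsilon\}\cup\{|S-\tilde S|\ge\varepsilon\},
\]
which yields $\mathbb P_1(S<\tau+\varepsilon)+2\kappa\mathcal E/\varepsilon$. This bound holds deterministically given $\tau$, so it needs no extra probability budget. Take the infimum over $\varepsilon$ and cap at $1$.

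\textbf{Main obstacle.} The subtle point is the conditioning. Assumption~\ref{asm_4} is stated for a fixed map $f_{\theta,\mathbf v}$, while $\mathbf v_t$ and $\widehat{\boldsymbol\mu}$ are data-dependent. I would handle this by conditioning on $\mathcal S_{\mathrm{train}}$, reading the assumption as holding for the realized $\mathbf v_t$, and noting that the Lipschitz constant $2\kappa$ is uniform over all possible prototypes. The only probabilistic input is then Theorem~\ref{thm_1}, applied conditionally exactly as in its proof.
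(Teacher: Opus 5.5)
Your proposal is correct and follows essentially the same route as the paper's proof. Both use the $2\kappa$-Lipschitz dependence of $\widehat{\mathcal S}_t$ on the representation, an optimal coupling of the class-conditional representation laws, Markov's inequality to produce the $2\kappa\mathcal E/\varepsilon$ term, and Theorem~\ref{thm_1} for the source Type-I probability, with the Type-II bound holding deterministically given the threshold. Coupling directly on $\mathbb S^{d-1}$ and applying Markov to $|S-\tilde S|$ is a slightly cleaner version of the paper's coupling on $\mathcal X$, and your remark about applying Assumption~\ref{asm_4} to the realized, data-dependent $\mathbf v_t$ makes explicit a conditioning step the paper leaves implicit.
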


Proposition~\ref{prop_3} characterizes how the Type-I and Type-II errors depend on the magnitude of the distribution shift $\mathcal{E}$. 
When $\mathcal{E}=0$, the bounds recover the in-distribution guarantees in Theorem~\ref{thm_1}. 
When $\mathcal{E}>0$, both errors incur an additional penalty that scales with $\mathcal{E}$, reflecting performance degradation under shift. 
The parameter $\varepsilon$ captures a trade-off between the local probability mass near the threshold and the shift-induced error. 
Overall, the result indicates that the proposed method is robust to moderate shifts, while its performance degrades gracefully as the distribution shift increases.

\begin{proof}[Proof of Proposition \ref{prop_3}]
For simplicity of notation, we drop the time index $t$ in the proof. Let $f_{\theta, \mathbf{v}}(x) \in \mathbb{S}^{d-1}$ denote the representation for $x \in \mathcal{X}$. The deployed scoring function is defined as $\widehat{\mathcal{S}}(x)=\kappa\left(\widehat{\boldsymbol{\mu}}_1-\widehat{\boldsymbol{\mu}}_0\right)^{\top} f_{\theta, \mathbf{v}}(x)$.
Since the estimated mean directions $\widehat{\boldsymbol{\mu}}_1, \widehat{\boldsymbol{\mu}}_0$ lie on the unit hypersphere $\mathbb{S}^{d-1}$, by the triangle inequality, $\left\|\widehat{\boldsymbol{\mu}}_1-\widehat{\boldsymbol{\mu}}_0\right\| \leq\left\|\widehat{\boldsymbol{\mu}}_1\right\|+\left\|\widehat{\boldsymbol{\mu}}_0\right\| \leq 2$.
The gradient of $\widehat{\mathcal{S}}$ with respect to the representation $f$ is given by
\begin{equation}\label{prop_3_eq_1}
    \|\nabla_f \widehat{\mathcal{S}}(x)\| = \kappa \|\widehat{\boldsymbol{\mu}}_1 - \widehat{\boldsymbol{\mu}}_0\| \le 2\kappa.
\end{equation}
Let $\pi \in \Pi(\mathbb{P}_0, \widetilde{\mathbb{P}}_0)$ be the optimal coupling such that 
$$
\mathbb{E}_{(X, \widetilde{X})\sim \pi}[\|f_{\theta,\mathbf{v}}(X) - f_{\theta,\mathbf{v}}(\widetilde{X})\|] = \mathcal{W}_1\left( (f_{\theta,\mathbf{v}})_\# \mathbb{P}_0, (f_{\theta,\mathbf{v}})_\# \widetilde{\mathbb{P}}_0 \right) \le \mathcal{E}.
$$ 
For any $\varepsilon > 0$, we decompose the Type-I error probability under the shifted distribution as
\begin{equation}\label{prop_3_eq_2}
\begin{aligned}
\widetilde{\mathbb{P}}_0\big(\widehat{\mathcal{S}}(\widetilde{X}) \ge \widehat{\tau}_\alpha\big)
&\le \mathbb{P}_\pi\Big(\widehat{\mathcal{S}}(\widetilde{X}) \ge \widehat{\tau}_\alpha, \|f_{\theta,\mathbf{v}}(X) - f_{\theta,\mathbf{v}}(\widetilde{X})\| \le \frac{\varepsilon}{2\kappa}\Big) \\
& \quad + \mathbb{P}_\pi\Big(\|f_{\theta,\mathbf{v}}(X) - f_{\theta,\mathbf{v}}(\widetilde{X})\| > \frac{\varepsilon}{2\kappa}\Big).
\end{aligned}
\end{equation}
For the first term on the RHS of~\eqref{prop_3_eq_2}, by using~\eqref{prop_3_eq_1}, we obtain
$$
|\widehat{\mathcal{S}}(X) - \widehat{\mathcal{S}}(\widetilde{X})| \leq 2\kappa \|f_{\theta,\mathbf{v}}(X) -f_{\theta,\mathbf{v}}(\widetilde{X})\| \leq \varepsilon.
$$
Thus, $\widehat{\mathcal{S}}(\widetilde{X}) \ge \widehat{\tau}_\alpha$ implies $\widehat{\mathcal{S}}(X) \ge \widehat{\tau}_\alpha - \varepsilon$. By Theorem \ref{thm_1}, we know that with probability at least $1-\delta$, over the selection of the calibration sample $\mathcal{S}_{\text{cal}}$, the empirical threshold $\widehat{\tau}_\alpha$ satisfies $\mathbb{P}_0\left(\widehat{\mathcal{S}}(X) \geq \widehat{\tau}_\alpha\right) \leq \alpha + \sqrt{\log(2/\delta) / (2n_2)} + 1 / n_2$. Conditioned on this high-probability event, we have
\begin{equation}\label{prop_3_eq_3}
\begin{aligned}
    \mathbb{P}\Big(\widehat{\mathcal{S}}(\widetilde{X}) \ge \widehat{\tau}_\alpha, \|f_{\theta,\mathbf{v}}(X) -f_{\theta,\mathbf{v}}(\widetilde{X})\| \le \frac{\varepsilon}{2\kappa}\Big)
    &\leq \mathbb{P}_0\big(\widehat{\mathcal{S}}(X) \ge \widehat{\tau}_\alpha\big) + \mathbb{P}_0\big(\widehat{\tau}_\alpha - \varepsilon \le \widehat{\mathcal{S}}(X) < \widehat{\tau}_\alpha\big).
\end{aligned}
\end{equation}
For the second term on the RHS of~\eqref{prop_3_eq_2}, applying Markov's inequality yields
\begin{equation*}
    \mathbb{P}_\pi\Big(\|f_{\theta,\mathbf{v}}(X) -f_{\theta,\mathbf{v}}(\widetilde{X})\| > \frac{\varepsilon}{2\kappa}\Big) \le \frac{2\kappa}{\varepsilon} \mathbb{E}_\pi\big[\|f_{\theta,\mathbf{v}}(X) -f_{\theta,\mathbf{v}}(\widetilde{X})\|\big] \le \frac{2\kappa \mathcal{E}}{\varepsilon}.
\end{equation*}
Substituting~\eqref{prop_3_eq_3} into~\eqref{prop_3_eq_2}, we obtain, for any $\varepsilon > 0$, that
\begin{equation*}
\widetilde{\mathbb{P}}_0\big(\widehat{\mathcal{S}}(\widetilde{X}) \ge \widehat{\tau}_\alpha\big) \le\inf _{\varepsilon>0}\left( \alpha + \sqrt{\frac{\log(2/\delta)}{2n_2}} + \frac{1}{n_2} +\mathbb{P}_0\left(\widehat{\tau}_{\alpha}-\varepsilon \leq \widehat{\mathcal{S}}(X)<\widehat{\tau}_{\alpha}\right)+\frac{2 \kappa \mathcal{E}}{\varepsilon}\right) \wedge 1.
\end{equation*}

Now we turn to the Type-II error. Let $\pi^{\prime} \in \Pi\left(\mathbb{P}_1, \widetilde{\mathbb{P}}_1\right)$ be the optimal coupling such that $\mathbb{E}_{(X, X') \sim \pi^{\prime}}\left[\|f_{\theta,\mathbf{v}}(X)-f_{\theta,\mathbf{v}}(\widetilde{X})\|\right] \leq \mathcal{E}$. For any $\varepsilon>0$, we have
\begin{equation*}
\begin{aligned}
\widetilde{\mathbb{P}}_1\big(\widehat{\mathcal{S}}(\widetilde{X}) \ge \widehat{\tau}_\alpha\big) &\ge \mathbb{P}_{\pi'}\Big(\widehat{\mathcal{S}}(\widetilde{X}) \ge \widehat{\tau}_\alpha, \|f_{\theta,\mathbf{v}}(X) -f_{\theta,\mathbf{v}}(\widetilde{X})\| \le \frac{\varepsilon}{2\kappa}\Big) \\
    &\ge \mathbb{P}_{\pi'}\Big(\widehat{\mathcal{S}}(X) \ge \widehat{\tau}_\alpha + \varepsilon, \|f_{\theta,\mathbf{v}}(X) -f_{\theta,\mathbf{v}}(\widetilde{X})\| \le \frac{\varepsilon}{2\kappa}\Big) \\
    &= \mathbb{P}_1\big(\widehat{\mathcal{S}}(X) \ge \widehat{\tau}_\alpha + \varepsilon\big) - \mathbb{P}_{\pi'}\Big(\widehat{\mathcal{S}}(X) \ge \widehat{\tau}_\alpha + \varepsilon, \|f_{\theta,\mathbf{v}}(X) -f_{\theta,\mathbf{v}}(\widetilde{X})\| > \frac{\varepsilon}{2\kappa}\Big) \\
    &\ge \mathbb{P}_1\big(\widehat{\mathcal{S}}(X) \ge \widehat{\tau}_\alpha + \varepsilon\big) - \mathbb{P}_{\pi'}\Big(\|f_{\theta,\mathbf{v}}(X) -f_{\theta,\mathbf{v}}(\widetilde{X})\| > \frac{\varepsilon}{2\kappa}\Big).
\end{aligned}
\end{equation*}
Finally, applying Markov's inequality yields the desired lower bound
\begin{equation*}
    \widetilde{\mathbb{P}}_1\big(\widehat{\mathcal{S}}(\widetilde{X}) \ge \widehat{\tau}_\alpha\big) \ge \mathbb{P}_1\big(\widehat{\mathcal{S}}(X) \ge \widehat{\tau}_\alpha + \varepsilon\big) - \frac{2\kappa \mathcal{E}}{\varepsilon},
\end{equation*}
which equivalently implies that the Type-II error is bounded by
$$
\widetilde{\mathbb{P}}_1\big(\widehat{\mathcal{S}}(\widetilde{X}) \le \widehat{\tau}_\alpha\big) = 1 - \widetilde{\mathbb{P}}_1\big(\widehat{\mathcal{S}}(\widetilde{X}) \ge \widehat{\tau}_\alpha\big) \le  \mathbb{P}_1\big(\widehat{\mathcal{S}}(X) < \widehat{\tau}_\alpha + \varepsilon\big) + \frac{2\kappa \mathcal{E}}{\varepsilon}.
$$
Taking the infimum over $\varepsilon > 0$ completes the proof.
\end{proof}

\section{Broader Impacts}\label{app:broader_imp}
The proposed \texttt{S2D} framework contributes to improving the reliability of distinguishing LLM-generated text from human-written text, which has implications for mitigating risks such as misinformation, academic misconduct, and erosion of trust in digital content. By leveraging hidden representations and providing a statistically grounded detection procedure with controlled Type-I error, our method is particularly relevant in high-stakes scenarios where false accusations (e.g., misclassifying human-written text as machine-generated) carry ethical consequences. In addition, the representation-based nature of \texttt{S2D} allows it to operate without requiring access to the generation process, making it broadly applicable in real-world settings where watermarking is unavailable.

More broadly, such capabilities may influence how content is created and consumed, potentially shaping emerging norms around disclosure and authenticity. They may also inform policy discussions on transparency and responsible AI use, including whether and how generated content should be labeled. Overall, advances in this area contribute to a broader effort to maintain trust, accountability, and informed decision-making in an increasingly AI-mediated information landscape.

\end{document}